	\newtheorem{thr}{Theorem} 
	\newtheorem{lmm}{Lemma}
	\newtheorem{cor}{Corollary}
	\newtheorem{prp}{Proposition}
  \newcommand{\lem}{\stackrel{\ast}{<}}
	\newcommand{\gem}{\stackrel{\ast}{>}}
	\newcommand{\eqm}{\stackrel{\ast}{=}}
	\newtheorem{dff}{Definition}
	\newtheorem{asm}{Assumption}
	\newtheorem{rmk}{Remark}
	\newtheorem{clm}{Claim}
	\newcommand{\lea}{<^+}
	\newcommand{\gea}{>^+}
	\newcommand{\eqa}{=^+}
	\newcommand{\lel}{<^{\log}}
	\newcommand{\gel}{>^{\log}}
    \newcommand\M{\mathbf{M}}
    \newcommand{\Km}{\mathbf{Km}}
    \newcommand{\bb}{\mathbf{bb}}
    \newcommand{\p}{\mathbf{p}}
    \newcommand{\q}{\mathbf{q}}
    \renewcommand{\r}{\mathbf{r}}
     \newcommand\Q{\mathbb{Q}}
  \newcommand\N{\mathbb{N}}
  \newcommand\R{\mathbb{R}}
  \renewcommand\d{\mathbf{d}}
 \newcommand{\K}{\mathbf{K}}
 \newcommand{\m}{\mathbf{m}}
 \newcommand{\Ks}{\mathbf{Ks}}
 \newcommand{\I}{\mathbf{I}}
 \newcommand\BT{\{0,1\}}
 	\newcommand\FS{\BT^*}\newcommand\IS{\BT^\infty}
 \newcommand\ceil[1]{{\lceil#1\rceil}}
 \newcommand\floor[1]{{\lfloor#1\rfloor}}
\title{Game Derandomization}
\author{Sam Epstein\footnote{samepst@jptheorygroup.org}}
\date{\today}
\begin{document}
\maketitle
\begin{abstract}
Using Kolmogorov Game Derandomization, upper bounds of the Kolmogorov complexity of deterministic winning players against deterministic environments can be proved. This paper gives improved upper bounds of the Kolmogorov complexity of such players. This paper also generalizes this result to probabilistic games. This applies to computable, lower computable, and uncomputable environments. We characterize the classic even-odds game and then generalize these results to time bounded players and also to all zero-sum repeated games.  We characterize partial game derandomization. But first, we start with an illustrative example of game derandomization, taking place on the island of Crete.
\end{abstract}
\section{The Minatour and the Labyrinth}
 A hero is trapped in a labyrinth, which consists of long corridors connecting to small rooms. The intent of the the hero is to reach the goal room, which has a ladder in its center reaching the outside. The downside is the hero is blindfolded. The upside is there is a minotaur present to guide the hero.

At every room, the minotaur tells the hero the number of corridors $n$ leading out (including the one which the hero just came from). The hero states a number between 1 and $n$ and the minotaur takes the hero to corresponding door. However the hero faces another obstacle, in that the minotaur is trying to trick him. This means the mapping the minotaur uses is a function of all the hero's past actions. Thus if a hero returns to the same room, he may be facing a different mapping than before. This process continues for a very large number of turns. The question is how much information is needed by the hero to find the exit? Using \textbf{Kolmogorov Game Derandomization}, we get the following surprising good news for the hero. Let $c$ be the number of corridors and $d$ be the number of doors in the goal room.

\begin{center}
\textit{The hero can find the exit using $\log (c/d)+\epsilon$ bits.}
\end{center}

The error term $\epsilon$ is logarithmic and also is dependent on the information the halting sequence has about the entire construct, which is negligible except in exotic cases. Assuming the \textbf{Independence Postulate} \cite{Levin84,Levin13}, one cannot find such exotic constructs in the physical world.

The reasoning for this is as follows. Take a random hero who chooses a corridor with uniform probability. Then the hero is performing a random walk on the graph (of the labyrinth). Assuming the number of turns is greater than the graph's mixing time, the probability the hero is at exit at the end is not less than $d/bc$, for some fixed constant $b$. Then the following theorem can be applied. $\K(x)$ is the prefix Kolmogorov complexity of $x$. $\I(x;\mathcal{H})=\K(x)-\K(x|\mathcal{H})$ is the amount of information  the halting sequence $\mathcal{H}$ has about $x$. Note that in Section \ref{sec:nb}, the bounds of the following theorem are improved.

\begin{thr}[\cite{Epstein23}]
\label{thr:detenv}
If probabilistic agent $\mathbf{p}$ wins against environment $\mathbf{q}$ with at least probability $p$, then there is a deterministic agent of Kolmogorov complexity $\lel\K(\mathbf{p}) -\log p + \I(\langle \mathbf{p},\mathbf{q}\rangle;\mathcal{H})$ that wins against $\mathbf{q}$.
\end{thr}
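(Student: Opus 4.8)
\medskip
\noindent\emph{Proof plan.} The idea is to regard $\p$ as a mixture of deterministic agents, observe that this mixture puts weight at least $p$ on the deterministic winners against $\q$, and then extract a single low-complexity winner by an algorithmic-probability (``EL-theorem'') argument, paying only $\I(\langle\p,\q\rangle;\mathcal H)$ for the part of the description that genuinely depends on $\q$.

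First I would set up the mixture. Fixing the coins that $\p$ consults to a finite seed yields a deterministic agent, so $\p$ induces a semimeasure $\nu$ on deterministic agents; writing $\p^*$ for a shortest program of $\p$, this $\nu$ is $\p^*$-computable (lower-semicomputable suffices), and for every deterministic agent $a$ one has the trivial subadditivity bound $\K(a)\lea\K(\p)+\K(a\mid\p^*)$. Let $D$ be the set of deterministic agents that win against $\q$. Whether a given deterministic agent wins is, in the cases of interest (finite-horizon games, or any game whose winning condition is enumerable), an event enumerable relative to $\q$, so $D$ is enumerable relative to $\q$; and by hypothesis $\nu(D)\ge p$. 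It therefore suffices to produce $a\in D$ with $\K(a\mid\p^*)\lel-\log p+\I(\langle\p,\q\rangle;\mathcal H)$.

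The extraction is the heart of the matter. Relative to $\p^*$ the set $D$ is enumerable from $\q$ alone, and, $\nu$ being $\p^*$-computable, the coding theorem gives $\m(a\mid\p^*)\gem\nu(a)$ for every $a$, hence $\m(D\mid\p^*)\gem p$. Were $D$ small and ``ordinarily'' described, its $\m(\cdot\mid\p^*)$-heaviest member would already have complexity $\lel-\log p$ relative to $\p^*$; in general $D$ may be spread out and complicated, and the naive argument would then pay an extra $\K(\q\mid\p^*)$ merely to locate $D$. The Epstein--Levin theorem on simple members of large sets is exactly what replaces that $\K(\q\mid\p^*)$ by an information term: it yields $a\in D$ with $\K(a\mid\p^*)\lel\log(1/\m(D\mid\p^*))+\I(\langle D\rangle;\mathcal H\mid\p^*)\lel-\log p+\I(\langle\p,\q\rangle;\mathcal H)$, where the final step uses $\I(\q;\mathcal H\mid\p^*)\lel\I(\langle\p,\q\rangle;\mathcal H)$ (a chain rule for information with $\mathcal H$, plus non-growth of this information under the fixed computable map $\langle\p,\q\rangle\mapsto\langle D\rangle$). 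Combining with $\K(a)\lea\K(\p)+\K(a\mid\p^*)$ gives the claimed $\K(a)\lel\K(\p)-\log p+\I(\langle\p,\q\rangle;\mathcal H)$.

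The step I expect to be the main obstacle is the accounting in that last chain: relativizing the set-membership argument to $\p^*$ so that $\K(\p)$ is charged exactly once and additively, while verifying that the halting-sequence-assisted choice of $a$ does not inject an unabsorbed correction such as $\I(a;\mathcal H\mid\langle\p,\q\rangle)$ into the information term. The remaining points -- that ``$\p$ wins against $\q$'' is genuinely enumerable so the EL machinery applies, and that unbounded-horizon, lower-computable, and uncomputable environments are all handled by the appropriately relativized version -- are comparatively routine but must be stated with care.
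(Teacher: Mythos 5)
Your proposal is correct and follows essentially the same route as the paper's own argument (given as the ``alternative proof'' of Theorem \ref{thr:newproof}): view $\p$ as a measure over deterministic behaviours, note that the winning behaviours carry mass at least $p$, and extract a single low-complexity winner via an EL-type theorem, pushing the dependence on $\q$ into $\I(\langle\p,\q\rangle;\mathcal{H})$ by information non-growth (Lemma \ref{lmm:consh}). The only differences are bookkeeping: the paper applies the Monotone EL Theorem \ref{thr:monel} to the (possibly infinite, prefix-free) clopen set $G$ of winning finite interactions and charges $\K(\p)$ through $\M(G)\gem\m(\p)\p(G)$, which sidesteps both your relativization to $\p^*$ and the need to replace your set $D$ of completed deterministic agents (infinite objects, possibly infinitely many) by finite partial strategies before any EL machinery applies.
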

\section{Setup}
We introduce some key tools necessary for this paper. $O_{a_1,\dots,a_n}(1)$ is a constant dependent on parameters $a_1,\dots,a_n$. For $n\in\N$, $\langle n\rangle = 01^n0$. We use $x\lea y$, $x\gea y$ and $x\eqa y$ to denote $x< y+O(1)$, $x+O(1)>y$ and $x=y\pm O(1)$, respectively. Furthermore, ${\lem}f$, ${\gem}f$ denotes $< O(1)f$ and $>f/O(1)$.  In addition, $x\lel y$ and $x\gel y$ denote $x<y+O(\log y)$ and $x+O(\log x) > y$, respectively. $(x0)^-=(x1)^-=x$. We say $x\sqsubseteq y$ if $xz=y$ for some $z\in\{0,1\}^*$. We say $[A]=1$ if mathematical statement $A$ is true, and $[A]=0$, otherwise. 

The function $\m(x)$ is a universal lower-computable semi-measure. For $D\subseteq\FS$, $\m(D)=\sum_{x\in D}\m(x)$. A continuous semi-measure is a function $\omega:\{0,1\}^*\rightarrow [0,1]$ such that $\omega(\emptyset)=1$ and $\omega(x)\geq \omega(x0)+\omega(x1)$.  For prefix free set  $G$, $\omega(G)=\sum_{x\in G}\omega(x)$. Note that it could be that $\omega(x)>\omega(\{x0,x1\})$. $\M$ is a universal lower-computable continuous semi-measure. Thus for prefix free set $G\subset\FS$, lower computable continuous semi-measure $\omega$, $\M(G)\gem \m(\omega)\omega(G)$. Monotone Kolmogorov complexity is $\Km(x) = \min \{\|p\| : U(p)\sqsupseteq x\}$. This is a slightly different convention than the literature, in that the universal Turing machine $U$ must halt. Mutual information between strings is $\I(x:y)=\K(x)+\K(y)-\K(x,y)$. For strings, $x$, $y$, $\I(x;y)=\K(x)-\K(x|y)$.

A probability $P$ over $\N$ is elementary if it has finite support and its range is a subset of $\Q$. Elementary probabilities can be encoded into finite strings or natural numbers. The randomness deficiency of $x\in \N$ with respect to elementary probability $P$ and $y\in\N$ is $\d(x|P,y)=\ceil{-\log P(x)}-\K(x|P,y)$.
\begin{dff}[Stochasticity]
The stochasticity of $x\in\N$ with respect to $y\in\N$ is $\Ks(x|y)=\min\{\K(P|y)+3\log\max\{\d(x|P,y),1\}:P\textrm{ is elementary}\}$.
\end{dff}

\begin{lmm}[\cite{EpsteinOutliers23,Levin16}]
\label{lmm:ksh}
$\Ks(x|y)\lel \I(x;\mathcal{H}|y).$
\end{lmm}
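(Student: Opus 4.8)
\emph{Proof proposal.} The plan is to recast the claim as the upper bound $\K(x|\mathcal{H},y)\lel\K(x|y)-\Ks(x|y)+O(\log\Ks(x|y))$; unfolding the definitions this is an inequality of the shape $\Ks(x|y)\le\I(x;\mathcal{H}|y)+4\log\Ks(x|y)+O(1)$, which routinely implies $\Ks(x|y)\lel\I(x;\mathcal{H}|y)$. Write $S:=\Ks(x|y)$. The first ingredient is that $\Ks(\cdot|y)$ is \emph{computable relative to $\mathcal{H}$}, given $y$: the minimum in its definition may be restricted to the finitely many elementary $P$ with $\K(P|y)<\K(z|y)+O(1)$ (the point mass $\delta_z$ already witnesses $\Ks(z|y)\lea\K(z|y)$), and with the halting oracle one lists those $P$, computes every $\K(P|y)$ and every deficiency $\d(z|P,y)=\lceil-\log P(z)\rceil-\K(z|P,y)$ exactly, and minimizes. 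Hence, for each integer $s\ge1$, the set $G_s:=\{z:\Ks(z|y)\ge s\}$ is decidable relative to $\langle\mathcal{H},y\rangle$, uniformly in $s$, and $x\in G_S$.

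The second ingredient turns this into a boost of the a priori probability under the oracle. Suppose the estimate
\[
\m\bigl(\{z:\Ks(z|y)\ge s\}\,\big|\,y\bigr)\ \lem\ 2^{-s}s^{2}
\]
holds. Then $\nu(z):=\m(z|y)\sum_{s=1}^{\Ks(z|y)}2^{s}s^{-4}$ is lower-computable relative to $\langle\mathcal{H},y\rangle$ — it is the lower-computable $\m(z|y)$ times an $\mathcal{H}$-computable finite sum — and $\sum_z\nu(z)=\sum_{s\ge1}2^{s}s^{-4}\m(G_s|y)\lem\sum_{s\ge1}s^{-2}\lem1$, so up to a constant factor $\nu$ is a semimeasure. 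By universality of $\m(\cdot|\mathcal{H},y)$ among semimeasures lower-computable relative to $\langle\mathcal{H},y\rangle$, we get $\m(x|\mathcal{H},y)\gem\nu(x)\gem 2^{S}S^{-4}\m(x|y)$ (keeping only the $s=S$ term), and taking logarithms with the coding theorem relativized to $\langle\mathcal{H},y\rangle$ yields $\K(x|\mathcal{H},y)\lel\K(x|y)-S+4\log S$, as wanted.

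The whole proof thus reduces to the displayed estimate — that being $s$-nonstochastic has a priori probability decaying like $2^{-s}$ up to a polynomial factor — and this is the technical heart imported from \cite{EpsteinOutliers23,Levin16}, which I expect to be the main obstacle. The naive route, averaging the exponential deficiency tests $2^{\d(z|P,y)}$ over the simple $P$ weighted by $\m(P|y)$, fails: the inequality $\sum_z P(z)2^{\d(z|P,y)}\lem1$ controls only the first moment of the deficiency, while the definition of $\Ks$ carries a $3\log\d$ term, effectively a $\d^{3}$ factor, and $\sum_t t^{3}$ diverges. A correct argument must spend the deficiency budget at a damped rate and across a range of thresholds, so that the surviving weight on a nonstochastic $z$ is $2^{\Ks(z|y)}$ up to $\mathrm{poly}(\Ks(z|y))$ — a polynomial in $\Ks(z|y)$ and not in $\K(z|y)$, since a $\mathrm{poly}(\K(z|y))$ loss would only give $\Ks(x|y)\lel\I(x;\mathcal{H}|y)+O(\log\K(x|y))$, too weak for the claimed $\lel$. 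Getting this calibration right is precisely where the constant $3$ in the definition of stochasticity is used; granting the estimate, the two ingredients above finish the proof.
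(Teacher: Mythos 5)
Your reduction is sound as far as it goes: the observation that $\Ks(\cdot|y)$ is computable relative to $\mathcal{H}$, the construction of $\nu$, and the passage from $\m(x|\mathcal{H},y)\gem 2^{S}S^{-4}\m(x|y)$ to the conclusion are all correct. The genuine gap is the displayed estimate $\m(\{z:\Ks(z|y)\ge s\}\mid y)\lem 2^{-s}s^{2}$, which you assert and defer. This is not an importable routine ingredient: it is equivalent to the lemma itself up to logarithmic terms. In one direction your boosting argument derives the lemma from it; in the other direction, the lemma gives $\{z:\Ks(z|y)\ge s\}\subseteq\{z:\I(z;\mathcal{H}|y)\ge s-O(\log s)\}$, and since $2^{\I(z;\mathcal{H}|y)}\eqm\m(z|\mathcal{H},y)/\m(z|y)$, Markov's inequality applied to $\sum_z\m(z|\mathcal{H},y)\le 1$ yields exactly the tail bound you need. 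So the plan restates the lemma rather than proving it, and the circularity concern you raise against the ``naive route'' applies with equal force to your own: any direct proof of the tail bound must already contain the lemma's hard content.

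What the cited proofs do instead is construct the witnessing elementary measure $P$ for the given $x$ explicitly, rather than reason about the measure of the nonstochastic set. Roughly: take the least time $t$ at which the lower approximation of $\m(x|y)$ exceeds half its final value, set $n=\bb^{-1}(t)$, and let $P$ be a normalized truncation of the stage-$\bb(n)$ approximation of $\m(\cdot|y)$. Then $P$ is computable from $bb(n)$, and the asymmetry recorded in Proposition~\ref{prp:bb} and Lemma~\ref{lmm:rec} --- that $bb(n)$ is cheap given $\mathcal{H}$ but costs about $n$ bits without it --- is precisely what converts the parameter $n$ into the term $\I(x;\mathcal{H}|y)$, while the choice of $t$ keeps $-\log P(x)$ within $O(1)$ of $-\log\m(x|y)$ and hence controls $\d(x|P,y)$. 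Some busy-beaver-style construction of this kind (or the equivalent enumeration-rank argument of Vereshchagin and Shen) is the missing idea; without it your proposal is a correct equivalence, not a proof.
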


\begin{lmm}[\cite{Epstein22Exm22}]
\label{lmm:consh}
For partial computable function $f$, \\$\I(f(x);\mathcal{H})\lea \I(x;\mathcal{H})+\K(f)$.   
\end{lmm}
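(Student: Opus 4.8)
The plan is to translate the statement into one about a priori probabilities and then exhibit a \emph{single} lower-semicomputable-relative-to-$\mathcal{H}$ semimeasure that simultaneously absorbs every partial computable $f$; the term $\K(f)$ will then emerge, with only an additive $O(1)$ loss, from the cost of selecting the relevant coordinate inside that one object.

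First I would rewrite the claim. We may assume $f(x)$ is defined, since otherwise the statement is vacuous. By the coding theorem relativized to $\mathcal{H}$, $\K(a)\eqa-\log\m(a)$ and $\K(a\mid\mathcal{H})\eqa-\log\m(a\mid\mathcal{H})$, so, writing $y=f(x)$, the inequality $\I(f(x);\mathcal{H})\lea\I(x;\mathcal{H})+\K(f)$ is equivalent to
\[ \m(y\mid\mathcal{H})\,\m(x)\ \lem\ 2^{\K(f)}\,\m(y)\,\m(x\mid\mathcal{H}). \]
It is therefore enough to establish, uniformly over all partial computable $f$ and all $x'$ with $f(x')\!\downarrow$,
\[ \m(x'\mid\mathcal{H})\ \gem\ \m(f)\,\m(x')\,\frac{\m(f(x')\mid\mathcal{H})}{\m(f(x'))}. \]

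For this I would define, for each string $x'$,
\[ \lambda(x')\ :=\ \sum_{f'}\ \m(f')\,\m(x')\,\frac{\m(f'(x')\mid\mathcal{H})}{\m(f'(x'))}, \]
the sum ranging over indices $f'$ of partial computable functions with $f'(x')\!\downarrow$. Two facts make this work. First, $\lambda$ is lower-semicomputable relative to $\mathcal{H}$: the factors $\m(f')$ and $\m(x')$ are lower-semicomputable, $\m(f'(x')\mid\mathcal{H})$ is lower-semicomputable relative to $\mathcal{H}$, and --- the one point needing care --- the unconditioned $\m$ is \emph{computable} relative to $\mathcal{H}$ (a left-c.e.\ real is $\mathcal{H}$-computable because $\mathcal{H}$ decides the $\Pi_1$ facts needed to certify a rational approximation), so the reciprocal $1/\m(f'(x'))$ is legitimately approximable from below relative to $\mathcal{H}$. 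Second, $\lambda$ is a semimeasure up to an $O(1)$ factor: reversing the order of summation,
\[ \sum_{x'}\lambda(x')\ =\ \sum_{y}\frac{\m(y\mid\mathcal{H})}{\m(y)}\sum_{f',x':\,f'(x')=y}\m(f')\,\m(x')\ \lem\ \sum_{y}\frac{\m(y\mid\mathcal{H})}{\m(y)}\,\m(y)\ =\ \sum_{y}\m(y\mid\mathcal{H})\ \le\ 1, \]
where the middle step uses the elementary push-forward bound $\m(y)\gem\sum_{f',x':\,f'(x')=y}\m(f')\m(x')$, itself just the observation that $y\mapsto\sum_{f',x':f'(x')=y}\m(f')\m(x')$ is a lower-semicomputable semimeasure. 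Hence $\m(\cdot\mid\mathcal{H})$ dominates $\lambda$ up to a constant; since $\lambda(x')$ is bounded below by its single $f'=f$ summand, we get $\m(x'\mid\mathcal{H})\gem\m(f)\,\m(x')\,\m(f(x')\mid\mathcal{H})/\m(f(x'))$. Specializing to $x'=x$ and taking $-\log$ yields the lemma, with $\K(f)$ entering only through $\m(f)\eqm2^{-\K(f)}$ and the error staying $O(1)$ precisely because $\lambda$ is defined independently of $f$.

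I expect the main obstacle to be the non-monotonicity of $\m(\cdot\mid\mathcal{H})$: one cannot bound $\m(f(x)\mid\mathcal{H})$ from above by a semimeasure built from the $f$-preimage of $f(x)$, because $f$ may carry a point of large $\mathcal{H}$-information onto one of small complexity or conversely, so any argument that fixes $f$ (or fixes the input) before invoking universality either fails outright or leaks an extra $\K(\K(f))$ term. Summing over all $f'$ before appealing to universality is what both repairs the monotonicity issue and keeps the overhead additive; the only remaining subtlety is verifying that the unconditioned $\m$ is $\mathcal{H}$-computable, which is what licenses placing $1/\m(f'(x'))$ inside an $\mathcal{H}$-lower-semicomputable expression.
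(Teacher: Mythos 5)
The paper imports this lemma from \cite{Epstein22Exm22} without reproducing a proof, so there is nothing internal to compare against; your argument is the standard conservation-of-information proof (exhibit an $\mathcal{H}$-lower-semicomputable semimeasure dominating $\m(f')\m(x')\m(f'(x')\mid\mathcal{H})/\m(f'(x'))$, then invoke universality of $\m(\cdot\mid\mathcal{H})$), and it is correct, including the two delicate points: $\mathcal{H}$-computability of the left-c.e.\ real $\m(y)$ and the push-forward bound $\sum_{f',x':f'(x')=y}\m(f')\m(x')\lem\m(y)$. The only thing to pin down is that $\m(f')$ must be read as the algorithmic probability of the \emph{function}, $\sum\{2^{-\|p\|}:U_p\text{ computes }f'\}\geq 2^{-\K(f)}$ (the same convention the paper uses for $\m(\p)$), rather than $\m$ evaluated at a shortest index, which would leak a $\K(\K(f))$ term.
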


A Win/No-Halt game is a series of interactions between an agent $\p$ and an environment $\q$. Each round starts with $\p$ initiating a move, which is chosen out of $\N$ and then $\q$ responds with a number or $\q$ can halt. Agent $\p$ wins if $\q$ halts the game, otherwise the game can continue potentially forever. Thus $\p$ is a function $(\N\times\N)^*\mapsto \N$ and $\q$ is a function $\N\times (\N\times\N)^*\mapsto \N\cup \{\emptyset\}$. Both $\p$ and $\q$ are are assumed to be computable, however, lower computable and uncomputable environments are studied in Section \ref{sec:ue}. Both the agent and environment can be probabilistic in their choice actions. Thus the probabilities of each action are uniformly computable to any degree of accuracy.
\section{Alternative Proof}
In this section we provide an alternative proof to Theorem \ref{thr:detenv} than the one presented in \cite{Epstein23}. This proof relies on the Monotone EL Theorem, whereas the one in \cite{Epstein23} relies on stochastic processes and is much more extensive. The original EL Theorem can be found at \cite{Levin16,EpsteinEL23}.

\begin{thr}[Monotone EL Theorem, \cite{Epstein24}]
\label{thr:monel}
For prefix free set $G$,\\ $\min_{x\in G}\Km(x)\lel -\log \M(G)+\I(G:\mathcal{H})$.
\end{thr}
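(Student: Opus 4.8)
The plan is to prove the continuous, monotone analogue of the discrete EL Theorem $\min_{x\in D}\K(x)\lel-\log\m(D)+\I(D;\mathcal{H})$ by rerunning its argument in Cantor space, under the substitutions $\K\mapsto\Km$, $\m\mapsto\M$, and the finite set $D$ replaced by the prefix free set $G$. Writing $k=\ceil{-\log\M(G)}$, so that $k\eqa-\log\M(G)$, the goal reduces to exhibiting some $x\in G$ with $\Km(x)\lel k+\I(G:\mathcal{H})$, where $\I(G:\mathcal{H})$ is the information the halting sequence carries about $G$.

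First I would fix a measure realization of $\M$: let $U$ be a universal monotone machine for $\M$, so that, up to the slack in $\lel$, $\M(G)$ is the measure of program streams $\omega$ on which $U$ outputs an extension of some $x\in G$. Because $G$ is prefix free these events are disjoint, so this measure equals $\sum_{x\in G}\M(x)=\M(G)\ge 2^{-k}$. The minimal prefixes $p$ of such $\omega$ form a prefix free set $A\subseteq\FS$ with $\sum_{p\in A}2^{-\|p\|}\gem\M(G)$, and each $p\in A$ certifies $\Km(x_p)\le\|p\|$ for the element $x_p\in G$ it produces. This reduces the theorem to a single quantitative claim: some $p\in A$ satisfies $\|p\|\lel k+\I(G:\mathcal{H})$. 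The set $A$ is enumerable from a code for $G$ together with the lower approximation to $\M$, but its shortest element cannot be located computably, and this is exactly where halting information must be spent.

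The core is to bound this derandomization penalty. I would construct a lower computable test $t$ on codes of prefix free sets that rewards precisely the \emph{bad} sets, those with $\M(G)$ large yet every witness in $A$ long, by setting $\log t(G)$ equal to the excess $e_G=(\min_{p\in A}\|p\|)-k$, enumerated from below as the approximations to $\M$ and to $A$ improve. The design constraint is the test inequality $\sum_G\m(G)\,t(G)\lem 1$, whose validity expresses a conservation law: the total a priori probability of prefix free sets that conceal their simple members is bounded, because each concealment of reward $2^{e_G}$ must be backed by a genuine deficit of the mass $\M(G)\ge 2^{-k}$ that failed to yield a short witness. This is the continuous counterpart of the counting step in the discrete proof, and it is delicate because $\M$ is only a semimeasure, so $\M(x)>\M(\{x0,x1\})$ is possible, and because the G\'acs-type gap between $\Km$ and $-\log\M$ must be absorbed within the $\lel$ slack. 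Granting the test, one invokes the principle underlying Lemma \ref{lmm:ksh} that the value of a lower computable $\m$-bounded test on $G$ exceeds its a priori level by at most the stochasticity $\Ks(G)$, which in turn is capped by $\I(G:\mathcal{H})$; this gives $\log t(G)\lel\I(G:\mathcal{H})$, that is, $e_G\lel\I(G:\mathcal{H})$.

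Finally, the short witness $p\in A$ delivered by the penalty bound decodes to $x_p\in G$ with $\Km(x_p)\le\|p\|\lel k+\I(G:\mathcal{H})\lel-\log\M(G)+\I(G:\mathcal{H})$, which is the theorem; Lemma \ref{lmm:consh} is used to verify that passing between $G$, its code, and the witness does not inflate the information term. The step I expect to be the main obstacle is the test construction together with the direction of the stochasticity inequality: one must make $t$ genuinely lower computable when both membership in $G$ and the value $\M(G)$ are available only as increasing approximations, and one must argue that a large penalty forces $G$ to be non-stochastic (large $\Ks(G)$) rather than merely exhibiting $G$ as typical, so that the upper bound $\Ks(G)\lel\I(G:\mathcal{H})$ of Lemma \ref{lmm:ksh} closes the loop with the clean additive term $\I(G:\mathcal{H})$ instead of a constant multiple of it.
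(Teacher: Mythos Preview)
The paper does not contain a proof of this theorem; it is imported from \cite{Epstein24} and used as a black box (to prove Theorem~\ref{thr:newproof} and Theorem~\ref{thr:partialderandom}). There is therefore nothing in the present paper to compare your attempt against.

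On the substance of your sketch: the overall architecture---pass to a prefix-free set $A$ of monotone programs witnessing $\M(G)$, then bound the gap between the shortest witness and $-\log\M(G)$ by a stochasticity/halting-information term---is the right shape for an EL-type argument. But the specific test you propose has a computability problem you half-acknowledge and do not resolve. You set $\log t(G)=e_G=\min_{p\in A}\|p\|-k$ and call it lower computable; however, as $A$ is enumerated the minimum length can only \emph{decrease}, and as $\M(G)$ grows $k$ also decreases, so $e_G$ is a difference of two upper-computable quantities and there is no evident monotone approximation from below. Without lower computability the test machinery does not apply, and your ``conservation law'' justification for $\sum_G\m(G)t(G)\lem 1$ is a heuristic, not an argument.

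The actual proofs of EL-type theorems (see \cite{Levin16,EpsteinEL23} and the pattern in the proof of Theorem~\ref{thr:main} here) do not build a test on $G$ directly. Instead they fix an elementary probability $Q$ realizing $\Ks(G)$, restrict its support to sets with $\M$-mass at least $2^{-k}$, perform a randomized construction (random sampling of candidate witnesses, with a concentration bound) simple relative to $Q$, and then argue that if the construction fails on $G$ the indicator of failure is a $Q$-test certifying $\d(G|Q)$ large, contradicting the choice of $Q$. That route avoids your computability obstruction entirely. If you want to repair your sketch, replace the direct test with this stochasticity-plus-randomized-construction template.
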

\begin{thr}
\label{thr:newproof}
    If probabilistic agent $\mathbf{p}$ wins against environment $\mathbf{q}$ with at least probability $p$, then there is a deterministic agent of Kolmogorov complexity $\lel\K(\mathbf{p}) -\log p + \I(\langle \mathbf{p},\mathbf{q}\rangle;\mathcal{H})$ that wins against $\mathbf{q}$.
\end{thr}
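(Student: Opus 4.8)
The plan is to realize the probabilistic agent as a machine driven by a tape of fair coin flips, extract from it a prefix‑free set $G$ of finite coin prefixes that already secure a win against $\mathbf{q}$, and feed $G$ into the Monotone EL Theorem (Theorem \ref{thr:monel}). The point that keeps the environment out of the leading term is that the uniform measure on $\IS$ is a \emph{fixed} semimeasure, so it contributes only an $O(1)$ to the relevant estimates.

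First I would replace $\mathbf{p}$ by an equivalent machine that consumes a fair‑coin tape $z\in\IS$ and plays $\mathbf{p}$'s moves, sampling each action from its (computable‑to‑any‑accuracy) distribution by the usual interval method; this can fail to terminate on a measure‑zero set of tapes, which is harmless. Running this machine against the deterministic $\mathbf{q}$, the whole play is a function of $z$, and at the moment $\mathbf{q}$ halts only finitely many bits of $z$ have been consulted. Let $G\subseteq\FS$ be the set of all $r$ such that running $\mathbf{p}$ on coin tape $r$ against $\mathbf{q}$ makes $\mathbf{q}$ halt after consulting exactly $\|r\|$ coin bits. Then $G$ is prefix free: an element of $G$ is consumed in full exactly at the winning round, so no proper extension of it can also be consumed in full, and no proper prefix of it can lie in $G$ since the machine would already have won. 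Also $G$ is r.e.\ from $\langle\mathbf{p},\mathbf{q}\rangle$ by dovetailing the simulations. Finally, the set of winning tapes is precisely $\bigsqcup_{r\in G}\{z : r\sqsubseteq z\}$, so for the uniform measure $\lambda(x)=2^{-\|x\|}$ — a fixed, hence $O(1)$‑complex, computable continuous semimeasure — we get $\lambda(G)=\Pr[\mathbf{p}\text{ wins against }\mathbf{q}]\geq p$.

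Next I would invoke $\M(G)\gem\m(\lambda)\lambda(G)$; since $\lambda$ is fixed, $\m(\lambda)\gem 1$, so $\M(G)\gem p$ and $-\log\M(G)\lel-\log p$. Theorem \ref{thr:monel} then supplies $r^{*}\in G$ with $\Km(r^{*})=\min_{x\in G}\Km(x)\lel-\log\M(G)+\I(G:\mathcal{H})\lel-\log p+\I(\langle\mathbf{p},\mathbf{q}\rangle;\mathcal{H})$, the last step by Lemma \ref{lmm:consh}, as a fixed machine computes an r.e.\ index for $G$ from $\langle\mathbf{p},\mathbf{q}\rangle$ (and $\I(G:\mathcal H)$ and $\I(G;\mathcal H)$ agree up to the precision $\lel$).

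It remains to turn $r^{*}$ into a deterministic agent. Take a shortest monotone program $q$ with $U(q)\sqsupseteq r^{*}$, so $\|q\|=\Km(r^{*})$, and let the deterministic agent be: run $\mathbf{p}$, answering each request for a coin bit with the next output bit of $U(q)$, and output $0$ if ever queried after the game would end. Against the real $\mathbf{q}$ this reproduces exactly the play on coin tape $r^{*}$ (only the first $\|r^{*}\|$ bits of $U(q)$, which equal $r^{*}$, are ever needed), hence it wins — and note it never needs to know $\mathbf{q}$. Concatenating a shortest self‑delimiting program for $\mathbf{p}$ with $q$ describes this agent, so its complexity is $\lea\K(\mathbf{p})+\Km(r^{*})\lel\K(\mathbf{p})-\log p+\I(\langle\mathbf{p},\mathbf{q}\rangle;\mathcal{H})$. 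The main delicate point is precisely this conversion: Monotone EL only bounds the monotone complexity $\Km(r^{*})$, and we need a bound on the ordinary prefix complexity of the agent — this goes through only because the agent can use a monotone program \emph{directly} as its coin source. The secondary points to check are that $G$ is genuinely prefix free and that replacing an arbitrary play‑semimeasure by the fixed uniform measure costs only $O(1)$, which is what stops $\K(\mathbf{q})$ from entering the final bound.
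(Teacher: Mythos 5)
Your proof is correct in substance, but it takes a genuinely different route from the paper's. The paper encodes interactions themselves as points of Cantor space, views $\p$ as a lower-computable continuous (semi)measure over these encoded interactions, and takes $G$ to be a \emph{clopen} set of winning interactions computable from $\langle\p,\q,s\rangle$; the complexity of $\p$ then enters \emph{inside} the EL step via $\M(G)\gem\m(\p)\p(G)$, so Monotone EL directly yields a string $y$ whose bits are an explicit action table, i.e.\ the deterministic player itself. You instead put the fixed uniform coin measure $\lambda$ on Cantor space, so the EL step gives only $\Km(r^{*})\lel-\log p+\I(\langle\p,\q\rangle;\mathcal{H})$ with no $\K(\p)$, and $\K(\p)$ is added afterwards by an agent that simulates $\p$ on the derandomized coin string. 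Your decomposition cleanly separates the two sources of complexity and isolates a nice intermediate fact (a short coin prefix already forces a win); the paper's choice buys simplicity in the information bookkeeping, since its $G$ is a finite object computable from $\langle\p,\q,s\rangle$, so Lemma \ref{lmm:consh} applies with no discussion of how an infinite set is represented. Three points in your write-up deserve patching, all routine: (i) you apply Theorem \ref{thr:monel} to an infinite r.e.\ prefix-free $G$ with $\I(G:\mathcal{H})$ read off an enumeration index; it is safer (and costs nothing) to truncate as the paper implicitly does --- with $s=\ceil{-\log p}+1$ enumerate $G$ until the accumulated $\lambda$-mass exceeds $2^{-s}$, obtaining a finite prefix-free $G'$ computable from $\langle\p,\q,s\rangle$ with $\M(G')\gem 2^{-s}$, after which $\K(s)=O(\log s)$ is absorbed by the $\lel$ precision exactly as in the paper; (ii) the bound $\K(\mathrm{agent})\lea\K(\p)+\Km(r^{*})$ from concatenation needs the monotone program to be delimited (e.g.\ pay $\K(\|q\|)$), again absorbed by $\lel$; (iii) the constructed agent should be made total off the actual play, e.g.\ by defaulting as soon as the finite coin budget is exhausted rather than padding with an infinite tape. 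None of these is a gap in the idea; with them your argument is a valid alternative proof of the theorem.
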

\begin{proof}
    On can associate $(\N\times \N)^*$ to $\N$ in the natural way. Thus $\p$ can be seen as a stochastic process over a set of random variables $\{X_i\}_{i=1}^\infty$. So for each $X_i$, the index $i$ encodes a history of interactions with $\q$. So an interaction with $\q$ is a set $\{(a_i,b_i)\}_{i=1}^n$, where each $a_i\in \N$ encodes the history of the interaction and $b_i$ is the agent's action. By encoding each number $n\in\N$ into a string $\langle n\rangle$, $\p$ can be seen as a computable measure over $\{0,1\}^\infty$. With this encoding, an interaction is an open set in the Cantor space. Let $s = \ceil{-\log p}+1$. Let $G\subset\{0,1\}^\infty$ be a clopen set, computable from $\q$, $\p$ and $s$ where each of its elements encode a winning interaction and $\p(G)>2^{-s}$. Thus $\K(G|\p,\q,s)=O(1)$. Using the Monotone EL Theorem \ref{thr:monel},
    There exists $y$ such that $y\sqsupseteq x\in G$ and 
    \begin{align}
    \nonumber
    \K(y) &\lel -\log \M(G) + \I(G;\mathcal{H})\\
    \label{eq:ap1}
    &\lel \K(\p)+s + \I(\langle \p, s,\q\rangle ;\mathcal{H})\\
    \nonumber
    &\lel \K(\p)-\log p + \I(\langle \p,\q\rangle ;\mathcal{H})
    \end{align} 
    Equation \ref{eq:ap1} is due to Lemma \ref{lmm:consh}. This $y\in\FS$ encodes a finite series of numbers, which in turn encodes a series of actions $a\in\N$ at a number of histories $h\in (\N\times\N)^*$. Thus $y$ encodes a deterministic player $\p'$. When paired with $\q$, the deterministic agent $\p'$ will result in a winning interaction.
\end{proof}
\section{Probabilistic Games}
In this section we prove Kolmogorov Game Derandomization over probabilistic environments. This is an extension to Theorem \ref{thr:detenv}, enabling the characterization of all computable and probabilistic environments.

The main proof uses the notion of a \textit{game fragment}. A game fragment $\mathcal{F}$ is a finite tree, where each edge has a number $n\in\N$ representing an action. On the even levels, the edges are coupled with rational weights in $[0,1]$ and the summation of weights on edges with the same parent node is not more than 1. Such fragments $\mathcal{F}$ can be coupled with an probabilistic agent $\p$, who fills in the weights of each odd level edges with its probabilistic action. In such a coupling, the weight of each path is the product of the probabilities along each edge of the path. $\mathrm{Weight}(\p,\mathcal{F})$ is the sum of the weights of each path. 
\begin{clm}
\label{clm}
If there is a fragment $\mathcal{F}$ where each path from the root to a leaf represents a winning interaction with an environment $\q$ and the weights of $\mathcal{F}$ are not more than $\q$'s probabilities of those particular actions, then $\mathrm{Weight}(\p,\mathcal{F})$ is not more than the probability that $\p$ wins against $\q$.
\end{clm}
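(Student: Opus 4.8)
The plan is to compare, leaf by leaf, the weight that $\mathcal{F}$ assigns to a root-to-leaf path with the actual probability that the randomized play of $\p$ against $\q$ traverses the corresponding history, and then to sum over leaves. This is a coupling/domination argument: the $\mathcal{F}$-weight of a path is built from the \emph{same} agent probabilities the true play uses and from environment weights that are \emph{dominated} by the true environment probabilities, so path-by-path the fragment underestimates the play, and the leaves being an antichain lets us add these estimates up.

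First I would turn the play between $\p$ and $\q$ into an honest probability space: the coupling of $\p$ and $\q$ induces a measure on interaction sequences (finite or infinite) under which, at each stage, the agent takes a given edge with $\p$'s probability for that action given the history produced so far, and the environment takes a given edge with $\q$'s probability for that action given the history so far. For a node $v$ of $\mathcal{F}$, let $h_v$ be the history read off the root-to-$v$ path and let $\pi(v)$ be the probability, in this space, that the play extends $h_v$. A straightforward induction on the depth of $v$ shows that $\pi(v)$ equals the product, over all edges on the root-to-$v$ path, of $\p$'s action probabilities on the odd-level (agent) edges and $\q$'s action probabilities on the even-level (environment) edges. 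Now $\mathrm{Weight}(\p,\mathcal{F})$ uses exactly these $\p$-probabilities on the odd-level edges, whereas the weights stored in $\mathcal{F}$ on the even-level edges are, by hypothesis, at most the corresponding $\q$-probabilities; since every factor is nonnegative, the $\mathcal{F}$-weight $w(v)$ of the path to $v$ satisfies $w(v)\le \pi(v)$ for every node, in particular for every leaf.

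Next I would exploit that the leaves of a tree form an antichain in the prefix order: no leaf's history is a proper prefix of another's, so the events ``the play extends $h_\ell$'' for distinct leaves $\ell$ are pairwise disjoint. Hence $\sum_\ell \pi(\ell)$ equals the probability of the disjoint union of these events. By hypothesis each such $h_\ell$ is a winning interaction with $\q$, so the event ``the play extends $h_\ell$'' is contained in the event ``$\p$ wins against $\q$''; therefore $\sum_\ell \pi(\ell) \le \Pr[\p \text{ wins against } \q]$. Chaining the inequalities gives $\mathrm{Weight}(\p,\mathcal{F}) = \sum_\ell w(\ell) \le \sum_\ell \pi(\ell) \le \Pr[\p \text{ wins against } \q]$, which is the claim.

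The only delicate point — the place where care, rather than cleverness, is needed — is the bookkeeping behind the product formula for $\pi(v)$: one must confirm that the history on which $\q$ conditions when an even-level weight is interpreted is precisely $h_u$ for the parent $u$ of that edge, and that $\p$ and $\q$ being merely approximable to arbitrary accuracy rather than exactly rational-valued is harmless, since the induced measure is still well defined and the inequalities involve only the true limiting probability values. Granting that, the disjointness of the leaf events and the implication ``the play reaches a leaf $\Rightarrow$ $\p$ wins'' are immediate from the definitions of a game fragment and of a winning interaction, so no further work is required.
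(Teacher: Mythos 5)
Your argument is correct. Note that the paper itself states this claim without any proof, treating it as immediate, so there is no authorial argument to compare against; your write-up supplies exactly the routine verification that is being taken for granted. The two ingredients you isolate are the right ones: path-by-path domination (the odd-level factors of $\mathrm{Weight}(\p,\mathcal{F})$ are $\p$'s true action probabilities, while the even-level weights of $\mathcal{F}$ are by hypothesis at most $\q$'s true probabilities, so each leaf's fragment weight is at most the probability that the coupled play of $\p$ against $\q$ realizes that history), and the antichain property of the leaves of a finite tree, which makes the corresponding events pairwise disjoint so their probabilities add. Since each leaf history is a winning interaction (it ends with $\q$ halting), the union of these disjoint events lies inside the winning event, giving $\mathrm{Weight}(\p,\mathcal{F})\leq \Pr[\p \text{ wins against } \q]$. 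Your closing remark is also apt: the fact that $\p$ and $\q$ are only computable to arbitrary accuracy is irrelevant here, since the inequality is stated for the true limiting probabilities and the fragment's rational weights only need to lower bound them.
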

\begin{thr}[Probabilistic Environments]
\label{thr:main}
    Let $\p$ be a probabilistic agent and $\q$ be a probabilistic environment. If $\p$ wins in the Win/No-Halt game against $\q$ with probability $> 2^{-s}$, $s\in\N$, then there is a deterministic agent of complexity $\lel \K(\p)+2s +\I(\langle \p,\q\rangle;\mathcal{H})$ that wins with probability $> 2^{-s-1}$.
\end{thr}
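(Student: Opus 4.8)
The plan is to follow the structure of the alternative proof of Theorem~\ref{thr:newproof}, but to insert a \emph{game fragment} as an intermediate object that integrates out the environment's randomness, after which the Monotone EL Theorem can again be applied to a clopen subset of the agent's strategy space. The first step is to build a game fragment $\mathcal{F}$, computable from $\langle\p,\q,s\rangle$, such that every root-to-leaf path encodes a winning interaction, the rational weights on the environment's edges are underestimates of $\q$'s true action probabilities, and $\mathrm{Weight}(\p,\mathcal{F})>2^{-s}$. This is possible because ``$\q$ halts'' is an increasing union over finite times: for $n=1,2,\dots$ let $\mathcal{F}_n$ be the tree of all plays of at most $n$ rounds in which $\q$ has halted, with each environment edge weighted by a rational lower bound on the corresponding $\q$-probability computed to accuracy $2^{-n}$, and compute the rational number $\mathrm{Weight}(\p,\mathcal{F}_n)$; this quantity converges from below to $\Pr[\p\text{ wins against }\q]>2^{-s}$, so the search halts at some $n$ with $\mathrm{Weight}(\p,\mathcal{F}_n)>2^{-s}$, and we set $\mathcal{F}=\mathcal{F}_n$. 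Hence $\K(\mathcal{F}\mid \p,\q,s)=O(1)$.

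Next I would move from the mixed weight to a large clopen family of good deterministic agents. Regarding $\p$ as a computable measure on $\{0,1\}^\infty$ via the encoding $\langle n\rangle$ of actions at histories, multiplicativity of $\p$ along the nested agent edges of each path gives the averaging identity $\mathrm{Weight}(\p,\mathcal{F})=\mathbb{E}_{\p'\sim\p}\bigl[\mathrm{Weight}(\p',\mathcal{F})\bigr]$, where $\p'$ ranges over pure strategies and $\mathrm{Weight}(\p',\mathcal{F})\in[0,1]$ depends only on $\p'$'s actions at the finitely many agent nodes of $\mathcal{F}$. By the reverse Markov inequality, the set $G\subset\{0,1\}^\infty$ of pure strategies $\p'$ with $\mathrm{Weight}(\p',\mathcal{F})>2^{-s-1}$ is clopen, satisfies $\K(G\mid\p,\q,s)=O(1)$, and has $\p(G)\geq\mathrm{Weight}(\p,\mathcal{F})-2^{-s-1}>2^{-s-1}$. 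By Claim~\ref{clm}, every $\p'\in G$ wins against $\q$ with probability at least $\mathrm{Weight}(\p',\mathcal{F})>2^{-s-1}$.

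It then remains to apply the Monotone EL Theorem~\ref{thr:monel} to the prefix-free set $G$: there is $y\sqsupseteq x\in G$ with $\K(y)\lel-\log\M(G)+\I(G;\mathcal{H})$. Using $\M(G)\gem\m(\p)\p(G)\gem 2^{-\K(\p)}2^{-s-1}$, hence $-\log\M(G)\lel\K(\p)+s$, together with Lemma~\ref{lmm:consh} to get $\I(G;\mathcal{H})\lel\I(\langle\p,\q,s\rangle;\mathcal{H})\lel\I(\langle\p,\q\rangle;\mathcal{H})$ (the $\K(s)=O(\log s)$ contributions being absorbed by $\lel$), we obtain $\K(y)\lel\K(\p)+s+\I(\langle\p,\q\rangle;\mathcal{H})\lel\K(\p)+2s+\I(\langle\p,\q\rangle;\mathcal{H})$; since $y$ extends a member of $G$, it encodes a deterministic agent winning against $\q$ with probability $>2^{-s-1}$.

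I expect the main obstacle to be the second step: making the identification of $\p$ with a Cantor-space measure precise enough that the averaging identity $\mathrm{Weight}(\p,\mathcal{F})=\mathbb{E}_{\p'\sim\p}[\mathrm{Weight}(\p',\mathcal{F})]$ holds literally (essentially Kuhn-type bookkeeping between behavioral and mixed strategies), and checking that $G$ is genuinely clopen and decidable from $\mathcal{F}$ even though $\q$'s probabilities are only computable reals — which is exactly why $\mathcal{F}$ is built with rational weights. The fragment construction in the first step is delicate for the same real-arithmetic reason, but the computation $\mathrm{Weight}(\p,\mathcal{F}_n)\to\Pr[\p\text{ wins}]$ makes the search terminate.
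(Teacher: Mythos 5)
Your proposal is correct, but it takes a genuinely different route from the paper. The paper's proof of Theorem~\ref{thr:main} relativizes to $\langle\p,s\rangle$, takes an elementary measure $Q$ realizing $\Ks(\mathcal{F})$, samples $N=d2^{2s+3}$ deterministic agents i.i.d.\ from the product measure induced by $\p$, applies Hoeffding's inequality so that the empirical average $X_\mathcal{G}$ exceeds $2^{-s-1}$ outside a $Q$-measure $e^{-d}$ set, and then rules out $X_\mathcal{F}\leq 2^{-s-1}$ by a $Q$-test/deficiency contradiction; the $\log N\approx 2s$ term is exactly where the $2s$ in the statement comes from, and Lemma~\ref{lmm:ksh} converts $\Ks(\mathcal{F})$ into $\I(\cdot;\mathcal{H})$. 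You instead keep only the fragment $\mathcal{F}$ and reduce to the deterministic-environment argument of Theorem~\ref{thr:newproof}: the Kuhn-type identity $\mathrm{Weight}(\p,\mathcal{F})=\mathbf{E}_{\p'\sim\p}[\mathrm{Weight}(\p',\mathcal{F})]$ (which the paper also uses, via backwards induction), reverse Markov to get a set $G$ of pure strategies with $\p(G)>2^{-s-1}$ each of which wins with probability $>2^{-s-1}$ by Claim~\ref{clm}, and then the Monotone EL Theorem with $\M(G)\gem\m(\p)\p(G)$ and Lemma~\ref{lmm:consh}. This avoids stochasticity, Hoeffding, and the $Q$-test entirely at the level of this proof, and it even yields the sharper bound $\K(\p)+s+\I(\langle\p,\q\rangle;\mathcal{H})$, of which the stated $2s$ bound is a weakening; it would propagate the same improvement to Theorem~\ref{thr:nb2} and Corollary~\ref{cor:nb3}, and it is in the spirit of the open question raised in the conclusion about proofs not invoking $\Ks$ directly. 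The one piece of bookkeeping you should make explicit is the one you already flag: under the block encoding $\langle n\rangle$, the set of all good pure strategies is an infinite union of cylinders (the relevant histories are not initial coordinates and earlier blocks have unboundedly many values), so before invoking Theorem~\ref{thr:monel} you must truncate to a finite prefix-free (clopen) subset still of $\p$-measure $>2^{-s-1}$, computable from $\langle\p,\q,s\rangle$ since $\p$ is a computable measure and the inequality is strict --- exactly the same step the paper performs in the proof of Theorem~\ref{thr:newproof}; similarly, the accuracy used in building $\mathcal{F}_n$ should shrink with the size of the tree, though the search over fragments terminates for the reason you give.
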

\begin{proof}
    We relativize the universal Turing machine to $\langle \p,s\rangle$. Thus this information is on an auxiliary tape and implicitly in the conditional of all complexity terms.  Let $\mathcal{F}$ be a game fragment corresponding the environment $\q$ such that each path is a winning interaction and $\mathrm{Weight}(\p,\mathcal{F})>2^{-s}$ and also $\K(\mathcal{F}|\q)=O(1)$.
    Note that the actions of $\mathcal{F}$ are rationals which lower bound $\q$'s computable action probabilities.
    Let $Q$ be an elementary probability measure that realizes $\Ks(\mathcal{F})$ and $d=\max\{\d(\mathcal{F}|Q),1\}$. Without loss of generality, one can limit the support of $Q$ to encodings of game fragments $\mathcal{G}$ such that $\mathrm{Weight}(\mathcal{G},\p)>2^{-s}$. This can be done by defining a new probability $Q'$ that is $Q$ conditioned on the above property, which is straightforward but tedious. Let $m$ be the longest path and $\ell$ be the largest action number of any game fragment in the support of $Q$. We define a probability $P$ over deterministic agents $\mathbf{g}$ defined up to $m$ steps and up to $\ell$ actions. Each action of the deterministic agent is determined by the corresponding probability of actions in that turn by $\p$. Using backwards induction, for each math fragment $\mathcal{G}$ in the support of $Q$, $$\mathbf{E}_{\mathbf{g}\sim P}[\mathrm{Weight}(\mathbf{g},\mathcal{G})]>2^{-s}.$$
Let $N$ be a number to be specified later. Assume we randomly define $N$ determinstic agents $\{\mathbf{g}_i\}_{i=1}^N$, each drawn i.i.d. from $P$. For math fragment $\mathcal{G}$ in the support of $Q$, $X_\mathcal{G}=\frac{1}{N}\sum_{i=1}^N \mathrm{Weight}(\mathbf{g_i},\mathcal{G})$. Each such $X_\mathcal{G}$ is a random variable. By the Hoeffding's inequality,
$$\mathrm{Pr}(X_\mathcal{G}\leq 2^{-s-1}) < 2\mathrm{exp}(-N2^{-2s-2}).$$
Let $N=d2^{2s+3}$. Then it is possible to find a set of $N$ deterministic agents such that
$$Q(\{\mathcal{G}:X_\mathcal{G}\leq 2^{-s-1}\}) < e^{-d}.$$
In the above formula, each $X_\mathcal{G}$ is a fixed value and no longer a random variable. It must be that $X_\mathcal{F}>2^{-s-1}$. Otherwise using $Q$-test $t(\mathcal{G})=[X_\mathcal{G}\leq 2^{-s-1}]e^d$,
$$1.44d<\log t(\mathcal{F}) \lea \d(\mathcal{F})\eqa d.$$
This is a contradiction for large enough $d$ which we can assume without loss of generality. Thus since $X_\mathcal{F}>2^{-s-1}$ there exists deterministic agent $\mathbf{g}_i$ such that $\mathrm{Weight}(\mathbf{g}_i,\mathcal{F})> 2^{-s-1}$. Thus, by Claim \ref{clm}, $\mathbf{g}_i$ wins against $\q$ with probability more than $2^{-s-1}$. So,
\begin{align}
\nonumber
    \K(\mathbf{g}_i|s,\p) &\lea \log N + \K(\{\mathbf{g}_i\}|s,\p)\\
\nonumber
    \K(\mathbf{g}_i) &\lea \K(s,\p)+2s+\log d+\K(d,Q|s,\p)\\
\nonumber
    &\lel \K(\p)+2s+3\log d +\K(Q|s,\p)\\
\label{eq2}
    &\lel \K(\p)+2s+\Ks(\mathcal{F}|s,\p)\\
\label{eq25}
    &\lel \K(\p)+2s+\Ks(\mathcal{F})+O(\log \K(s,\p))\\
\label{eq3}
    &\lel \K(\p)+2s+\I(\mathcal{F};\mathcal{H})\\
\label{eq4}
    &\lel \K(\p)+2s+\I(\langle \p,s,\q\rangle;\mathcal{H})\\
\label{eq5}
    &\lel \K(\p)+2s+\I(\langle \p,\q\rangle;\mathcal{H}).
\end{align}

Equations \ref{eq2} and \ref{eq25} follow from the definition of stochasticity, $\Ks$. Equation \ref{eq3} follows from Lemma \ref{lmm:ksh}. Equation \ref{eq4} follows from Lemma \ref{eq4} and the fact that $\mathcal{F}$ is computable from $\p$, $\q$, and $s$. Equation \ref{eq5} is due to the logarithmic precision of the inequality and $\K(s)=O(\log s)$
\end{proof}
\begin{cor}
\label{cor:main}
Let $\epsilon\in(0,1)$
 be computable. Let $\p$ be a probabilistic agent and $\q$ be a probabilistic environment. If $\p$ wins in the Win/No-Halt game against $\q$ with probability $> 2^{-s}$, $s\in\N$, then there is a deterministic agent of complexity $\lel \K(\p)+2s +\I(\langle \p,\q\rangle;\mathcal{H})+O_\epsilon(1)$ that wins with probability $> \epsilon2^{-s}$.
\end{cor}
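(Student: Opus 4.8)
The plan is to rerun the proof of Theorem~\ref{thr:main} with the threshold $2^{-s-1}$ replaced throughout by $\epsilon 2^{-s}$, absorbing the cost of the sharper concentration this demands into a single additive $O_\epsilon(1)$ term. Note first that the case $\epsilon\le 1/2$ is already covered by Theorem~\ref{thr:main}: there $\epsilon 2^{-s}\le 2^{-s-1}$, so its deterministic agent wins with probability $>2^{-s-1}\ge\epsilon 2^{-s}$ and has complexity $\lel\K(\p)+2s+\I(\langle\p,\q\rangle;\mathcal{H})$ with no dependence on $\epsilon$. So assume $\epsilon\in(1/2,1)$, and fix (computably from $\epsilon$) a positive rational lower bound $\delta$ for $1-\epsilon$; every $O_\epsilon(1)$ below is really a function of $\delta$, hence of $\epsilon$ alone.

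The setup is verbatim that of Theorem~\ref{thr:main}: relativize $U$ to $\langle\p,s\rangle$; take a game fragment $\mathcal{F}$ for $\q$ all of whose root-to-leaf paths are winning interactions, with $\mathrm{Weight}(\p,\mathcal{F})>2^{-s}$ and $\K(\mathcal{F}|\q)=O(1)$; let $Q$ be an elementary probability realizing $\Ks(\mathcal{F})$, without loss of generality supported on fragments $\mathcal{G}$ with $\mathrm{Weight}(\mathcal{G},\p)>2^{-s}$; set $d=\max\{\d(\mathcal{F}|Q),1\}$; let $m$ and $\ell$ bound the longest path and largest action over the support of $Q$; and let $P$ be the distribution on deterministic agents (cut off at $m$ steps and $\ell$ actions) that samples each action from $\p$'s action probability at the corresponding history. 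Backwards induction again gives $\mathbf{E}_{\mathbf{g}\sim P}[\mathrm{Weight}(\mathbf{g},\mathcal{G})]>2^{-s}$ for every $\mathcal{G}$ in the support of $Q$.

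The only genuinely new step is the concentration bound. Draw $\mathbf{g}_1,\dots,\mathbf{g}_N$ i.i.d.\ from $P$ and put $X_\mathcal{G}=\frac1N\sum_{i=1}^N\mathrm{Weight}(\mathbf{g}_i,\mathcal{G})$. Since $\mathrm{Weight}(\cdot,\mathcal{G})\in[0,1]$ and the mean exceeds $2^{-s}$, Hoeffding's inequality gives $\mathrm{Pr}(X_\mathcal{G}\le\epsilon 2^{-s})<2\exp(-N(1-\epsilon)^2 2^{-2s})$. Taking $N=\ceil{d\,2^{2s+1}(1-\epsilon)^{-2}}$ makes this $<e^{-d}$ for every $d\ge1$, so there is a fixed choice of $\mathbf{g}_1,\dots,\mathbf{g}_N$ with $Q(\{\mathcal{G}:X_\mathcal{G}\le\epsilon 2^{-s}\})<e^{-d}$. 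The $Q$-test $t(\mathcal{G})=[X_\mathcal{G}\le\epsilon 2^{-s}]e^d$ then forces $X_\mathcal{F}>\epsilon 2^{-s}$, since otherwise $1.44d<\log t(\mathcal{F})\lea\d(\mathcal{F})\eqa d$, which fails for large $d$; the $1.44$-versus-$1$ gap here does not involve $\epsilon$, so ``large $d$'' is without loss of generality uniformly in $\epsilon$. Hence some $\mathbf{g}_i$ has $\mathrm{Weight}(\mathbf{g}_i,\mathcal{F})>\epsilon 2^{-s}$, and by Claim~\ref{clm} this agent wins against $\q$ with probability $>\epsilon 2^{-s}$.

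It remains to count bits, exactly as in Theorem~\ref{thr:main} but with $\log N=2s+\log d+O_\epsilon(1)$: the extra contribution $2\log(1/(1-\epsilon))+O(1)$ is $O_\epsilon(1)$, and since $N$ is computable from $d,s,\p,\epsilon$ we have $\K(N|d,s,\p)=O_\epsilon(1)$. Thus from $\K(\mathbf{g}_i|s,\p)\lea\log N+\K(\{\mathbf{g}_i\}|s,\p)$ the chain of estimates (\ref{eq2})--(\ref{eq5}) goes through unchanged except for one added $O_\epsilon(1)$, yielding $\K(\mathbf{g}_i)\lel\K(\p)+2s+\I(\langle\p,\q\rangle;\mathcal{H})+O_\epsilon(1)$. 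I do not anticipate a real obstacle; the only points requiring care are that $\epsilon$ is merely computable, so one picks $N$ from the rational lower bound $\delta$ on $1-\epsilon$, and that the hidden constants blow up as $\epsilon\to1$, which is precisely what $O_\epsilon(1)$ permits.
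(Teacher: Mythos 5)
Your proposal is correct and follows the route the paper intends: the corollary is stated without its own proof as the evident modification of the proof of Theorem~\ref{thr:main}, rerunning the Hoeffding/$Q$-test argument with threshold $\epsilon 2^{-s}$ and $N$ inflated by a $(1-\epsilon)^{-2}$ factor, whose cost is absorbed into $\log N$ as $O_\epsilon(1)$. Your additional observations (reducing $\epsilon\le 1/2$ to the theorem itself, and using a rational lower bound on $1-\epsilon$ since $\epsilon$ is only computable) are sound refinements of the same argument.
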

\begin{cor}
\label{cor:main2}
Let $\epsilon\in(0,1)$ and $p\in(0,1)$ both be computable. Let $\p$ be a probabilistic agent and $\q$ be a probabilistic environment. If $\p$ wins in the Win/No-Halt game against $\q$ with probability $> p$, then there is a deterministic agent of complexity $\lel \K(\p)+\I(\langle \p,\q\rangle;\mathcal{H})+O_{p,\epsilon}(1)$ that wins with probability $> \epsilon p$.
\end{cor}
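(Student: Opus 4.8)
The plan is to rerun the argument of Theorem \ref{thr:main} with the single change that the absolute winning threshold $\epsilon p$ replaces $2^{-s-1}$ and $p$ replaces $2^{-s}$ throughout; since $p$ and $\epsilon$ are now fixed computable reals there is no integer parameter $s$ left, and everything it contributed to the complexity count collapses into an $O_{p,\epsilon}(1)$ term. It is worth noting that for $\epsilon<1/2$ the statement is essentially immediate from Corollary \ref{cor:main}: put $s=\lceil\log_2(1/p)\rceil$, so that $p/2<2^{-s}\le p$ and hence $\p$ still wins with probability $>2^{-s}$; applying Corollary \ref{cor:main} with error parameter $2\epsilon\in(0,1)$ yields a deterministic agent winning with probability $>2\epsilon\cdot 2^{-s}>\epsilon p$ and of complexity $\lel\K(\p)+2s+\I(\langle\p,\q\rangle;\mathcal{H})+O_{2\epsilon}(1)$, which, since $s=O_p(1)$, is $\lel\K(\p)+\I(\langle\p,\q\rangle;\mathcal{H})+O_{p,\epsilon}(1)$. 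The factor-of-two slack from rounding $p$ up to $2^{-s}$ is exactly what forces $\epsilon<1/2$ in this shortcut, so to get every $\epsilon\in(0,1)$ I would run the proof directly.

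Mirroring Theorem \ref{thr:main}: relativize $U$ to $\langle\p\rangle$, fix a game fragment $\mathcal{F}$ for $\q$ all of whose root-to-leaf paths are winning interactions and with $\mathrm{Weight}(\p,\mathcal{F})>p$ and $\K(\mathcal{F}|\q)=O(1)$ (such $\mathcal{F}$ exists by taking winning plays to a large enough finite depth, as $\p$ wins with probability $>p$), let $Q$ realize $\Ks(\mathcal{F})$ with $d=\max\{\d(\mathcal{F}|Q),1\}$, restrict the support of $Q$ to fragments $\mathcal{G}$ with $\mathrm{Weight}(\mathcal{G},\p)>p$, and build the probability $P$ over bounded deterministic agents $\mathbf{g}$ with $\mathbf{E}_{\mathbf{g}\sim P}[\mathrm{Weight}(\mathbf{g},\mathcal{G})]=\mathrm{Weight}(\p,\mathcal{G})>p$ for every such $\mathcal{G}$. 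Drawing $N$ agents i.i.d.\ from $P$ and setting $X_\mathcal{G}=\frac1N\sum_{i=1}^N\mathrm{Weight}(\mathbf{g}_i,\mathcal{G})$, one-sided Hoeffding gives $\Pr(X_\mathcal{G}\le\epsilon p)\le\exp(-2N(1-\epsilon)^2p^2)$ (using $\mathrm{Weight}\in[0,1]$ and $\mathbf{E}[X_\mathcal{G}]>p$), so choosing $N=\lceil(d+1)/(2(1-\epsilon)^2p^2)\rceil=O_{p,\epsilon}(d)$ makes the right side $<e^{-d}$ and lets us fix a set of $N$ agents with $Q(\{\mathcal{G}:X_\mathcal{G}\le\epsilon p\})<e^{-d}$. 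The $Q$-test $t(\mathcal{G})=[X_\mathcal{G}\le\epsilon p]e^d$ then excludes $X_\mathcal{F}\le\epsilon p$ for all sufficiently large $d$ (which we may assume), so some $\mathbf{g}_i$ has $\mathrm{Weight}(\mathbf{g}_i,\mathcal{F})>\epsilon p$ and therefore, by Claim \ref{clm}, wins against $\q$ with probability $>\epsilon p$. Since $\log N=\log d+O_{p,\epsilon}(1)$, the complexity chain of Theorem \ref{thr:main} reproduces with ``$2s$'' replaced by ``$O_{p,\epsilon}(1)$'': $\K(\mathbf{g}_i)\lel\K(\p)+3\log d+\K(Q)+O_{p,\epsilon}(1)\lel\K(\p)+\Ks(\mathcal{F})+O_{p,\epsilon}(1)\lel\K(\p)+\I(\mathcal{F};\mathcal{H})+O_{p,\epsilon}(1)\lel\K(\p)+\I(\langle\p,\q\rangle;\mathcal{H})+O_{p,\epsilon}(1)$, using the definition of $\Ks$, Lemma \ref{lmm:ksh}, and Lemma \ref{lmm:consh} together with the fact that $\mathcal{F}$ is computable from $\langle\p,\q\rangle$ and the $O(1)$-complexity constants $p,\epsilon$.

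I do not expect a genuine obstacle: the argument is that of Theorem \ref{thr:main} with cosmetic changes, and the only thing needing care is the bookkeeping — checking that each appearance of ``$s$'' (as the summand ``$2s$'', as conditioning ``$\K(s,\p)$'', and in un-relativizing $\Ks(\mathcal{F}|\p)$ to $\Ks(\mathcal{F})$ and $\I(\mathcal{F};\mathcal{H})$ to $\I(\langle\p,\q\rangle;\mathcal{H})$) is correctly absorbed into an $O_{p,\epsilon}(1)$ term, and that the $Q$-test step still closes now that the threshold is the fixed real $\epsilon p$ rather than $2^{-s-1}$, which it does because $d$ may be taken large relative to the $O(1)$ constant hidden in $\lea$.
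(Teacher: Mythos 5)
Your proposal is correct and follows essentially the paper's own route: the paper states this corollary as a direct consequence of the machinery of Theorem \ref{thr:main}, with the dyadic parameter $s$ (now bounded in terms of the fixed computable $p$) and the Hoeffding/$Q$-test constants absorbed into $O_{p,\epsilon}(1)$, which is exactly what your rerun of that proof does. Your side remark is also accurate: a purely black-box application of Corollary \ref{cor:main} loses a factor of $2$ from rounding $p$ to a power of two and hence only covers $\epsilon<1/2$, so redoing the Hoeffding step with threshold $\epsilon p$ is the right way to get all $\epsilon\in(0,1)$.
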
 \newpage

 \section{Computability of Environments}
\label{sec:ue}
\begin{clm}
Theorem \ref{thr:main} and Corollaries \ref{cor:main} and \ref{cor:main2} also apply to probabilistic environments $\q$ with lower computable probabilities since $\K(\mathcal{F}|\p,s,\q)=O(1)$, where $\langle \p,s,\q\rangle$ consists of a program to compute $\p$, the number $s$, and a program to lower compute $\q$. This is because one lower enumerates the probabilities of $\q$ until one can find the corresponding game fragment $\mathcal{F}$ such that $\mathrm{Weight}(\p,\mathcal{F})>2^{-s}$.
\end{clm}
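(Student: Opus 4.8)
The plan is to isolate the one spot in the proof of Theorem~\ref{thr:main} that uses computability of $\q$, namely the opening line where one produces a game fragment $\mathcal{F}$ all of whose root-to-leaf paths are winning interactions, whose even-level rational weights are lower bounds for the corresponding response probabilities of $\q$, with $\mathrm{Weight}(\p,\mathcal{F})>2^{-s}$, and satisfying $\K(\mathcal{F}\mid\p,s,\q)=O(1)$. Everything downstream---the probability $Q$ realizing $\Ks(\mathcal{F})$, the distribution $P$ over finite deterministic agents, the Hoeffding estimate, the $Q$-test contradiction, and the concluding chain of inequalities \ref{eq2}--\ref{eq5}---refers to $\q$ only through $\mathcal{F}$ and, at the very end, through Claim~\ref{clm}; and Claim~\ref{clm} is a purely measure-theoretic statement that is insensitive to how $\q$ is presented. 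So it suffices to produce, from a program to lower compute $\q$ together with $\p$ and $s$, a constant-size total algorithm outputting such an $\mathcal{F}$; I would then relativize the universal machine to $\langle\p,s,\q\rangle$ with the lower-computing program for $\q$ on the auxiliary tape and re-run the proof of Theorem~\ref{thr:main} verbatim, and likewise for Corollaries~\ref{cor:main} and~\ref{cor:main2}, which are derived from it.

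The algorithm dovetails two processes. First, it lower-enumerates $\q$'s response probabilities, including its halting probability, which is a $\Sigma_1$ process since the program lower-computes $\q$. Second, it enumerates all finite ``structurally winning'' trees $\mathcal{G}$---every leaf corresponding to $\q$ halting---together with all rational labelings of their even-level edges subject to the sibling-sum constraint, and outputs the first $\mathcal{G}$ for which (i) each even-level rational label has been certified to lie below the matching $\q$-probability by the lower-enumeration and (ii) $\mathrm{Weight}(\p,\mathcal{G})$, a real approximable to any precision because $\p$'s action probabilities are uniformly computable, has been certified to exceed $2^{-s}$. This is a constant-size procedure provided it halts, which it does: the probability that $\p$ wins against $\q$ equals the supremum, over finite structurally-winning sub-trees of the full game tree carrying $\q$'s \emph{exact} probabilities, of their weights against $\p$; since this supremum is $>2^{-s}$ by hypothesis, some finite such sub-tree already has exact weight $>2^{-s}$, and because that inequality is \emph{strict} we may lower each of $\q$'s exact probabilities to a rational close enough that the weight still exceeds $2^{-s}$ (decreasing labels keeps the sibling-sum constraint); and because $\q$ is lower computable, every such strict rational lower bound is eventually certified, so this labeled sub-tree is eventually produced and the search terminates. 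This gives $\K(\mathcal{F}\mid\p,s,\q)=O(1)$, and the only change to the final bound is cosmetic: $\I(\langle\p,\q\rangle;\mathcal{H})$ now refers to a lower-computing program for $\q$, exactly as the statement's parenthetical asserts, and Lemma~\ref{lmm:consh} still applies because $\mathcal{F}$ is computable from $\langle\p,s,\q\rangle$.

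The main obstacle is precisely this termination-with-slack argument: one must genuinely use the strictness of the hypothesis $\p(\mathrm{win})>2^{-s}$ to trade $\q$'s exact response probabilities for certifiable rational lower bounds without losing the inequality, and one must note that ``this leaf is winning'' is a positive ($\Sigma_1$) property for a lower-computable environment---``$\q$ halts here with positive probability'' is itself lower-enumerable---so that the candidate fragments can be enumerated effectively. Once these points are in place, no further change to the machinery of Theorem~\ref{thr:main} or its corollaries is needed.
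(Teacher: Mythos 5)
Your proposal is correct and follows essentially the same route as the paper, whose entire justification is the one-line argument embedded in the claim itself: lower-enumerate $\q$'s probabilities until a fragment $\mathcal{F}$ with $\mathrm{Weight}(\p,\mathcal{F})>2^{-s}$ is certified, giving $\K(\mathcal{F}\mid\p,s,\q)=O(1)$ with $\q$ presented by a lower-computing program, after which the proof of Theorem~\ref{thr:main} and its corollaries runs unchanged. You simply make explicit the details the paper leaves implicit (the strictness-based termination argument and the observation that the rest of the proof touches $\q$ only through $\mathcal{F}$ and Claim~\ref{clm}).
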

 In this section, we derive the results of Theorems \ref{thr:detenv} and \ref{thr:main} with respect to uncomputable environments. We will use the following mutual information term between infinite sequences.
\begin{dff}[\cite{Levin74}]
\label{dff:infinf}
For $\alpha,\beta\in\{0,1\}^\infty$,\\ $\I(\alpha:\beta)=\log \sum_{x,y\in\{0,1\}^*}\m(x|\alpha)\m(y|\beta)2^{\I(x:y)}$.
\end{dff}
\begin{prp}
\label{prp}
    $\I(x;\mathcal{H})\lea \I(\alpha:\mathcal{H})+\K(x|\alpha)$.
\end{prp}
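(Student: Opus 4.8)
The plan is to unfold the definition of $\I(\alpha:\mathcal{H})$ and extract a single favorable term from the sum. Recall Definition \ref{dff:infinf} gives $\I(\alpha:\mathcal{H}) = \log \sum_{u,v} \m(u|\alpha)\m(v|\mathcal{H}) 2^{\I(u:v)}$. The idea is to lower-bound this sum by the single summand indexed by a well-chosen pair $(u,v)$, and then convert the resulting inequality about $\I(u:v)$ into the desired inequality about $\I(x;\mathcal{H})$.

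First I would pick $v = x$ on the $\mathcal{H}$-side: since $\m$ is a universal lower-computable semimeasure, $\m(x|\mathcal{H}) \gem 2^{-\K(x|\mathcal{H})}$, and in fact one has $-\log \m(x|\mathcal{H}) \eqa \K(x|\mathcal{H})$ up to the usual additive constant (coding theorem, relativized to $\mathcal{H}$). On the $\alpha$-side I would choose $u$ to be a shortest program for $x$ given $\alpha$, so that $\m(u|\alpha) \gem 2^{-\K(u|\alpha)} \gem 2^{-\K(x|\alpha) - O(1)}$ (since $u$ itself, having length $\K(x|\alpha)$, has $\K(u|\alpha) \lea \K(x|\alpha)$... actually more carefully I would just take $u = x$ as well and use $\m(x|\alpha) \gem 2^{-\K(x|\alpha)}$). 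Then the single term gives
$$\I(\alpha:\mathcal{H}) \gea -\K(x|\alpha) - \K(x|\mathcal{H}) + \I(x:x),$$
and since $\I(x:x) = 2\K(x) - \K(x,x) \eqa \K(x)$, this rearranges to $\I(\alpha:\mathcal{H}) + \K(x|\alpha) \gea \K(x) - \K(x|\mathcal{H}) = \I(x;\mathcal{H})$, which is exactly the claim.

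The main obstacle I anticipate is bookkeeping around the two-argument versus one-argument mutual information conventions (the excerpt defines both $\I(x:y) = \K(x)+\K(y)-\K(x,y)$ and $\I(x;y) = \K(x)-\K(x|y)$), together with making sure the relativized coding theorem $-\log\m(x|w) \eqa \K(x|w)$ is applied with a uniform constant independent of the oracle $w$ — this holds because $\m(\cdot|\cdot)$ is universal as a function of both arguments. One should also confirm that $\K(x,x) \eqa \K(x)$ and hence $\I(x:x) \eqa \K(x)$, so that the chosen diagonal term $2^{\I(x:x)}$ contributes essentially $2^{\K(x)}$; the remaining steps are then just the additive arithmetic of logarithms, absorbed into $\lea$.
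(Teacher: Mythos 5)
Your proof is correct. The paper states Proposition \ref{prp} without giving a proof, and your argument is the standard one: restrict Levin's sum in Definition \ref{dff:infinf} to the single diagonal term $u=v=x$, use the oracle-uniform coding theorem bounds $-\log\m(x|\alpha)\lea\K(x|\alpha)$ and $-\log\m(x|\mathcal{H})\lea\K(x|\mathcal{H})$ together with $\I(x:x)\eqa\K(x)$ (since $\K(x,x)\eqa\K(x)$), and rearrange to get $\I(x;\mathcal{H})=\K(x)-\K(x|\mathcal{H})\lea\I(\alpha:\mathcal{H})+\K(x|\alpha)$; the brief detour about choosing $u$ to be a shortest program for $x$ given $\alpha$ is unnecessary, and your final choice $u=x$ is exactly what is needed.
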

Now a probabilistic environment $\q$ is of the form $\N\times(\N\times\N)^*\rightarrow [0,1]$. We fix a computable function $\ell$ such that for every environment $\q$ there is an infinite sequence $\alpha$ such that $\ell(\alpha,\cdot)$ computes $\q$. Let $\ell[\q]$ be the set of all such infinite sequences $\alpha$.
\begin{dff}
\label{dff:uncompagent}
    For probabilistic environment $\q$, $\I(\q:\mathcal{H})=\inf_{\alpha\in\ell[\q]}\I(\alpha:\mathcal{H})$.
\end{dff}
\begin{thr}[Uncomputable Environments]
\label{thr:uncompprob}
    Let $\p$ be a probabilistic agent and $\q$ be a (potentially uncomputable) probabilistic environment. If $\p$ Wins in the Win/No-Halt game against $\q$ with probability $> 2^{-s}$, $s\in\N$, then there is a deterministic agent of complexity $\lel \K(\p)+2s +\I(\langle \p,\q\rangle:\mathcal{H})$ that wins with probability $> 2^{-s-1}$.
\end{thr}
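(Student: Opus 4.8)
The plan is to reduce to Theorem \ref{thr:main} by exhibiting a single infinite sequence $\alpha$ that computes $\q$, running the proof of Theorem \ref{thr:main} with the resulting ($\alpha$-computable) game fragment, and then paying for $\alpha$ with Proposition \ref{prp}. First I would note that, since $\langle\p,\q\rangle$ is computed by a fixed computable function applied to $\langle\p,\alpha\rangle$ for any $\alpha\in\ell[\q]$, the convention of Definition \ref{dff:uncompagent} gives $\I(\langle\p,\q\rangle:\mathcal{H})=\inf_{\alpha\in\ell[\q]}\I(\langle\p,\alpha\rangle:\mathcal{H})$. Fix $\alpha\in\ell[\q]$ whose value is within $O(1)$ of this infimum; such $\alpha$ exists, and the additive constant is harmless since $\lel$ absorbs it. Now $\q$ is computable from $\alpha$.

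Next I would observe, exactly as in the Claim opening Section \ref{sec:ue}, that a game fragment $\mathcal{F}$ all of whose root-to-leaf paths are winning interactions, whose even-level weights lower bound $\q$'s action probabilities, and with $\mathrm{Weight}(\p,\mathcal{F})>2^{-s}$, can be located by a fixed search that lower-enumerates $\q$ from $\alpha$; hence $\K(\mathcal{F}|\p,s,\alpha)=O(1)$. The point is that the remainder of the proof of Theorem \ref{thr:main} — the choice of $Q$ realizing $\Ks(\mathcal{F})$, the restriction of $Q$'s support to heavy fragments, the induced distribution $P$ over bounded deterministic agents, the Hoeffding estimate, the $Q$-test argument, and Claim \ref{clm} — refers to $\q$ only through $\mathcal{F}$ and never uses computability of $\q$. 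Running that argument verbatim produces a deterministic agent $\mathbf{g}$ with $\mathrm{Weight}(\mathbf{g},\mathcal{F})>2^{-s-1}$, hence winning against $\q$ with probability $>2^{-s-1}$, and, by the stochasticity bookkeeping of that proof together with Lemma \ref{lmm:ksh}, $\K(\mathbf{g})\lel\K(\p)+2s+\Ks(\mathcal{F})\lel\K(\p)+2s+\I(\mathcal{F};\mathcal{H})$.

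It then remains only to replace the last two displayed steps of that proof — which invoked Lemma \ref{lmm:consh} and the computability of $\q$ — by an application of Proposition \ref{prp}. I would apply Proposition \ref{prp} with the infinite sequence $\langle\p,\alpha\rangle$ (the finite program $\p$ bundled with $\alpha$) and $x=\mathcal{F}$, giving $\I(\mathcal{F};\mathcal{H})\lea\I(\langle\p,\alpha\rangle:\mathcal{H})+\K(\mathcal{F}|\langle\p,\alpha\rangle)$, where $\K(\mathcal{F}|\langle\p,\alpha\rangle)\lea\K(s)=O(\log s)$ since $\mathcal{F}$ is computable from $\p$, $s$, and $\alpha$. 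Combining, $\K(\mathbf{g})\lel\K(\p)+2s+\I(\langle\p,\alpha\rangle:\mathcal{H})\lel\K(\p)+2s+\I(\langle\p,\q\rangle:\mathcal{H})$ by the choice of $\alpha$, the $O(\log s)$ term being absorbed since the right-hand side is at least $2s$.

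The main obstacle — and the reason Proposition \ref{prp} must be applied to $\langle\p,\alpha\rangle$ rather than to $\alpha$ alone — is avoiding a double count of $\K(\p)$: because $\mathcal{F}$ depends on $\p$, bounding $\I(\mathcal{F};\mathcal{H})$ naively by $\I(\alpha:\mathcal{H})+\K(\mathcal{F}|\alpha)$ would reintroduce a $\K(\p|\alpha)\approx\K(\p)$ term on top of the $\K(\p)$ already present, breaking the stated bound; folding $\p$ into the side sequence repairs this. The two remaining delicate points are routine: that the infimum defining $\I(\langle\p,\q\rangle:\mathcal{H})$ need not be attained (handled above by a near-optimal $\alpha$ up to a constant), and the verification that the chain $\K(\mathbf{g})\lel\K(\p)+2s+\I(\mathcal{F};\mathcal{H})$ genuinely does not use computability of $\q$, which holds because $\q$ enters the proof of Theorem \ref{thr:main} solely via the finite object $\mathcal{F}$.
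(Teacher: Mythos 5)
Your proposal is correct and follows essentially the same route as the paper's own (much terser) proof: fix $\alpha\in\ell[\q]$ within an additive constant of the infimum, observe $\K(\mathcal{F}|\p,s,\alpha)=O(1)$, rerun the argument of Theorem \ref{thr:main} with $\q$ entering only through $\mathcal{F}$, and substitute Proposition \ref{prp} for the step that needed computability of $\q$. Your explicit care in applying Proposition \ref{prp} to the bundled sequence $\langle\p,\alpha\rangle$ to avoid double-counting $\K(\p)$ is a detail the paper leaves implicit, not a departure in approach.
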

\begin{proof}
    Using $\p$, $s$, any encoding $\alpha\in\ell[\q]$, and $\ell$, one can construct the math fragment $\mathcal{F}$ described in the proof of Theorem \ref{thr:main}. Let $\alpha\in\ell[\q]$ and $\I(\alpha:\mathcal{H})<\I(\q:\mathcal{H})+1$. Thus $\K(\mathcal{F}|\p,s,\alpha)=O(1)$. Using Proposition \ref{prp}, the definition of $\I(\q:\mathcal{H})$, and the reasoning of the proof of Theorem \ref{thr:main}, this theorem follows.
\end{proof}

The follow Theorem extends Theorem \ref{thr:detenv} to uncomputable environments. 
\begin{thr}
If probabilistic agent $\mathbf{p}$ wins against deterministic, and potentially uncomputable, environment $\mathbf{q}$ with at least probability $p$, then there is a deterministic agent of complexity $\lel\K(\mathbf{p}) -\log p + \I(\langle \mathbf{p},\mathbf{q}\rangle:\mathcal{H})$ that wins against $\mathbf{q}$.
\end{thr}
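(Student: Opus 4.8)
The plan is to mimic the alternative proof of Theorem~\ref{thr:detenv} given in Theorem~\ref{thr:newproof}: because that argument routes through the Monotone EL Theorem~\ref{thr:monel} rather than through game fragments, it already produces the sharp $-\log p$ term, and since $\q$ here is deterministic the winning interactions genuinely form a set rather than a weighted tree, so none of the probabilistic-games machinery (game fragments, Hoeffding) is needed. The single new ingredient is that $\q$ is now uncomputable, so the winning set will only be computable relative to an encoding of $\q$; Proposition~\ref{prp} will then play the role that Lemma~\ref{lmm:consh} played in the computable case.

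Concretely, I would first set $s=\ceil{-\log p}+1$ and encode $\p$ as a computable measure on $\{0,1\}^\infty$ exactly as in Theorem~\ref{thr:newproof}, so that the winning interactions against $\q$ are captured by a clopen set $G\subset\{0,1\}^\infty$ with $\p(G)>2^{-s}$. As in Section~\ref{sec:ue}, I would fix the universal computable $\ell$ with $\q=\ell(\alpha,\cdot)$ for $\alpha\in\ell[\q]$, choose $\alpha\in\ell[\q]$ with $\I(\alpha:\mathcal{H})<\I(\q:\mathcal{H})+1$, and observe that $G$ can be constructed (by truncating the interaction tree at a finite depth still carrying $\p$-mass $>2^{-s}$) from $\p$, $s$ and $\alpha$, so $\K(G\mid \p,s,\alpha)=O(1)$. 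Then I would relativize the universal machine to $\langle \p,s\rangle$, as in the proof of Theorem~\ref{thr:main}, apply the Monotone EL Theorem~\ref{thr:monel} to $G$ to obtain $y\sqsupseteq x\in G$ with $\Km(y)\lel -\log\M(G)+\I(G;\mathcal{H})$, bound $-\log\M(G)\lel s$ using that $\p$ is computable and $\p(G)>2^{-s}$, and bound $\I(G;\mathcal{H})\lea \I(\alpha:\mathcal{H})+O(1)$ using the relativized Proposition~\ref{prp} together with $\K(G\mid\alpha)=O(1)$. Un-relativizing costs $\K(\langle\p,s\rangle)\lel \K(\p)+O(\log s)$, giving $\K(y)\lel \K(\p)+s+\I(\alpha:\mathcal{H})\lel \K(\p)-\log p+\I(\langle\p,\q\rangle:\mathcal{H})$ via the choice of $\alpha$ and the definition of $\I(\q:\mathcal{H})$ applied to the pair $\langle\p,\q\rangle$. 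Finally, as in Theorem~\ref{thr:newproof}, $y$ extends some $x\in G$, so $y$ encodes the initial segment of a deterministic agent whose play against $\q$ is a winning interaction; any deterministic agent consistent with $y$ then wins against $\q$ outright (no residual probability is needed, since $\q$ is deterministic).

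The step I expect to require the most care is the bookkeeping around the relativization to $\langle\p,s\rangle$: it is precisely what prevents $\K(\p)$ from being charged twice — once through $-\log\M(G)$ and once inside the mutual-information term — and it is also where one must verify that Proposition~\ref{prp} composes cleanly with the definition of $\I(\q:\mathcal{H})$ to land on $\I(\langle\p,\q\rangle:\mathcal{H})$ rather than on $\I(\q:\mathcal{H})+\K(\p)$. A secondary point is confirming that winning against a deterministic, uncomputable $\q$ can be localized to a clopen $G$ computable from $\alpha$, i.e.\ that only finitely much of $\alpha$ and finitely many rounds are needed to certify $\p(G)>2^{-s}$; but this is the same truncation already used in Theorem~\ref{thr:newproof}, now carried out relative to $\alpha$.
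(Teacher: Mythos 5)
Your proposal is correct (at the paper's level of rigor), but it takes a different route from the paper's own proof of this theorem. The paper disposes of this statement in one line, by invoking the reasoning of Theorem~\ref{thr:uncompprob} together with the \emph{original} proof of Theorem~\ref{thr:detenv} from \cite{Epstein23}, i.e.\ the stochastic-process argument, carried out relative to an encoding $\alpha\in\ell[\q]$ chosen with $\I(\alpha:\mathcal{H})<\I(\q:\mathcal{H})+1$. You instead graft the uncomputable-environment machinery (choice of $\alpha$, Proposition~\ref{prp}) onto the in-paper alternative proof of Theorem~\ref{thr:newproof}, so your derandomization engine is the Monotone EL Theorem~\ref{thr:monel} rather than the external stochastic-process proof. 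The skeleton is the same as the paper intends --- build the clopen winning set $G$ with $\p(G)>2^{-s}$ computably from $\p,s,\alpha$, extract a $y\sqsupseteq x\in G$ of small complexity, read off a deterministic agent that, against the deterministic $\q$, reproduces the winning transcript --- but your version has the advantage of being self-contained within this paper, at the cost of redoing the bookkeeping that \cite{Epstein23} already packages. Two points you rightly flag deserve the care you promise them: (i) you only have $\K(G\mid\p,s,\alpha)=O(1)$, not $\K(G\mid\alpha)=O(1)$, so Proposition~\ref{prp} must indeed be applied in the machine relativized to $\langle\p,s\rangle$ (or one accepts the direct bound $-\log\M(G)\lea\K(\p)+s$ as in Theorem~\ref{thr:newproof} and applies Proposition~\ref{prp} with $\K(G\mid\alpha')$ for a sequence $\alpha'$ joining $\p,s,\alpha$), exactly to avoid charging $\K(\p)$ twice; and (ii) the final passage from $\I(\alpha:\mathcal{H})$ to $\I(\langle\p,\q\rangle:\mathcal{H})$ rests on the (unstated) extension of Definition~\ref{dff:uncompagent} to pairs and on conservation of information for infinite sequences --- the paper's own proofs of Theorem~\ref{thr:uncompprob} and of this theorem are equally loose there, so this is not a gap specific to your argument.
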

\begin{proof}
This follows from using the same reasoning as the proof for Theorem \ref{thr:uncompprob} and the proof of Theorem \ref{thr:detenv} found in \cite{Epstein23}. 
\end{proof}

\section{Revisiting Crete}
\label{sec:crete}
Suppose the minatour has gotten fed up with the hero, who can find the exit using a very small amount of information. The minatour decides to use chance to his advantage. At every room the minatour computes a probability over all possible mappings of numbers to doors and selects a mapping at random. This probability is a functions of all the hero's previous actions. However due to derandomization of probabilistic games, the hero can achieve the following results. Let $c$ be the number of corridors and $d$ be the number of doors in the goal room.
\begin{thr}
     Given any labyrinth and probabilistic minatour $(L,M)$, there is a deterministic hero $\p$ of complexity $\K(\p)\lel 2\log (c/d)+\I(\langle L,M\rangle;\mathcal{H})$ that can find the goal room with probability greater than $d/4.03c$.
\end{thr}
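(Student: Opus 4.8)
The plan is to reduce the Crete theorem to Theorem~\ref{thr:main} (or Corollary~\ref{cor:main}) by exhibiting an explicit probabilistic hero $\p$ that wins the associated Win/No-Halt game with probability at least $2^{-s}$ for a suitable $s$ with $2^{-s}$ of order $d/c$. First I would set up the game: the agent $\p$ is the hero, the environment $\q=(L,M)$ is the labyrinth together with the probabilistic minotaur, each round the environment announces the number $n$ of corridors at the current room, the hero picks a number in $\{1,\dots,n\}$, and the environment (after sampling its random mapping as a function of the hero's past actions) moves the hero accordingly; the environment halts precisely when the hero is in the goal room, so ``winning'' means ``reaching the exit.'' I would take $\p$ to be the uniform random hero: at a room with $n$ corridors, choose each corridor with probability $1/n$. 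Crucially, $\K(\p)=O(1)$, since the uniform strategy needs no description of $L$ or $M$.

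Next I would bound the win probability of this uniform hero from below. The key observation is that against \emph{any} strategy of the minotaur, the hero's induced walk on the labyrinth graph is exactly a simple random walk: at each step the hero moves to a uniformly random neighbor, and the minotaur's adversarial relabeling of doors is irrelevant because the hero's choice is permutation-invariant. Hence after enough turns (more than the mixing time of the graph, which I would assume as in the introductory discussion) the hero's location is approximately distributed as the stationary distribution $\pi$, which for a simple random walk on a graph assigns the goal room probability proportional to its degree $d$; more precisely $\pi(\text{goal})=d/2|E|\ge d/bc$ for an appropriate constant $b$, matching the heuristic in Section~1. So the uniform hero wins with probability $p\ge d/bc$, i.e.\ $-\log p\le \log(c/d)+\log b$.

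Then I would simply invoke the probabilistic-games derandomization result. Applying Corollary~\ref{cor:main} (with $\epsilon$ chosen so that the constant $b$ and the factor $\epsilon$ combine to give $4.03$, using that $2\cdot 2.015 = 4.03$ — the precise constant is a routine matter of choosing $\epsilon$ against the random-walk constant $b$) to $\p$ and $\q=(L,M)$, together with $\K(\p)=O(1)$, yields a deterministic hero $\p'$ with
\[
\K(\p')\lel 2(-\log p)+\I(\langle L,M\rangle;\mathcal{H})\lel 2\log(c/d)+\I(\langle L,M\rangle;\mathcal{H}),
\]
winning with probability greater than $\epsilon p > d/4.03c$. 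The factor $2$ in front of $\log(c/d)$ is exactly the $2s$ term of Theorem~\ref{thr:main}, reflecting that this is the probabilistic-environment bound rather than the deterministic-environment bound of Theorem~\ref{thr:detenv}.

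The main obstacle is the random-walk/mixing-time step: one must argue carefully that the adversarial, history-dependent relabeling performed by the minotaur genuinely cannot bias a uniform-choice hero, and that ``a very large number of turns'' can be taken to exceed the mixing time so that the stationary lower bound $d/bc$ on the goal-room probability is valid. Everything after that — translating the Win/No-Halt framework, checking $\K(\p)=O(1)$, and tuning $\epsilon$ to land the constant $4.03$ — is bookkeeping that follows directly from Corollary~\ref{cor:main}.
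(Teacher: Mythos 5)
Your proposal is correct and follows essentially the same route as the paper: take the uniform-random hero (constant complexity), use the random-walk/stationary-distribution bound $\gtrsim d/2c$ on the probability of being at the goal room after sufficiently many turns (the paper simply assumes $N$ exceeds the mixing time so this probability is $>d/2.01c$), and apply Corollary~\ref{cor:main} with $\epsilon$ tuned so that the rounding of $-\log p$ to $s$ and the constant $\epsilon$ yield the probability $d/4.03c$ and the $2s\lel 2\log(c/d)$ complexity term. Your extra remark that the minotaur's history-dependent relabeling cannot bias a permutation-invariant uniform hero is a justification the paper leaves implicit, but the argument is otherwise the same.
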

\begin{proof}
We assume the number of turns $N$ is long enough such that given a random walk of $N$ steps, the probability of being a the goal room is $>d/2.01c$. Let $s\in\N$ be the largest number such that $2^{-s}<d/2.01c$. Applying Corollary \ref{cor:main}, where $\epsilon = 4.02/4.03$ and where the probabilistic agent $\p$ chooses each door with uniform probability, on gets a deterministic agent of complexity $\lel \K(\p)+2s+\I(\langle \p,s,(L,M)\rangle;\mathcal{H})+O_\epsilon(1)\lel 2\log(c/d)+\I(\langle L,M\rangle;\mathcal{H})$. This agent wins with probability $>\epsilon 2^{-s}\geq \epsilon d/4.02c\geq d/4.03c$. 
\end{proof}

\section{{\sc Even-Odds}}
\label{sec:evenodds}
We define the following game, entitled {\sc Even-Odds}. There are $N$ rounds. The player starts out with a score of 0. At the start of each round, the environment $\q$ secretly records a bit $e_i\in\{0,1\}$. The player sends $\q$ a bit $b_i$ and the environment responds with $e_i$. The agent gets a point if $e_i\oplus b_i=1$. Otherwise the agent loses a point. The environment $\q$  can be any probabilistic algorithm. 

\begin{thr}
\label{thr:EvenOdds}
For large enough number of rounds, $N$, given any probabilistic environment $\q$ there is a deterministic agent $\p$ of complexity $\K(\p)\lel \I(\q;\mathcal{H})$ that can achieve a score of $\sqrt{N}$ with probability $>1/6.5$.
\end{thr}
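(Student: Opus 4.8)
The plan is to let the random agent ignore the environment entirely and play an independent uniform bit every round, and then apply Corollary \ref{cor:main2}. Write $\p_u$ for this agent, so $\K(\p_u)\eqa 0$. The structural point is that in {\sc Even-Odds} the bit $e_i$ is committed by $\q$ \emph{before} $\p_u$ transmits $b_i$, so no matter what $\q$ does the increments $e_i\oplus b_i$ are i.i.d.\ fair coin flips and the agent's running score is a simple symmetric random walk $S_N=\sum_{i=1}^N\xi_i$ with $\xi_i$ uniform on $\{-1,+1\}$. By the central limit theorem --- with Berry--Esseen supplying an explicit convergence rate and a continuity correction absorbing the parity constraint on $S_N$ --- we have $\Pr[S_N\ge\sqrt N]\to 1-\Phi(1)$ as $N\to\infty$, where $1-\Phi(1)\approx 0.1587$ is the standard Gaussian upper tail at $1$. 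Since $1/6.5=2/13\approx 0.1538<0.1587$, we may fix a rational $p$ with $1/6.5<p<1-\Phi(1)$ (say $p=0.158$) and then a threshold $N_0$ so that for all $N\ge N_0$ and all $\q$, agent $\p_u$ reaches final score $\ge\ceil{\sqrt N}$ against $\q$ with probability $>p$.

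To invoke Corollary \ref{cor:main2} I would recast the $N$-round game as a Win/No-Halt game. Given $\q$, let $\q'$ be the environment that internally runs $\q$, maintains the cumulative score over the $N$ rounds, and on the last round halts --- declaring $\p$ the winner --- exactly when the final score is $\ge\ceil{\sqrt N}$, and otherwise responds with $0$ forever without ever halting. The number of rounds $N$ is part of the data of an {\sc Even-Odds} environment, hence computable from $\q$, so $\K(\q'|\q)\eqa 0$; by Lemma \ref{lmm:consh} this yields $\I(\q';\mathcal{H})\lea\I(\q;\mathcal{H})$. By the previous paragraph $\p_u$ wins the Win/No-Halt game against $\q'$ with probability $>p$, and the probability that any deterministic agent wins against $\q'$ equals the probability that it achieves score $\ge\ceil{\sqrt N}$ in {\sc Even-Odds} against $\q$.

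Finally, apply Corollary \ref{cor:main2} to $\p_u$ and $\q'$ with the above $p$ together with an $\epsilon$ small enough that $\epsilon p>1/6.5$ (e.g.\ $\epsilon=0.975$, so $\epsilon p>0.154>2/13$). As $p$ and $\epsilon$ are now absolute constants, the corollary produces a deterministic agent $\p$ with
\[
\K(\p)\lel \K(\p_u)+\I(\langle\p_u,\q'\rangle;\mathcal{H})+O_{p,\epsilon}(1)\lea \I(\q';\mathcal{H})\lea \I(\q;\mathcal{H}),
\]
using $\K(\p_u)\eqa 0$ and Lemma \ref{lmm:consh} (since $\langle\p_u,\q'\rangle$ is computable from $\q'$); and this $\p$ wins against $\q'$ with probability $>\epsilon p>1/6.5$, i.e.\ it reaches score $\sqrt N$ against $\q$ with probability $>1/6.5$, as claimed.

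The only genuine difficulty is numerical rather than conceptual: the Gaussian ceiling $1-\Phi(1)\approx 0.1587$ exceeds $1/6.5\approx 0.1538$ by only about $3\%$, so one must check that enough slack remains to cover simultaneously the multiplicative loss $\epsilon$ from the derandomization corollary and the finite-$N$ error coming from Berry--Esseen and the parity/continuity correction. One should also confirm that the Win/No-Halt reduction leaks nothing into the information term --- exactly the observation that $N$ is recoverable from $\q$ --- so that no stray $\K(N)$ term survives in the final estimate.
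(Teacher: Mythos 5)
Your proposal is correct and follows essentially the same route as the paper: a uniform-random agent $\p'$ of $O(1)$ complexity, the central limit theorem to get a winning probability above a rational $p$ slightly exceeding $1/6.5$, a reduction to a Win/No-Halt game whose environment is computable from $\q$, and an application of Corollary \ref{cor:main2} (via Lemma \ref{lmm:consh}) to conclude $\K(\p)\lel\I(\q;\mathcal{H})$. Your treatment is just more explicit about the i.i.d.\ structure of the increments, the Berry--Esseen/parity details, the choice of constants ($p=0.158$, $\epsilon=0.975$ versus the paper's $p=1/6.4$, $\epsilon=6.4/6.5$), and the fact that $N$ is recoverable from $\q$ so no $\K(N)$ term leaks into the bound.
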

\begin{proof}
We describe a probabilistic agent $\p'$. At round $i$, $\p'$ submits 0 with probability 1/2. Otherwise it submits 1. By the central limit theorem, for large enough $N$, the score of the probabilistic agent divided by $\sqrt{N}$ is $S\sim\mathcal{N}(0,1)$. So the probability that a probabilistic agent gets a score of at least $\sqrt{N}$ is greater than $1/6.4$. One can construct a Win/No-Halt game where the player $\p'$ wins if it has a score of at least $\sqrt{N}$. Thus $\p'$ wins with probability greater than $p=1/6.4$. Thus by Corollary \ref{cor:main2}, with $\epsilon = 6.4/6.5$, there exists a deterministic agent $\p$ that can beat $\q$ with complexity
\begin{align*}
\K(\p)&\lel \K(\p')+\I(\langle \p',\q\rangle;\mathcal{H})\lel \I(\q;\mathcal{H}).
\end{align*}
Furthermore $\p$ wins with probability $> 1/6.5$.
\end{proof}

\section{Resource Bounded {\sc Even-Odds}}
So far the results have been on resource-free Kolmogorov complexity; the time to construct the players is not taken into account. However the results of this paper can be reinterpreted in terms of resource-bounded Kolmogorov complexity. Thus an environment playing {\sc Even-Odds} in polynomial time implies the existence of a player that can be constructed in polynomial time that achieves a certain score. I could not find results in the literature with regard to resource bounded Kolmogorv complexity of players of games. 
\begin{dff}[Time Bounded Kolmogorov Complexity]
For $x\in\FS$,
    $\K^m(x)=\{\|y\|:U(y)=x \textrm{ in }m\textrm{ steps}\}$. For deterministic player $\p$, $\K^m(\p) = \min\{\K^m(x):U_x(\cdot) \textrm{ computes }\p\textrm{ in $O(n)$ time}\}$.
\end{dff}
\begin{asm}
\label{asm}
\textbf{Crypto} is the assumption that there exists a language in $\mathbf{DTIME}(2^{O(n)})$ that does not have size $2^{o(n)}$ circuits with $\Sigma_2^p$ gates. 
\end{asm}
\begin{dff}
$\mathbf{FP}'=\{f:f\in\mathbf{FP}\textrm{ and }\|x\|=\|f(x)\|\}$.
\end{dff}
\begin{dff}
For $\zeta\in\mathbf{FP}'$ we say that $\zeta$ samples $D\subset\BT^N$ with probability $\gamma$ if $|\BT^N\cap \zeta^{-1}(D)|/2^N>\gamma$.
\end{dff}
\begin{dff}
    For $L\subseteq\FS$, $L_N=\{ x: x\in L,\|x\|=N\}$.
\end{dff}
\begin{thr}[Resource EL Theorem, \cite{Epstein23}]
\label{thr:resel}
Assume \textbf{Crypto}. Let $L\in\mathbf{P}$, $\zeta\in\mathbf{FP}'$, and assume $\zeta$ samples $L_N$ with probability $\delta_N$. Then for some polynomial $p$,
$$\min_{x\in L_N}\mathbf{K}^{p(N)}(x) < -\log\delta_N+ O(\log N).$$
\end{thr}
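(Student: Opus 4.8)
The plan is to reduce the claim to producing a single low-complexity element of a dense, polynomial-time-decidable set, and then to hit that set with a seed-length-optimal hitting-set generator obtained from \textbf{Crypto} via the hardness-versus-randomness paradigm. Put $k=\lceil-\log\delta_N\rceil$ and $S=\BT^N\cap\zeta^{-1}(L_N)$. Since $L\in\mathbf{P}$ and $\zeta\in\mathbf{FP}'$, we have $S\in\mathbf{P}$ and, by hypothesis, $|S|\ge\delta_N 2^N\ge 2^{N-k-1}$; and because $\zeta$ is length-preserving and poly-time, any $y\in S$ yields $\zeta(y)\in L_N$ with $\K^{q(N)}(\zeta(y))\le\K^{p(N)}(y)+O(1)$ for a slightly larger polynomial $q$. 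So it suffices to exhibit $y\in S$ with $\K^{p(N)}(y)<k+O(\log N)$; the fixed programs for $L$, for $\zeta$, and all the polynomial clocks cost only $O(\log N)$ bits and are carried silently inside that term.

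To produce such a $y$ I would invoke the $\Sigma_2^p$-flavoured hardness-to-pseudorandomness construction (the Klivans--van Melkebeek / Shaltiel--Umans style machinery): \textbf{Crypto} --- no $2^{o(n)}$-size $\Sigma_2^p$-circuits for some language in $\mathbf{DTIME}(2^{O(n)})$ --- yields, for a suitable polynomial $p$, a generator $H:\BT^{k+O(\log N)}\to\BT^N$ computable in time $p(N)$ whose image meets every set of density $\ge 2^{-k}$ that is decided by a $\mathrm{poly}(N)$-size circuit (even one equipped with a $\Sigma_2^p$ gate). Feeding $H$ the circuit for $S$ produces a seed $z$ with $|z|\le k+O(\log N)$ and $H(z)\in S$, whence $\K^{p(N)}(H(z))\le|z|+O(1)$ and $x=\zeta(H(z))\in L_N$ witnesses the bound. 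The reason $H$ must fool $\Sigma_2^p$-circuits rather than plain ones is presumably that the correctness analysis of a \emph{seed-length-optimal} hitting-set generator is itself a counting/$\mathrm{co}\mathbf{NP}$ argument (``if $\mathrm{Image}(H)$ misses the density-$2^{-k}$ set $S$ then $S$ is large, and this largeness is certified by a small $\Sigma_2^p$-circuit contradicting the hardness of the Crypto language''); this is exactly the role played by the information term $\I(G:\mathcal{H})$ in the unbounded Monotone EL Theorem \ref{thr:monel}, and under \textbf{Crypto} it collapses to $O(\log N)$.

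The step I expect to be the main obstacle is pinning the \emph{coefficient} on $-\log\delta_N$ at exactly $1$. The naive route --- amplify ``$y\in S$'' to ``$\exists i\le\ell:\ y_i\in S$'' on $(\BT^N)^\ell$ with $\ell=\Theta(2^k)$, making the density exceed $\tfrac{3}{4}$, then hit it with a constant-error PRG --- spends $\Theta(k)$ extra bits both in the circuit size the PRG must fool and in the index $\lceil\log\ell\rceil$, so it only delivers $\K^{p(N)}(y)=O(k)+O(\log N)$, which is already vacuous once $\delta_N$ is as small as $2^{-N/2}$. Recovering the ``$+1$'' coefficient forces the use of a genuinely seed-length-optimal hitting-set generator (equivalently, a disperser for $\mathbf{P}$-sets with seed length $\log(1/\delta)+O(\log N)$), and the two delicate checks will be: (i) that the precise form of hardness granted by \textbf{Crypto} actually suffices to run such a construction for all densities down to $\delta_N=2^{-o(N)}$, and (ii) that every predicate $H$ is asked to fool --- including those appearing inside its own correctness proof --- stays inside the $\Sigma_2^p$-circuit class. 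Modulo these, the conclusion is the chain $\K^{p(N)}(x)\le\K^{q(N)}(y)+O(1)\le|z|+O(1)\le k+O(\log N)=-\log\delta_N+O(\log N)$.
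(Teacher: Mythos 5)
Note first that the paper does not prove Theorem \ref{thr:resel} at all: it is imported verbatim from \cite{Epstein23}, with the accompanying remark pointing to \cite{AntunesFo09,LuOlZi22} for the machinery behind it. So your proposal can only be judged on its own merits, and there it has a genuine gap. Your opening reduction is fine (pass to $S=\BT^N\cap\zeta^{-1}(L_N)\in\mathbf{P}$ of density $\geq 2^{-k}$ and push complexity through $\zeta$), but the entire quantitative content is then delegated to the asserted existence, under \textbf{Crypto}, of a $p(N)$-time generator $H:\BT^{k+O(\log N)}\to\BT^N$ whose image hits \emph{every} poly-size-circuit set of density $2^{-k}$, simultaneously for all $k$ up to roughly $N$ and with coefficient exactly $1$ on $k$. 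That object is not an off-the-shelf consequence of the Klivans--van Melkebeek / Shaltiel--Umans hardness-versus-randomness machinery you invoke: those constructions give seed length $O(\log N)$ only for inverse-polynomial density, and for density $2^{-k}$ the standard amplification loses at least a constant factor on $k$ --- exactly the failure mode you yourself describe as making the bound vacuous. Moreover, such a seed-length-optimal hitting-set generator is essentially \emph{equivalent} to the theorem being proved (its image is precisely a list of $2^{k+O(\log N)}$ strings of $p(N)$-time-bounded complexity $k+O(\log N)$ meeting every dense $\mathbf{P}$-set), so postulating it is close to circular; the two ``delicate checks'' you defer are the whole proof.

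The route indicated by the paper's own citations avoids needing any exponentially strong generator. There the $-\log\delta_N$ bits enter as a \emph{hash value}, not a seed: one isolates an element of the dense set via a hash function into $\approx k+O(\log N)$ bits, reconstructs it by an approximate-counting/search procedure living in the second level of the polynomial hierarchy, and uses \textbf{Crypto} only to supply an $O(\log N)$-seed pseudorandom generator secure against poly-size $\Sigma_2^p$-circuits, which derandomizes the choice of hash function and the reconstruction (this is the Antunes--Fortnow coding-theorem argument, sharpened in \cite{LuOlZi22}; it is also why the assumption involves $\Sigma_2^p$ gates and why $p$ depends polynomially on the running times of $\zeta$ and the decider for $L$). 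If you want a self-contained proof, you should either carry out that hashing argument for the set $S$, or prove --- rather than assume --- the seed-length-optimal hitting property you need; as written, the central step is missing.
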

\begin{rmk}
    Due to \cite{AntunesFo09,LuOlZi22}, the polynomial function $p$ is a polynomial function of the running times of $\zeta$ and the algorithm that decides $L$. Furthermore this polynomial has a small degree and small coefficients.
\end{rmk}
\begin{thr}
\label{thr:Kpp}
Assume \textbf{Crypto}. Let $\q$ be a deterministic polynomial time computable environment that plays {\sc Even-Odds} continuously. There is a polynomial function $p$ where for large enough $N\in\N$, there exists a deterministic agent $\p$ that can achieve a score of $\sqrt{N}$ after $N$ turns, with $\K^{p(N)}(\p)=O(\log N)+O_\q(1)$.
\end{thr}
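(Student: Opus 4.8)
The plan is to mirror the derandomization argument used for the resource-free {\sc Even-Odds} result (Theorem \ref{thr:EvenOdds}), but replace the appeal to Corollary \ref{cor:main2} with the Resource EL Theorem \ref{thr:resel}. First I would fix the probabilistic agent $\p'$ that, at each of the $N$ rounds, submits a uniformly random bit; by the central limit theorem, for large enough $N$ the normalized score of $\p'$ against $\q$ exceeds $1$ with probability bounded below by a fixed constant $\gamma_0>1/6.4$. The key structural observation is that, since $\q$ is deterministic, the transcript of the game is a deterministic function of the agent's $N$-bit sequence of moves: feeding a string $x\in\BT^N$ to $\q$ round by round produces a well-defined play, and whether that play achieves score $\geq\sqrt N$ is a polynomial-time (in $N$) predicate of $x$, since $\q$ runs in polynomial time. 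So let $L\subseteq\FS$ be the language whose length-$N$ slice $L_N$ is exactly the set of winning move-sequences; then $L\in\mathbf{P}$ (using $\q$'s polytime algorithm as a subroutine), and $|L_N|/2^N>\gamma_0$.

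Next I would exhibit the sampler. Take $\zeta\in\mathbf{FP}'$ to be the identity map on $\BT^N$; then $\zeta$ samples $L_N$ with probability $\delta_N=|L_N|/2^N>\gamma_0$, a constant bounded away from $0$. Applying the Resource EL Theorem \ref{thr:resel} (which is where \textbf{Crypto} is invoked) yields a polynomial $p$ and, for each large $N$, a winning sequence $x^\ast\in L_N$ with $\K^{p(N)}(x^\ast)<-\log\delta_N+O(\log N)=O(\log N)$, the constant $-\log\delta_N$ being absorbed into the $O(\log N)$ term (or kept as $O_\q(1)$). By the remark following Theorem \ref{thr:resel}, $p$ depends only polynomially on the running times of $\zeta$ and of the decider for $L$, hence on $\q$'s running time, with small degree and coefficients.

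Finally I would turn $x^\ast$ into a deterministic agent $\p$: the agent simply has $x^\ast$ hard-coded (computed in $p(N)$ steps from its short description) and on round $i$ outputs the $i$-th bit of $x^\ast$, ignoring the environment's responses. This is computable in $O(N)$ time given $x^\ast$, so by the definition of $\K^m(\p)$ we get $\K^{p'(N)}(\p)=O(\log N)+O_\q(1)$ for a suitable polynomial $p'$ slightly larger than $p$, and by construction $\p$ achieves score $\sqrt N$ against $\q$. The main obstacle is the bookkeeping that makes $L$ genuinely a fixed language in $\mathbf{P}$: one must be careful that "large enough $N$" is uniform, that the score predicate is decidable in time polynomial in $N$ (not merely in the transcript length, which is $\Theta(N)$ here so this is fine), and that hard-coding an $N$-bit string inside an $O(\log N)$-bit program is exactly what $\K^{p(N)}$ permits — i.e. the short program produces $x^\ast$ within the time bound, rather than $x^\ast$ itself being the program. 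A secondary subtlety is confirming that the constant lower bound $\gamma_0$ on the CLT probability survives for all sufficiently large $N$, which follows from the Berry–Esseen theorem or a direct estimate, so that $\delta_N$ stays bounded away from zero and $-\log\delta_N=O_\q(1)$.
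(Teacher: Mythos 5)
Your proposal is correct and follows essentially the same route as the paper: fix the uniformly random agent, use the central limit theorem to get a constant density of winning plays, put the winning plays into a language $L\in\mathbf{P}$ decidable by simulating the polynomial-time environment $\q$, and invoke the Resource EL Theorem (under \textbf{Crypto}) to extract a winning string of time-bounded complexity $O(\log N)+O_\q(1)$, which is then hard-coded into an oblivious deterministic agent. The only difference is bookkeeping: the paper takes $L$ to consist of full $2N$-bit transcripts and therefore needs a nontrivial sampler $\zeta$ that copies the agent's odd bits and fills in $\q$'s responses, whereas you take $L$ to be the set of winning $N$-bit move sequences and use the identity sampler, which is a legitimate (and if anything cleaner) instantiation of the same theorem.
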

\begin{proof}
We associate each $x\in\BT^{2N}$ with an interaction of $N$ turns of {\sc Even-Odds}. The odd bits are the agent's actions and the even bits are $\q$'s deterministic actions. Let $L\subset\FS$ be all the interactions $x\in\BT^{2N}$ with $\q$ that results in a score greater than $\sqrt{N}$. Since $\q$ is polynomial time computable, $L\in P$. For large enough $N$, by the central limit theorem, $|L_N|2^{-N}>1/6.4$. We define a function $\zeta\in\mathbf{FP}'$, such that every odd bit is preserved and every outputted even bit is $\q$'s response to the previous even bits (agent's actions). So for large enough $N$, $\zeta$ samples $L_N$ with probability $>1/6.4$. So by Theorem \ref{thr:resel}, there exists a deterministic player $\p$ that can achieve a score greater than $\sqrt{N}$ and a polynomial function $p$ independent of $N$, for large enough turns $N$, with $$\K^{p(N)}(\p) < -\log 1/6.4 +  O(\log N)+O_\q(1)<O(\log N)+O_\q(1).$$
\end{proof}

\section{Zero-Sum Repeated Games}

In this section we generalize from the {\sc Even-Odds} game of Section \ref{sec:evenodds} to all zero-sum repeated games. A simultaneous game between two players $A$ and $B$ are defined as follows. At each turn, both players simultaneously play an action $a,b\in\N$. Each action is a function of the previous turns. Thus both $A$ and $B$ are of the form $(\N\times\N)^*\rightarrow \N$. This process continues of $N$ turns. The determination of the outcome after $N$ turns is dependent on each such game.

A \textit{Zero-Sum Repeated Game} is when the simultaneous game is a series of identical zero-sum stage games $\mathcal{G}$. The payoffs of the stage game $\mathcal{G}$ are assumed to be rationals. Each player starts with a score of 0. A zero sum stage game is when the actions of $A$ and $B$ are chosen from $\{1,\dots,n\}$. After the actions occur, each player is given a penalty or a prize. The total prizes and penalties for each player sum to 0. 
\begin{quote}
\textit{This section characterizes uncomputable, (lower)computable, and also polynomial-time computable players that are either deterministic or probabilistic.}
\end{quote}

\begin{dff}
    The $\mathcal{K}(\mathcal{G})$ constant of a zero-sum stage game $\mathcal{G}$ is equal to $c\in\R_{\geq 0}$ where the total probabilistic mass of $\mathcal{N}(0,\sigma^2)$ after $c$ is $=1/3$ and where $\sigma^2$ is the variance of the payoffs of player $A$ when playing uniforming randomly.
\end{dff}
Thus, as the prizes and penalties of the stage game $\mathcal{G}$ increases, the corresponding constant $\mathcal{K}(\mathcal{G})$ also increases.
\begin{thr}
    For repetition of zero-sum stage game $\mathcal{G}$ that has $n$ actions, over large enough turns $N$, for all computable deterministic players $B$, there is a computable deterministic player $A$ that can achieve a score greater than $\mathcal{K}(\mathcal{G})\sqrt{N}$ with complexity $\K(A) \lel \K(n)+\I(\langle B,N,\mathcal{G}\rangle;\mathcal{H})$.
\end{thr}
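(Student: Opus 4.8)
The plan is to follow the proof of Theorem~\ref{thr:EvenOdds}, replacing the fair coin flip with the uniform distribution on the $n$ actions of the stage game and the ordinary central limit theorem with a martingale version.

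First I would take the probabilistic player $A'$ that, at every turn and independently of the past, plays an action drawn uniformly from $\{1,\dots,n\}$, so that $\K(A')\lea\K(n)$. Fix a computable deterministic opponent $B$ and run the $N$-turn interaction of $A'$ against $B$. Writing $a_i$ for $A'$'s random action at turn $i$ and $b_i$ for $B$'s action, $b_i$ is a fixed function of $a_1,\dots,a_{i-1}$, hence independent of $a_i$; so, conditioned on the history, the stage payoff $u(a_i,b_i)$ is distributed as $u(U,b_i)$ with $U$ uniform on $\{1,\dots,n\}$. Decomposing $A$'s final score $S_N=\sum_{i\le N}u(a_i,b_i)$ into a martingale part (centred stage payoffs) plus a drift $\sum_{i\le N}\bar u(b_i)$, using that $\mathcal G$ is zero-sum so the drift cannot asymptotically hurt $A$, and invoking a Lindeberg-type martingale CLT, one gets that for large $N$ the law of $S_N/\sqrt N$ is close to $\mathcal N(0,\sigma^2)$, with $\sigma^2$ the variance in the definition of $\mathcal K(\mathcal G)$. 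By that definition, for all large enough $N$ the player $A'$ ends with a score above $\mathcal K(\mathcal G)\sqrt N$ with probability at least, say, $1/4$.

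Next I would cast this as a Win/No-Halt game: the environment simulates $B$, tracks $A$'s running score from the payoff table of $\mathcal G$, and halts at turn $N$ exactly when that score exceeds $\mathcal K(\mathcal G)\sqrt N$; it has an $O(1)$-size program given $\langle B,N,\mathcal G\rangle$, and $A'$ wins it with probability $>1/4$. Corollary~\ref{cor:main2} (e.g.\ with $p=1/4$, $\epsilon=1/2$) then yields a deterministic player $A$ winning with probability $>1/8>0$ — that is, since $B$ is deterministic, genuinely achieving the target score — with $\K(A)\lel\K(A')+\I(\langle A',B,N,\mathcal G\rangle;\mathcal H)$. Since $A'$ is computable from $n$ and $n$ from $\mathcal G$, Lemma~\ref{lmm:consh} lets me delete $A'$ from the information term (within the logarithmic slack of $\lel$), giving $\K(A)\lel\K(n)+\I(\langle B,N,\mathcal G\rangle;\mathcal H)$.

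The main obstacle is the middle step: making the martingale CLT uniform over all deterministic $B$ and, especially, controlling the drift $\sum_i\bar u(b_i)$ — one must show that an adversarial choice of the $b_i$ cannot drag the normalised score below the $\mathcal K(\mathcal G)$-quantile of $\mathcal N(0,\sigma^2)$ in the limit, and that the conditional variances concentrate enough for a Lindeberg condition to apply. The remaining pieces (the Win/No-Halt bookkeeping and the information-term simplification) are routine and exactly parallel to Theorems~\ref{thr:EvenOdds} and~\ref{thr:Kpp}.
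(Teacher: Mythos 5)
Your overall route is the same as the paper's: take the uniform random player $A'$ with $\K(A')\lea\K(n)$, argue via a central limit theorem that its score exceeds $\mathcal{K}(\mathcal{G})\sqrt{N}$ with constant probability, package the $N$ rounds against $B$ as a Win/No-Halt game of complexity $\K(B,N,\mathcal{G})$, derandomize, and remove $A'$ from the information term with Lemma~\ref{lmm:consh}. The only structural difference is that you invoke Corollary~\ref{cor:main2}, while the paper applies Theorem~\ref{thr:detenv} directly (the environment here is deterministic, so a probabilistic win probability of about $1/3$ immediately yields a deterministic player that wins outright, with $-\log p=O(1)$); your detour through cor:main2 is harmless for the same reason you note.

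The genuine problem is the middle step, which you correctly flag but then try to patch with the claim that ``$\mathcal{G}$ is zero-sum so the drift cannot asymptotically hurt $A$.'' That claim is false under the standard reading of zero-sum: if, say, $n=2$ and $A$'s payoffs are $u(a,b_1)\in\{+1,-2\}$ and $u(a,b_2)\in\{+2,-1\}$ (with $B$ receiving the negatives), then a deterministic $B$ that always plays $b_1$ gives the uniform player a drift of $-N/2$, which no $\sqrt{N}$-scale CLT can overcome; the uniform strategy only guarantees the column averages, not the value of the game. So no martingale/Lindeberg refinement rescues the argument in that generality --- the theorem's conclusion itself would fail. What the paper's proof actually uses (asserting that each per-turn payoff $X_i$ has mean $0$ and variance $\sigma^2$) is the reading of its ``zero-sum'' condition under which every column of $A$'s payoff matrix averages to zero against the uniform mixture, i.e.\ $\mathbf{E}_{a\sim U}[u(a,b)]=0$ for every opponent action $b$ (and the conditional variance is the same $\sigma^2$). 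Under that hypothesis your drift term vanishes identically, the centred payoffs are bounded martingale differences with constant conditional variance uniformly in $B$, and the CLT step goes through exactly as you intend; without it, the statement, not just the proof, breaks. So the fix is to adopt that hypothesis explicitly rather than to look for a stronger limit theorem.
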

\begin{proof}
Let player $A'$ play each action with uniform probability at every turn. So $\K(A')=n$. At every turn, its payoff will be a random variable $X_i$ with 0 mean and $\sigma^2$ variance. By large enough $N$, by the central limit theorem, the score of $A'$ divided by $\sqrt{N}$ is distributed according to $\mathcal{N}(0,\sigma^2)$. Thus with probability $1/3$, $A'$ will have a score greater than $\mathcal{K}(\mathcal{N})\sqrt{N}$.  One can turn $N$ rounds of play into a Win/No-Halt game where the environment has complexity $\K(B,N,\mathcal{G})$ and $A'$ wins if it has a score greater than $\mathcal{K}(\mathcal{G})\sqrt{N}$. Thus by Theorem \ref{thr:detenv}, there is a deterministic player $\mathcal{A}$ that achieves a score greater than $\mathcal{K}(\mathcal{N})\sqrt{N}$ and $\K(A)\lel \K(n)+\I(\langle B,N,\mathcal{G}\rangle;\mathcal{H}).$
\end{proof}

\begin{thr}
    For repetition of zero-sum stage game $\mathcal{G}$ that has $n$ actions, over large enough turns $N$,  for all computable probabilistic players $B$, there is a computable deterministic player $A$  with complexity $\K(A) \lel \K(n) +\I(\langle B,N,\mathcal{G}\rangle;\mathcal{H})$ that can achieve a score greater than $\mathcal{K}(\mathcal{N})\sqrt{N}$ with probability $>1/4$.
\end{thr}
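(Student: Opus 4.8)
The plan is to mirror the proof of the preceding theorem (the deterministic-player version) but replace the appeal to Theorem \ref{thr:detenv} with an appeal to Theorem \ref{thr:main}, which is the probabilistic-environment derandomization result. First I would set up the reduction to a Win/No-Halt game exactly as before: let player $A'$ play each of the $n$ actions with uniform probability at every turn, so $\K(A')\eqa \K(n)$; at turn $i$ the payoff is a random variable $X_i$ with mean $0$ and variance $\sigma^2$, and by the central limit theorem, for large enough $N$ the normalized score $\frac{1}{\sqrt N}\sum_{i=1}^N X_i$ converges in distribution to $\mathcal{N}(0,\sigma^2)$, so the probability that $A'$'s score exceeds $\mathcal{K}(\mathcal{G})\sqrt N$ is at least $1/3 - o(1) > 2^{-2}$ for large enough $N$. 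I would then package $N$ rounds of the repeated game, together with $B$, $N$, and $\mathcal{G}$, into a Win/No-Halt environment $\q$ that halts precisely when $A'$ has achieved score $> \mathcal{K}(\mathcal{G})\sqrt N$; crucially, since $B$ is now \emph{probabilistic}, this $\q$ is a genuinely probabilistic environment, which is exactly the generality Theorem \ref{thr:main} was built to handle, and $\K(\q)\eqa \K(\langle B,N,\mathcal{G}\rangle)$.

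Next I would apply Theorem \ref{thr:main} with $s=2$: since $A'$ wins against $\q$ with probability $> 2^{-2}$, there is a deterministic agent $A$ winning with probability $> 2^{-3} > 1/4$... wait — I need the claimed bound $>1/4$, so I should instead invoke Corollary \ref{cor:main} or \ref{cor:main2} to tune the success probability up arbitrarily close to the original $1/3$, picking $\epsilon$ so that $\epsilon/3 > 1/4$ (e.g.\ $\epsilon = 7/8$), at the cost of only an additive $O_\epsilon(1)=O(1)$ term in the complexity. This gives a deterministic player $A$ achieving score $>\mathcal{K}(\mathcal{G})\sqrt N$ with probability $> 1/4$ and
\begin{align*}
\K(A) &\lel \K(A') + \I(\langle A', \q\rangle;\mathcal{H}) \\
&\lel \K(n) + \I(\langle B,N,\mathcal{G}\rangle;\mathcal{H}),
\end{align*}
where the second line uses that $A'$ (being the uniform player on $n$ actions) is computable from $n$ with $O(1)$ extra bits, that $\q$ is computable from $\langle B,N,\mathcal{G}\rangle$ with $O(1)$ extra bits, and Lemma \ref{lmm:consh} to absorb these into the mutual-information term, together with the logarithmic slack in $\lel$ to swallow the $O(\log s)=O(1)$ and $O_\epsilon(1)$ terms.

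The only genuinely new ingredient compared to the deterministic-$B$ theorem is observing that a probabilistic opponent $B$ produces a probabilistic — rather than deterministic or lower-computable — Win/No-Halt environment, so one must route through Theorem \ref{thr:main} (or its corollaries) instead of Theorem \ref{thr:detenv}; everything else is a transcription. I expect the main obstacle, such as it is, to be bookkeeping: verifying that the variance $\sigma^2$ in the definition of $\mathcal{K}(\mathcal{G})$ is exactly the per-turn payoff variance of the uniform player and that the central limit theorem applies (the $X_i$ are i.i.d.\ \emph{only if} $B$'s play does not correlate across turns — but since we are lower-bounding $A'$'s winning probability and $A'$ plays uniformly and independently each turn, the payoff increments are a martingale difference sequence with fixed conditional variance $\sigma^2$ regardless of $B$'s behavior, so a martingale CLT gives the required Gaussian limit). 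I would state this martingale-CLT point explicitly since it is the one place where probabilistic $B$ could in principle cause trouble, and note that it is the same phenomenon already implicitly used in the deterministic-$B$ proof. No step requires a delicate estimate beyond what Theorem \ref{thr:main} already delivers.
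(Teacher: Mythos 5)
Your proposal matches the paper's proof essentially step for step: the uniform player $A'$ with $\K(A')$ determined by $n$, the central limit theorem giving a win probability above a constant slightly larger than $1/4$, packaging the $N$ rounds against the probabilistic $B$ as a Win/No-Halt game with a probabilistic environment, and then a derandomization corollary with a tuned $\epsilon$ plus Lemma \ref{lmm:consh} to get $\K(A)\lel \K(n)+\I(\langle B,N,\mathcal{G}\rangle;\mathcal{H})$ and win probability $>1/4$. The only cosmetic difference is that the paper cites Corollary \ref{cor:nb3} with $p=1/3.5$ and $\epsilon=3.5/4$ where you invoke Corollary \ref{cor:main2}, and your martingale-CLT caveat is a point the paper glosses over rather than a divergence in approach.
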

\begin{proof}
Let player $A'$ play each action with uniform probability at every turn. So $\K(A')=n$. At every turn, its payoff will be a random variable $X_i$ with 0 mean and $\sigma^2$ variance. By large enough $N$, by the central limit theorem, the score of $A'$ divided by $\sqrt{N}$ is distributed according to $\mathcal{N}(0,\sigma^2)$. Thus with probability $>1/3.5$, $A'$ will have a score greater than $\mathcal{K}(\mathcal{N})\sqrt{N}$. One can turn $N$ rounds of play into a Win/No-Halt game where the environment has complexity $\K(B,N,\mathcal{G})$ and $A'$ wins if it has a score greater than $\mathcal{K}(\mathcal{G})\sqrt{N}$. Thus by Corollary \ref{cor:nb3}, with $p=1/3.5$ and $\epsilon=3.5/4$ we get deterministic player $A$ of complexity $\K(A)\lel \K(n)+ \I(\langle B,N,\mathcal{G}\rangle;\mathcal{H})$ that wins probability $>1/4$.
\end{proof}
The following theorem covers uncomputable opponents of zero-sum repeated games. The information term $\I(\cdot:\mathcal{H})$ is from Definition \ref{dff:uncompagent}.
\begin{thr}
    For repetition of zero-sum stage game $\mathcal{G}$ that has $n$ actions, over large enough turns $N$, for all uncomputable deterministic players $B$, there is a computable deterministic player $A$ that can achieve a score greater than $\mathcal{K}(\mathcal{G})\sqrt{N}$ with complexity $\K(A) \lel \K(n)+\I(\langle B,N,\mathcal{G}\rangle:\mathcal{H})$.
\end{thr}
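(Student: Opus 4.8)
The plan is to reduce the uncomputable-opponent case to the uncomputable-environment machinery already established in Section~\ref{sec:ue}, exactly mirroring how the earlier zero-sum theorems reduced to Theorem~\ref{thr:detenv} and Corollary~\ref{cor:nb3}. First I would let $A'$ be the probabilistic player that plays each of the $n$ stage actions uniformly at random at every turn, so $\K(A') \eqa \K(n)$ (only the action count is needed to specify the uniform strategy). By the central limit theorem, for large enough $N$ the score of $A'$ after $N$ rounds, divided by $\sqrt{N}$, converges in distribution to $\mathcal{N}(0,\sigma^2)$, where $\sigma^2$ is the variance of $A'$'s single-stage payoff; hence by the definition of $\mathcal{K}(\mathcal{G})$ the probability that $A'$ scores more than $\mathcal{K}(\mathcal{G})\sqrt{N}$ is at least (say) $1/3.5 > 1/4$ for $N$ large.

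Next I would package $N$ rounds of play against the fixed deterministic player $B$ into a Win/No-Halt game: the environment is built from $\langle B, N, \mathcal{G}\rangle$ and halts (declares a win for the agent) precisely when the agent's accumulated score exceeds $\mathcal{K}(\mathcal{G})\sqrt{N}$. Since $B$ is deterministic (but possibly uncomputable), this environment $\q$ is deterministic and uncomputable, and $\I(\langle B,N,\mathcal{G}\rangle:\mathcal{H})$ controls $\I(\q:\mathcal{H})$ in the sense of Definition~\ref{dff:uncompagent} (any sequence $\alpha$ encoding $B$ yields, together with the computable data $N,\mathcal{G}$, an encoding of $\q$, so $\I(\q:\mathcal{H}) \lea \I(\langle B,N,\mathcal{G}\rangle:\mathcal{H})$ up to the relevant slack). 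Then I would invoke the last theorem of Section~\ref{sec:ue} (the extension of Theorem~\ref{thr:detenv} to deterministic uncomputable environments), with winning probability $p$ a fixed constant, to obtain a deterministic agent of complexity $\lel \K(A') - \log p + \I(\langle B,N,\mathcal{G}\rangle:\mathcal{H}) \lel \K(n) + \I(\langle B,N,\mathcal{G}\rangle:\mathcal{H})$, where $-\log p = O(1)$ is absorbed into the logarithmic slack. This deterministic agent, read back as a player of the zero-sum repeated game, achieves score greater than $\mathcal{K}(\mathcal{G})\sqrt{N}$ against $B$.

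The only genuinely delicate point is the bookkeeping around the information term for an uncomputable opponent: one must check that the game fragment / winning-set construction underlying the uncomputable-environment theorem can be carried out relative to any $\alpha \in \ell[\q]$ with only $O(1)$ extra description (given $N$, $\mathcal{G}$, and $n$), so that Proposition~\ref{prp} applies and the $\inf$ over encodings in Definition~\ref{dff:uncompagent} collapses to $\I(\langle B,N,\mathcal{G}\rangle:\mathcal{H})$. This is routine given that $N$, $\mathcal{G}$ are computable parameters and the score threshold $\mathcal{K}(\mathcal{G})\sqrt{N}$ is a computable function of them — the reasoning is identical to the proof of Theorem~\ref{thr:uncompprob}, so I would simply cite it. Everything else (the CLT estimate, the Win/No-Halt encoding, the constant-probability-to-$-\log p = O(1)$ absorption) is exactly as in the two preceding theorems of this section, and I would phrase the proof to parallel them.

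\begin{proof}
Let player $A'$ play each action with uniform probability at every turn, so $\K(A')\eqa\K(n)$. At every turn its payoff is a random variable with $0$ mean and variance $\sigma^2$. For large enough $N$, by the central limit theorem the score of $A'$ divided by $\sqrt{N}$ is distributed according to $\mathcal{N}(0,\sigma^2)$, so with probability $>1/3.5$ the player $A'$ achieves a score greater than $\mathcal{K}(\mathcal{G})\sqrt{N}$. One can turn $N$ rounds of play against $B$ into a Win/No-Halt game, where the (deterministic, possibly uncomputable) environment $\q$ is built from $\langle B,N,\mathcal{G}\rangle$ and $A'$ wins exactly when it scores more than $\mathcal{K}(\mathcal{G})\sqrt{N}$; since $N$ and $\mathcal{G}$ are computable parameters, $\I(\q:\mathcal{H})\lea\I(\langle B,N,\mathcal{G}\rangle:\mathcal{H})$ by Definition \ref{dff:uncompagent}. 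Applying the extension of Theorem \ref{thr:detenv} to deterministic uncomputable environments with $p=1/3.5$, and reasoning as in the proof of Theorem \ref{thr:uncompprob}, we obtain a deterministic player $A$ achieving a score greater than $\mathcal{K}(\mathcal{G})\sqrt{N}$ with
\[
\K(A)\lel \K(A')-\log p+\I(\langle B,N,\mathcal{G}\rangle:\mathcal{H})\lel \K(n)+\I(\langle B,N,\mathcal{G}\rangle:\mathcal{H}),
\]
since $-\log p=O(1)$ is absorbed into the logarithmic slack.
\end{proof}
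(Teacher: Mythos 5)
Your proposal is correct and follows essentially the same route as the paper: the paper's own proof is a one-line citation of the reasoning of Theorem \ref{thr:uncompprob} (uniform play plus the central limit theorem, packaged as a Win/No-Halt game against the environment built from $\langle B,N,\mathcal{G}\rangle$, then the uncomputable-environment derandomization with the information term handled via Proposition \ref{prp} and Definition \ref{dff:uncompagent}). Your write-up simply makes that reasoning explicit, so there is nothing substantive to add.
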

\begin{proof}
    This follows from the same reasoning as the proof of Theorem \ref{thr:uncompprob}.
\end{proof}

\begin{thr}
Assume \textbf{Crypto}. Let $B$ be deterministic polynomial time computable agent that plays zero-sum stage game $\mathcal{G}$ continuously. Assume $\mathcal{G}$ has $2^n$ actions. There is a polynomial function $p$ where for large enough $N\in\N$, there exists a deterministic agent $A$ that can achieve a score of $\mathcal{K}(\mathcal{G})\sqrt{N}$ after $N$ turns, with $\K^{p(N)}(A)=O(\log N)+O_B( 1)$.
\end{thr}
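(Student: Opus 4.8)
The plan is to mirror the structure of the proof of Theorem~\ref{thr:Kpp}, replacing the central-limit analysis for {\sc Even-Odds} with the corresponding analysis for the zero-sum stage game $\mathcal{G}$, and replacing the alphabet $\{0,1\}$ with the action set $\{1,\dots,2^n\}$, which is again naturally encoded by bit strings of length $n$. First I would fix an encoding of an $N$-turn interaction as a string $x\in\BT^{2nN}$, where the $n$ bits in each odd block encode player $A$'s action and the $n$ bits in each even block encode player $B$'s action. Define $L\subset\FS$ to be the set of all such $x\in\BT^{2nN}$ for which, when the even blocks are the actual responses of $B$ to the preceding odd blocks, the resulting score of $A$ exceeds $\mathcal{K}(\mathcal{G})\sqrt{N}$. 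Since $B$ runs in polynomial time and the payoffs of $\mathcal{G}$ are fixed rationals, membership in $L$ is decidable in polynomial time, so $L\in\mathbf{P}$.

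Next I would lower-bound the density of $L_{2nN}$. Playing each of the $2^n$ actions uniformly at random makes $A$'s per-turn payoff a mean-zero random variable of variance $\sigma^2$ (the variance appearing in the definition of $\mathcal{K}(\mathcal{G})$), and the per-turn payoffs are independent. By the central limit theorem, for large enough $N$ the normalized score $S/\sqrt{N}$ is approximately $\mathcal{N}(0,\sigma^2)$, so by the very definition of $\mathcal{K}(\mathcal{G})$ the probability that $A'$ scores above $\mathcal{K}(\mathcal{G})\sqrt{N}$ is close to $1/3$; in particular it is $>1/4$ for $N$ large. Then I would define $\zeta\in\mathbf{FP}'$ on strings of length $2nN$ that copies each odd $n$-bit block verbatim and overwrites each even $n$-bit block with $B$'s (polynomial-time computed) response to the preceding odd blocks; this is length-preserving and polynomial time, and a uniformly random input is sent into $L_{2nN}$ exactly when the induced uniform-random play of $A$ beats $B$. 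Hence $\zeta$ samples $L_{2nN}$ with probability $\delta_N>1/4$ for large $N$.

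Now I would invoke the Resource EL Theorem~\ref{thr:resel} (under \textbf{Crypto}) with this $L$ and $\zeta$: there is a polynomial $p$ — by the remark after Theorem~\ref{thr:resel}, depending only on the running times of $\zeta$ and the decision procedure for $L$, both of which are fixed polynomials once $B$ and $\mathcal{G}$ are fixed — such that some $x\in L_{2nN}$ has $\mathbf{K}^{p(2nN)}(x)<-\log\delta_N+O(\log(nN))<2+O(\log N)+O_B(1)$. Such an $x$ encodes a deterministic player $A$ (the odd blocks give $A$'s action at each history appearing along the interaction) whose score against $B$ exceeds $\mathcal{K}(\mathcal{G})\sqrt{N}$, and since $A$ need only replay a hard-coded finite table it is computable in $O(N)$ time, so $\K^{p(N)}(A)=O(\log N)+O_B(1)$ after absorbing the constant $n$ and the polynomial reparametrization into the $O(\log N)$ and $O_B(1)$ terms. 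I expect the only real subtlety — and thus the main obstacle — to be the bookkeeping for non-binary action alphabets: one must check that encoding $2^n$ actions into $n$-bit blocks keeps $L$ in $\mathbf{P}$ with a decision time independent of $N$ (it does, since $n$ and $\mathcal{G}$ are fixed), and that the extra $O(\log n)$ and $O_{n,\mathcal{G}}(1)$ slack coming from the larger alphabet is correctly folded into the claimed $O(\log N)+O_B(1)$, which it is because $n$ and $\mathcal{G}$ are constants of the problem instance.
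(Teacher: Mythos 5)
Your proposal is correct and follows essentially the same route as the paper's own proof: encode $N$-turn interactions as strings in $\BT^{2nN}$ with alternating $n$-bit blocks, observe $L\in\mathbf{P}$ from $B$'s polynomial-time computability, use the central limit theorem and the definition of $\mathcal{K}(\mathcal{G})$ to get density $>1/4$, build the length-preserving sampler $\zeta\in\mathbf{FP}'$, and apply the Resource EL Theorem. Your added bookkeeping about the $2^n$-action alphabet and the $O(N)$-time table lookup only makes explicit what the paper leaves implicit.
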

\begin{proof}
We associate each $x\in\BT^{2nN}$ with an interaction of $N$ turns of {\sc Even-Odds}. Each odd segment of $n$ bits are the agent's actions and the even segments of $n$ bits are $\q$'s deterministic actions. Let $L\subset\FS$ be all the interactions $x\in\BT^{2nN}$ with $B$ that results in a score greater than $\mathcal{K}(\mathcal{G})\sqrt{N}$. Since $B$ is polynomial time computable, $L\in P$. For large enough $N$, by the central limit theorem and the definition of $\mathcal{K}(\mathcal{G})$, for large enough $N$, $|L_N|2^{-N}>1/4$. We define a function $\zeta\in\mathbf{FP}'$, such that every odd segment of $n$ is preserved and every outputted even segment of $n$ bits is $B$'s response to the previous even digits (agent's actions). So for large enough $N$, $\zeta$ samples $L_N$ with probability $>1/4$. So by Theorem \ref{thr:resel}, there exists a deterministic player $A$ that can achieve a score greater than $\mathcal{K}(\mathcal{G})\sqrt{N}$ in large enough $N$ turns and a polynomial $p$ independent of $N$ with $$\K^{p(N)}(A) < -\log 1/4 +  O(\log N)+O_B(1)<O(\log N)+O_B(1).$$
\end{proof}

\begin{rmk}
Though this paper provides a resource bounded derandomization of zero-sum repeated games, what's noticeably absent is general resource bounded derandomization of arbitrary Win/No-Halt games. This absence is because the Resource EL Theorem \ref{thr:resel} is a statement about infinite set of strings $\{D_n\}_{n=1}^\infty$, where $D_n\subseteq\BT^n$. This form is compatible with zero-sum repeated games for any of the $n$ rounds but not so much for games like the one in Section \ref{sec:crete}, which would require an infinite set of graphs and door mappings. Furthermore, the number of possible actions is different for each turn, which is in conflict with the requirement of the Resource EL Theorem for sets of strings of the same length.
\end{rmk}

\section{New Bounds}
\label{sec:nb}
In fact, as shown in this section, the bounds of Theorems \ref{thr:detenv} and \ref{thr:main} can be improved. These new bounds use a summation of all probabilistic agent's winning probabilities, weighted by their algorithmic probabilities. A semi-agent is a probabilistic player whose action probabilities don't necessarily sum to 1 at each turn. The leftover probability represents the chance that the semi-agent freezes and does not win. A semi-agent $\p$ is modelled by the function $(\N\times\N)^*\times \N \rightarrow [0,1]$. A semi-agent $\p$ is lower computable if the probability of $\p$ performing a finite set of actions given the environment's action is lower computable. More formally, for every action $a\in\N$ of the player and response $e\in\N$ of the environment and history $h\in(\N\times\N)^*$, $\p(a,h)\in[0,1]$, $p(a,h)$ is lower computable, and $\sum_{b\in \N}\p(b,(h,(a,e))\leq \p(a,h)$. Also note that, in general, a lower computable semi-agent will not have (even lower) computable probabilities of an action given its history. Let $\mathrm{Win}(\p,\q)$ is the winning probability of semi-agent $\p$ against environment $\q$. If $\p$ is lower computable, then $\mathrm{Win}(\p,\q)$ is lower computable. The algorithmic probability of a semi-agent is $\m(\p)=\sum\{2^{-\|\ell\|}:U_\ell(\cdot)\textrm{ lower computes }\p\}$. Thus if $\p$ is not lower computable, then $\m(\p)=0$. Let $\mathcal{A}$ be the set of all lower computable semi-agents. The set $\mathcal{A}$ can be enumerated in the standard way of algorithmic information theory.

\begin{dff}
    For (deterministic or probabilistic) environment $\q$, let $\xi(\q)=\ceil{-\log \sum_{\p\in\mathcal{A}}\m(\p)\mathrm{Win}(\p,\q)}$ is a score of how hard it is for each semi-agent to win against $\q$, weighted by its algorithmic probability.
\end{dff}
\begin{rmk}
Let $\Omega = \sum\{2^{-\|p\|}:U(p)\textrm{ halts}\}$ be Chaitin's Omega and $\Omega^t = \sum\{2^{-\|p\|}:U(p)\textrm{ halts in time $t$}\}$. For a string $x$, let $BB(x)=\min \{t:\Omega^t>0.x+2^{-\|x\|}\}$. Note that $BB(x)$ is undefined if $0.x+2^{-\|x\|}>\Omega$. For $n\in \N$, let $\bb(n) = \max\{BB(x): \|x\|\leq n\}$. $\bb^{-1}(m) = \arg\min_n \{\bb(n-1)<m\leq \bb(n)\}$. Let $bb(n)=\arg\max_x\{BB(x) :\|x\|\leq n\}$.
\end{rmk}
\begin{lmm}
\label{lmm:rec}
For $n=\bb^{-1}(m)$, $\K(bb(n)|m,n)=O(1)$.
\end{lmm}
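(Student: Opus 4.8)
The plan is to unwind the definitions of $\bb$, $BB$, and $bb$ and show that the pair $(m,n)$ pins down $bb(n)$ up to a constant-size program. Recall $n = \bb^{-1}(m)$ means $\bb(n-1) < m \le \bb(n)$, and $\bb(n) = \max\{BB(x) : \|x\|\le n\}$, with $bb(n)$ the maximizing string $x$. So $bb(n)$ is some string of length at most $n$ with $BB(bb(n)) = \bb(n) \ge m$. The key observation I would exploit is that $BB(x) = \min\{t : \Omega^t > 0.x + 2^{-\|x\|}\}$ is nondecreasing-ish in the dyadic rational $0.x + 2^{-\|x\|}$: knowing that $BB(x) \ge m$ is equivalent to knowing that $\Omega^{m-1} \le 0.x + 2^{-\|x\|}$, i.e. that the rational $0.x + 2^{-\|x\|}$ lies at or above the approximation $\Omega^{m-1}$ to Chaitin's $\Omega$.

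First I would describe the algorithm that, given $m$ and $n$, reconstructs $bb(n)$. Compute $\Omega^{m-1}$ by running $U$ on all programs for $m-1$ steps (this is a finite computation once $m$ is given). Then enumerate all strings $x$ with $\|x\| \le n$; for each, check the dyadic inequality $0.x + 2^{-\|x\|} \ge \Omega^{m-1}$, which certifies $BB(x) \ge m$ (hence $BB(x)$ is defined and $\ge m$). Among all such $x$, we want the one achieving the maximum value of $BB(x)$ — but here I need to be slightly careful, since $bb(n)$ is defined as $\arg\max_x\{BB(x):\|x\|\le n\}$ and I should argue $m$ together with $n$ suffices to identify it. The cleanest route: since $\bb(n-1) < m$, every string $x$ with $\|x\| \le n-1$ has $BB(x) \le \bb(n-1) < m$, so the maximizer $bb(n)$ must have length exactly... no — rather, $bb(n)$ has $BB(bb(n)) = \bb(n) \ge m$, and any string witnessing a $BB$-value $\ge m$ among length-$\le n$ strings has length $> $ all shorter maximizers. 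So I would argue that the set $S = \{x : \|x\|\le n,\ 0.x + 2^{-\|x\|} \ge \Omega^{m-1}\}$ is nonempty (it contains $bb(n)$), and within $S$ the string with the largest $BB$-value is exactly $bb(n)$; to find it, for each $x \in S$ we can actually compute $BB(x)$ exactly (it is the first $t\ge m$ with $\Omega^t > 0.x+2^{-\|x\|}$, a halting search since $x\in S$ guarantees $0.x + 2^{-\|x\|} < \Omega$ — wait, we need strict; I would handle the boundary by noting $\Omega$ is irrational so the inequality is never an equality), and then take the $\arg\max$. This whole procedure uses only $m$, $n$, and $O(1)$ bits of program, giving $\K(bb(n)\mid m,n) = O(1)$.

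The main obstacle I anticipate is the boundary/definedness bookkeeping around $BB(x)$: making sure that for the relevant $x$ the search for $BB(x)$ actually terminates (i.e. $0.x + 2^{-\|x\|} < \Omega$ strictly, so that $\Omega^t$ eventually exceeds it), and conversely that the condition $0.x + 2^{-\|x\|} \ge \Omega^{m-1}$ correctly characterizes "$BB(x) \ge m$" including the undefined case. Irrationality of $\Omega$ resolves the equality issues, and the hypothesis $n = \bb^{-1}(m)$ (specifically $m \le \bb(n)$) guarantees $S \ne \emptyset$, so the algorithm never stalls. A secondary subtlety is whether one even needs to compute $BB(x)$ exactly for all $x\in S$, or whether $bb(n)$ can be singled out more directly; either way the cost is $O(1)$ since all parameters are supplied. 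Everything else is a routine simulation argument in the style of algorithmic information theory.
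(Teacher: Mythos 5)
Your reduction of the test ``$BB(x)\ge m$'' to the stage-wise inequality $\Omega^{m-1}\le 0.x+2^{-\|x\|}$ is fine, but the step your algorithm actually rests on is false: membership in $S=\{x:\|x\|\le n,\ 0.x+2^{-\|x\|}\ge\Omega^{m-1}\}$ does \emph{not} guarantee $0.x+2^{-\|x\|}<\Omega$. The defining inequality of $S$ only bounds the right endpoint from below, and $S$ always contains strings whose right endpoint lies above $\Omega$ (for instance $1^n$, since $\Omega<1$), for which $BB$ is undefined and your search for ``the first $t\ge m$ with $\Omega^t>0.x+2^{-\|x\|}$'' never halts. Irrationality of $\Omega$ only removes the equality case, not this one; and since ``$0.x+2^{-\|x\|}<\Omega$'' is semi-decidable but not co-semi-decidable, you also cannot first sieve $S$ down to its defined part and then maximize. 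So ``compute $BB(x)$ exactly for every $x\in S$ and take the $\arg\max$'' is not an algorithm. A second omission: even granting termination, you would need the maximizer to be unique ($BB(x1)=BB(x)$ always, and distinct length-$n$ strings can tie), and you never argue this, so it is not yet shown that the procedure outputs $bb(n)$ rather than some other tied string.

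Both defects are cured by the half of the hypothesis you noted and then dropped, namely $\bb(n-1)<m$, and this is exactly the paper's route. The paper's program does no maximization: it enumerates strings of length exactly $n$ upward from $0^n$ and outputs the \emph{first} $y$ passing the stage-$(m-1)$ test, i.e.\ the length-$n$ string with least right endpoint $\ge\Omega^{m-1}$ (such a $y$ exists and has right endpoint $<\Omega$ because $m\le\bb(n)$). Correctness is the ``parent'' argument: if $y$'s right endpoint were not the largest multiple of $2^{-n}$ below $\Omega$, then the right endpoint of $y^-$ would also lie in $[\Omega^{m-1},\Omega)$, so $BB(y^-)$ would be defined and $\ge m$, giving $\bb(n-1)\ge m$ and contradicting $n=\bb^{-1}(m)$. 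Hence $[\Omega^{m-1},\Omega)$ contains a single multiple of $2^{-n}$, so $y$ is the \emph{only} string of length $\le n$ with defined $BB$-value $\ge m$ (a shorter one would again force $\bb(n-1)\ge m$); in particular $y=bb(n)$, and definedness and uniqueness of the $\arg\max$ come in one stroke. In short, you searched $S$ from the top, where the undefined region sits and where the search cannot terminate; the hypothesis $\bb(n-1)<m$ is what lets one search from the bottom instead, and only there does the argument close.
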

\begin{proof}
Enumerate strings of length $n$, starting with $0^n$, and return the first string $y$ such that $BB(y)\geq m$. This string $y$ is equal to $bb(n)$, otherwise $BB(y^-)$ is defined and $BB(y^-)\geq BB(y)\geq m$. Thus $\bb(n-1)\geq m$, causing a contradiction.
\end{proof}
\begin{prp}$ $\\
\label{prp:bb}
\vspace*{-0.5cm}
    \begin{enumerate}
    \item $\K(bb(n))\gea n$.
    \item $\K(bb(n)|\mathcal{H})\lea \K(n)$.
    \end{enumerate}
\end{prp}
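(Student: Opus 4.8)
The plan is to prove the two parts separately, using the machinery around $bb(n)$ and $\bb^{-1}$ that was just set up. For part (1), I want to show $\K(bb(n)) \gea n$. The key observation is that $bb(n)$ encodes, via $BB(bb(n)) = \bb(n)$, enough information to decide the halting problem for all programs of length $\leq n$: once we know the time $\bb(n)$, we can run $U$ on every program $p$ with $\|p\| \leq n$ for that many steps and see which halt, hence recover $\Omega$ truncated to $n$ bits (or at least enough of $\Omega$ to be incompressible). Since the first $n$ bits of $\Omega$ have complexity $\gea n$ (a standard fact), and $bb(n)$ together with $n$ computes them, we get $\K(bb(n)) + \K(n \mid bb(n)) \gea n$; absorbing the $\K(n)$ term as a logarithmic quantity via the $\gea$ conventions — actually one must be slightly careful here, since $\K(n)$ is only $O(\log n)$, not $O(1)$ — I would instead argue directly that $n$ is recoverable from the \emph{length} of $bb(n)$ up to $O(1)$ (since $bb(n)$ has length exactly $n$ by the definition $bb(n) = \arg\max_x\{BB(x):\|x\|\leq n\}$, and the argmax over a prefix-closed-in-time structure is attained at full length), so $\K(bb(n)) \gea \K(bb(n), n) \gea n$.

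For part (2), I want $\K(bb(n)\mid\mathcal{H}) \lea \K(n)$. Given $n$ and oracle access to the halting sequence $\mathcal{H}$, one can compute $\Omega$ to any finite precision, in particular determine for each program of length $\leq n$ whether it halts and, by running the halting ones, compute $\bb(n) = BB(bb(n))$ and thus $bb(n)$ itself. So $\K(bb(n)\mid n,\mathcal{H}) = O(1)$, and chaining with $\K(n)$ via the standard subadditivity $\K(bb(n)\mid\mathcal{H}) \lea \K(n) + \K(bb(n)\mid n,\mathcal{H}) \lea \K(n)$ finishes it. This direction is essentially routine: $\mathcal{H}$ trivializes the search that defines $bb$.

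The main obstacle I anticipate is getting the bookkeeping in part (1) exactly right — specifically, making sure the recovery of $n$ (or of the first $\approx n$ bits of $\Omega$) from $bb(n)$ costs only $O(1)$ and not $O(\log n)$ extra bits, so that the bound is $\gea n$ rather than $n - O(\log n)$. The clean way around this is to exploit that $bb(n)$ is a string of length exactly $n$, so $n = \|bb(n)\|$ is literally free given $bb(n)$; then the argument reduces to: from $bb(n)$ compute $\bb(n)$, run all length-$\leq n$ programs for $\bb(n)$ steps, read off which halt, and thereby compute a string of complexity $\geq n - O(1)$ (a maximally incompressible halting-information string of parameter $n$), so $\K(bb(n)) \geq n - O(1)$. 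I would also double-check that Lemma \ref{lmm:rec} is not actually needed for this proposition (it appears aimed at a later result), and that no circularity arises between the definition of $bb$ and $\bb$.
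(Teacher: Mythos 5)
The paper states Proposition \ref{prp:bb} without any proof, so there is no in-paper argument to compare against; judged on its own, your proposal is the standard argument and is correct in substance. Part (2) is fine as written: relative to $\mathcal{H}$ the $\Sigma_1$ question ``does there exist $t$ with $\Omega^t>0.x+2^{-\|x\|}$'' is decidable, so from $n$ and $\mathcal{H}$ one can decide for which $x$ with $\|x\|\le n$ the value $BB(x)$ is defined, compute those values, and output the maximizer; hence $\K(bb(n)\mid n,\mathcal{H})=O(1)$ and $\K(bb(n)\mid\mathcal{H})\lea\K(n)$. You are also right that Lemma \ref{lmm:rec} is not needed for the proposition and that no circularity arises.

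Two points in part (1) should be tightened. First, you implicitly use that every program of length $\le n$ that halts does so within $\bb(n)$ steps; this needs its one-line justification: for the length-$n$ maximizer, $0.bb(n)+2^{-n}$ equals $0.\Omega_n$, the first $n$ bits of $\Omega$, so $\Omega^{\bb(n)}>0.\Omega_n$, and a later halt of a program of length $\le n$ would give $\Omega\ge\Omega^{\bb(n)}+2^{-n}>0.\Omega_n+2^{-n}>\Omega$, a contradiction. With this, your chain ``from $bb(n)$ compute $BB(bb(n))=\bb(n)$, recover the halting pattern for length $\le\|bb(n)\|$, diagonalize'' correctly yields $\K(bb(n))\gea\|bb(n)\|$. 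Second, $\|bb(n)\|=n$ is a tie-breaking convention rather than a consequence of the written definition: the maximum of $BB$ over $\{x:\|x\|\le n\}$ is always attained by a length-$n$ string (pad with $1$'s), but it is also attained by shorter strings whenever the length-$n$ maximizer ends in $1$'s, so your parenthetical ``the argmax is attained at full length'' does not by itself pin the length down. What saves the step is that the paper itself adopts the length-$n$ convention (the proof of Lemma \ref{lmm:rec} enumerates strings of length exactly $n$ and calls the result $bb(n)$), and under that convention your argument gives $\K(bb(n))\gea n$ as stated. Under an adversarial tie-breaking one only gets $\|bb(n)\|\ge n-O(\log n)$ (long runs of identical bits in $\Omega$ are impossible), which would weaken part (1) to logarithmic precision --- still sufficient for the proposition's use in Theorem \ref{thr:nb1}, but not for the stated $\gea$; so it is worth stating the length-$n$ convention explicitly.
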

\begin{thr}[Derandomization, Improved Bounds]
\label{thr:nb1}
Let $\q$ be a computable deterministic environment. There is a deterministic agent of complexity\\ $\lel \xi(\q)+\I(\q;\mathcal{H})$ that wins against $\q$.
\end{thr}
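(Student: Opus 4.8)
The plan is to mimic the derandomization arguments of Theorems~\ref{thr:main} and~\ref{thr:nb1}'s predecessors, but to replace the ``winning probability $p$'' of a single agent $\p$ by the aggregated quantity $2^{-\xi(\q)} = \sum_{\p\in\mathcal A}\m(\p)\mathrm{Win}(\p,\q)$, which behaves like the winning probability of a ``universal'' mixture semi-agent. First I would form the lower-computable semi-agent $\mathbf{P} = \sum_{\p\in\mathcal A}\m(\p)\p$ obtained by running the standard enumeration of $\mathcal A$ and simulating each $\p$ with weight $\m(\p)$; this $\mathbf{P}$ is a single lower-computable semi-agent with $\K(\mathbf{P})=O(1)$ and $\mathrm{Win}(\mathbf{P},\q)\gem 2^{-\xi(\q)}$. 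Because $\mathbf{P}$ is only lower computable (and its conditional action probabilities are not computable), I cannot directly invoke Theorem~\ref{thr:main}, so the core work is to redo its fragment/Hoeffding argument in the lower-computable setting.

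The key steps, in order. (1) Using $\q$ and a lower-enumeration of $\mathbf{P}$, build a finite game fragment $\mathcal F$ all of whose root-to-leaf paths are winning interactions against $\q$ and whose weights (rationals lower-bounding both $\q$'s and $\mathbf{P}$'s relevant probabilities) satisfy $\mathrm{Weight}(\mathbf{P},\mathcal F) > 2^{-\xi(\q)-O(1)}$; since $\mathrm{Win}(\mathbf{P},\q)$ is lower computable, such an $\mathcal F$ appears after finitely many steps and $\K(\mathcal F\mid \q)=O(1)$. (2) Apply the Monotone EL Theorem~\ref{thr:monel} (as in the proof of Theorem~\ref{thr:newproof}) to extract from the clopen set $G\subset\IS$ of winning interactions consistent with $\mathcal F$ a point $y\sqsupseteq x\in G$ with $\K(y)\lel -\log\M(G)+\I(G;\mathcal H)\lel \xi(\q)+\I(\mathcal F;\mathcal H)$. (3) Read off from $y$ a deterministic agent $\p'$ that, paired with $\q$, traces a winning path. (4) Discharge the information term: $\mathcal F$ is computable from $\q$ (and $\xi(\q)$), so by Lemma~\ref{lmm:consh} $\I(\mathcal F;\mathcal H)\lea \I(\q;\mathcal H)+\K(\xi(\q))\lel \I(\q;\mathcal H)$, and $\log\xi(\q)$ is absorbed into the $\lel$. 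Alternatively one can route through the stochasticity/Hoeffding machinery of Theorem~\ref{thr:main} verbatim with $s$ replaced by $\xi(\q)$; the monotone-EL route is cleaner and gives the sharper additive constant needed for ``$\lel$''.

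The main obstacle is step~(1): verifying that one can genuinely realize the universal mixture as a single lower-computable semi-agent and that the fragment $\mathcal F$ built from it is computable from $\q$ alone (with only an $O(\log)$ overhead for $\xi(\q)$), rather than requiring the halting information to know when enough of the lower-computation has converged. The resolution is that we never need to know $\xi(\q)$ exactly: we search, dovetailing over lower-approximations to $\mathrm{Win}(\mathbf{P},\q)$ and over candidate fragments, for the first $\mathcal F$ whose guaranteed weight exceeds a running threshold, and the index at which this succeeds is bounded in terms of $\xi(\q)$, contributing only $\K(\xi(\q))=O(\log\xi(\q))$ bits, which the $\lel$ relation swallows. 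A secondary subtlety is that $\M(G)\gem \m(\mathbf{P})\,\mathbf{P}(G)$ only up to the (constant) weight $\m(\mathbf{P})$, which is exactly where the $\K(\mathbf{P})=O(1)$ bound is used; and one must note that $\mathbf{P}$'s weight on $G$ is at least $\mathrm{Weight}(\mathbf P,\mathcal F)$ by Claim~\ref{clm}, closing the loop.
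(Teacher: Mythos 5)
There is a genuine gap at the step you flag as a ``secondary subtlety'': the claimed domination $\M(G)\gem \m(\mathbf{P})\,\mathbf{P}(G)$, i.e.\ $-\log\M(G)\lel \xi(\q)$. The domination property of $\M$ applies to lower computable continuous semimeasures \emph{on the encoding space in which $G$ lives}, and the universal mixture $\mathbf{P}$ does not supply one at $O(1)$ description cost. If $G$ encodes interaction paths (or fragment paths, as in your step (1)), the function assigning to a prefix the product of $\mathbf{P}$'s action probabilities is not a continuous semimeasure (mass does not split over the environment's response branches), and restricting to the responses $\q$ actually gives is lower computable only given $\q$, so the best you get is $\M(G)\gem 2^{-\K(\q)}\cdot 2^{-\xi(\q)}$, yielding the weaker bound $\K(\q)+\xi(\q)+\cdots$ rather than $\xi(\q)+\I(\q;\mathcal{H})$ --- the gap between $\K(\q)$ and $\I(\q;\mathcal{H})$ is the whole point of Theorem~\ref{thr:nb1}. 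The failure is not cosmetic: let $\q$ encode a long random string $r$, reveal $r$ bit by bit, and halt (a win) only if the player echoes each previous response. The echo semi-agent gives $\xi(\q)=O(1)$ and $\I(\q;\mathcal{H})$ is small, but every winning path encodes $r$, so for your path-set $G$ one has $-\log\M(G)\approx\K(r)$, and your chain $\K(y)\lel-\log\M(G)+\I(G;\mathcal{H})\lel\xi(\q)+\I(\mathcal{F};\mathcal{H})$ is false. The proof of Theorem~\ref{thr:newproof} avoids this only because its Cantor-space encoding is over \emph{strategy tables} (actions at all histories), where the extracted $y$ is a rule, not a play path; but that encoding is unavailable for $\mathbf{P}$, since a lower computable semi-agent (and a fortiori the universal mixture) does not have lower computable conditional action probabilities --- a fact the paper itself notes in Section~\ref{sec:nb}. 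Your fallback of running Theorem~\ref{thr:main} ``verbatim with $s$ replaced by $\xi(\q)$'' has the same defect: that theorem requires a computable agent and charges $\K(\p)$, and any computable truncation of $\mathbf{P}$ found by searching from $\q$ and $\xi(\q)$ has complexity (and pair-information term) that may be of order $\K(\q)$; your claim that the search index costs only $\K(\xi(\q))$ conflates ``computable from $(\q,\xi(\q))$'' with ``cheap''.

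The paper's actual proof is built precisely to dodge this. It truncates the universal semi-agent at a busy-beaver time: with $s=\xi(\q)+1$, let $m$ be the first enumeration stage with $\mathrm{Win}(\p^m,\q)>2^{-s-1}$, set $n=\bb^{-1}(m)$, $b=bb(n)$, and complete $\p^{\bb(n)}$ to an agent; then $\K(\p^{\bb(n)}\,|\,b)=O(1)$, so Theorem~\ref{thr:detenv} \emph{conditioned on $b$} gives a winner $\r$ with $\K(\r|b)\lel s+\I(\q;\mathcal{H}|b)$, and the conditioning is removed using Proposition~\ref{prp:bb} ($\K(bb(n))\gea n$, $\K(bb(n)|\mathcal{H})\lea\K(n)$) and Lemma~\ref{lmm:rec} ($\K(bb(n)|m,n)=O(1)$, with $m$ computable from $(\q,s)$), so that only logarithmic terms are paid and the information term stays $\I(\q;\mathcal{H})$. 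Some mechanism of this kind --- making the truncation cheap relative to $\mathcal{H}$ while recoverable from $(\q,s)$ plus few bits --- is exactly what your proposal is missing.
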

\begin{proof}
    Let $\p$ be the universal lower-computable semi-agent, where $$\p=\sum_{\p'\in\mathcal{A}}\m(\p')\p'.$$ Let $s=\ceil{-\log\mathrm{Win}(\p,\q)}+1=\xi(\q)+1$. Let $\p^c$ be $\p$ after enumerating $c$ steps. Let $m$ be the smallest number such that
    $\mathrm{Win}(\p^m,\q)>2^{-s-1}$. Let $n=\bb^{-1}(m)$, $k=\bb(n)$. Let $\p^k$ be $\p$ after enumerating $k$ steps, and completed to an agent (from a semi-agent). Thus $\mathrm{Win}(\p^k,\q)>2^{-s-1}$. Let $b=bb(n)$. Theorem \ref{thr:detenv}, conditioned on $b$ gives a deterministic agent $\r$ such that 
    \begin{align}
        \nonumber
        \K(\r|b)&\lel \K(\p^k|b)+s+\I(\langle \p^k,\q\rangle;\mathcal{H}|b)\\
        \nonumber
        &\lel s+\I(\q;\mathcal{H}|b)\\
        \nonumber
       \K(\r) &\lel s+\K(b)+\K(\q|b)-\K(\q|b,\mathcal{H})\\
       \label{eq:nb1}
       &\lel s+\K(b,\q)+\K(\K(b))-\K(\q|b,\mathcal{H})\\
       \nonumber
        &\lel s+\K(\q)+\K(b|\q,s)-\K(\q|\mathcal{H})\\
        \nonumber
        &\lel s+\K(\q)+\K(b|\q,s,n)+\K(n)-\K(\q|\mathcal{H})\\
        \label{eq:nb2}
        &\lel s+\K(\q)-\K(\q|\mathcal{H}).
    \end{align}
    The equations follow from Proposition \ref{prp:bb}. Equation \ref{eq:nb1} follows from the chain rule. Equation \ref{eq:nb2} follows from Lemma \ref{lmm:rec}.
\end{proof}
The following corollary is an update to Theorem 4 in \cite{EpsteinBe2011} which is derived from Theorem 6 in \cite{VereshchaginVi05}. Another form of this theorem can be found in Lemma 6 in \cite{VereshchaginVi04}. It also follows from \cite{Levin16,EpsteinEL23}. In fact, theorems of this general form are ubiquitous in algorithmic information theory. It says that if an environment has a a lot of deterministic players of bounded complexity who win against it, then there exists a simple deterministic player that wins.
\begin{cor}[Many Deterministic Winners]
\label{cor:mdp}
If $2^m$ deterministic players $\r$ of Kolmogorov complexity $\K(\r)\leq n$ win against an environment $\q$, then there exists a deterministic player $\p$ with $\K(\p)\lel n-m+\I(\q;\mathcal{H})$ that wins against $\q$.
\end{cor}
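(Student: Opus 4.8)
The plan is to derive this from Theorem \ref{thr:nb1} by showing that the hypothesis---having $2^m$ deterministic winners of complexity at most $n$---forces the weighted winning sum $\sum_{\p\in\mathcal{A}}\m(\p)\mathrm{Win}(\p,\q)$ to be large, hence $\xi(\q)\leq n-m+O(1)$. A deterministic player is a (degenerate) lower computable semi-agent with $\mathrm{Win}(\r,\q)=1$ whenever $\r$ beats $\q$. For any such winner $\r$ we have $\m(\r)\gem 2^{-\K(\r)}\geq 2^{-n}$ by the standard coding theorem for the algorithmic probability of semi-agents (the universal lower-computable semi-agent dominates each individual one). Summing over the $2^m$ distinct winners $\r$, and using that distinct players contribute to $\m(\cdot)$ at distinct points,
\[
\sum_{\p\in\mathcal{A}}\m(\p)\mathrm{Win}(\p,\q)\ \gem\ \sum_{\r\ \mathrm{wins}}\m(\r)\cdot 1\ \gem\ 2^m\cdot 2^{-n},
\]
so that $\xi(\q)=\ceil{-\log\sum_{\p\in\mathcal{A}}\m(\p)\mathrm{Win}(\p,\q)}\lea n-m$.

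Next I would feed this into Theorem \ref{thr:nb1}: there is a deterministic agent $\p$ of complexity $\lel \xi(\q)+\I(\q;\mathcal{H})\lel n-m+\I(\q;\mathcal{H})$ that wins against $\q$, which is exactly the claimed bound. One routine point to check is that $\q$ is computable here: Corollary \ref{cor:mdp} does not state this, but Theorem \ref{thr:nb1} requires it; either the corollary is implicitly in the computable setting, or one appeals instead to the uncomputable analogue (Theorem \ref{thr:uncompprob}-style reasoning) with $\I(\q:\mathcal{H})$ in place of $\I(\q;\mathcal{H})$. I would state the corollary's proof in the computable case and remark that the uncomputable case follows identically.

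The main obstacle is the first inequality---justifying that the $2^m$ deterministic winners genuinely contribute additively (up to $O(1)$) to $\sum_{\p\in\mathcal{A}}\m(\p)\mathrm{Win}(\p,\q)$, rather than overlapping in a way that loses the factor $2^m$. The key observation making this go through is that each deterministic winner $\r$, viewed as a semi-agent, has $\mathrm{Win}(\r,\q)=1$, and $\m$ restricted to the (countable, prefix-enumerable) family of these $2^m$ distinct semi-agents is bounded below termwise by $\gem 2^{-\K(\r)}\geq 2^{-n}$; since the family has $2^m$ members and $\m(\cdot)$ is a genuine measure-like quantity on $\mathcal{A}$ summing its values over distinct agents only adds, giving $\gem 2^{m-n}$. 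Everything else is bookkeeping with the $\lel$ and $\gem$ notation already set up in the Setup section.
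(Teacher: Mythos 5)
Your proposal is correct and matches the paper's intended route: the corollary is stated immediately after Theorem \ref{thr:nb1} precisely because each of the $2^m$ deterministic winners $\r$ contributes $\m(\r)\mathrm{Win}(\r,\q)\gem 2^{-n}$ to the defining sum of $\xi(\q)$, giving $\xi(\q)\lea n-m$ and hence the bound via Theorem \ref{thr:nb1}. Your side remarks (distinctness of the winners in the sum over $\mathcal{A}$, and the implicit computability of $\q$) are exactly the right bookkeeping points and do not change the argument.
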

\begin{cor}[Many Probabilistic Winners]
\label{cor:mdp2}
    If $2^m$ probabilistic players $\r$ of Kolmogorov complexity $\K(\r)\leq n$ win against an environment $\q$ with probability $p$, then there exists a deterministic player $\p$ with $\K(\p)\lel n - m -\log p +\I(\q;\mathcal{H})$ that wins against $\q$.
\end{cor}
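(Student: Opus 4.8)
The plan is to deduce this corollary directly from Theorem \ref{thr:nb1}, by showing that the hypothesis forces the weighted winning score $\xi(\q)$ to be small. Recall $\xi(\q)=\ceil{-\log\sum_{\p\in\mathcal{A}}\m(\p)\mathrm{Win}(\p,\q)}$, so it suffices to produce a good lower bound on the sum $\sum_{\p\in\mathcal{A}}\m(\p)\mathrm{Win}(\p,\q)$ using only the $2^m$ winners supplied by the hypothesis.

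First I would observe that each of the $2^m$ probabilistic players $\r$ in the hypothesis is a (possibly degenerate) computable, hence lower computable, semi-agent, so $\r\in\mathcal{A}$; and since $\K(\r)\le n$, there is a program of length $\le n$ that computes — in particular lower computes — $\r$, so $\m(\r)\gem 2^{-n}$. By hypothesis $\mathrm{Win}(\r,\q)\ge p$. Taking the $2^m$ players to be pairwise distinct and restricting the universal sum to just these players gives $\sum_{\p\in\mathcal{A}}\m(\p)\mathrm{Win}(\p,\q)\gem 2^m\cdot 2^{-n}\cdot p = 2^{m-n}p$. Taking $-\log$ and a ceiling then yields $\xi(\q)\lea n-m-\log p$.

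Then I would apply Theorem \ref{thr:nb1} to $\q$: it produces a deterministic agent that wins against $\q$ of complexity $\lel\xi(\q)+\I(\q;\mathcal{H})\lel n-m-\log p+\I(\q;\mathcal{H})$, which is precisely the claimed bound, since the logarithmic slack in $\lel$ absorbs the $O(1)$ terms accrued above. (Corollary \ref{cor:mdp} is the special case in which each $\r$ is deterministic and wins outright, so $\mathrm{Win}(\r,\q)=1$ and the $-\log p$ term drops.)

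The only real subtlety — and the step I would be most careful about — is the bookkeeping around $\m$ on semi-agents: confirming that a computable probabilistic player genuinely counts as an element of $\mathcal{A}$ with $\m(\r)\gem 2^{-\K(\r)}$, and that restricting the universal weighted sum to a finite set of distinct such players is legitimate and costs only a constant factor. Everything else is a one-line invocation of Theorem \ref{thr:nb1}.
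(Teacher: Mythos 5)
Your proposal is correct and is exactly the intended derivation: the paper states Corollary \ref{cor:mdp2} as an immediate consequence of Theorem \ref{thr:nb1}, obtained by lower-bounding $\sum_{\p\in\mathcal{A}}\m(\p)\mathrm{Win}(\p,\q)$ by $2^{m}\cdot 2^{-n}\cdot p$ (up to constants) via $\m(\r)\gem 2^{-\K(\r)}$ for the $2^m$ distinct winners, giving $\xi(\q)\lea n-m-\log p$. Your bookkeeping remarks (computable players are lower computable semi-agents in $\mathcal{A}$, and restricting the sum costs only constants absorbed by $\lel$) are exactly the right points to check, and they hold.
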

The following theorem provides bounds using information with the environment. It is a followup to Theorem 2 in \cite{EpsteinBe2011} which is a rearranging of Theorem 2 in \cite{VereshchaginVi04}. 
\begin{thr}[Environment Information]
\label{thr:iq}
    For deterministic environment $\q$, there exists a deterministic agent $\p$ with $\K(\p)\lel \min_{\r\in\mathcal{A}}\I(\r;\q)-\log\mathrm{Win}(\r,\q)+\I(\q;\mathcal{H})$ that wins against $\q$.
\end{thr}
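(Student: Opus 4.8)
The plan is to run the argument of Theorem \ref{thr:nb1}, but instead of using the full universal semi-agent $\p = \sum_{\p'\in\mathcal{A}}\m(\p')\p'$, I fix a single semi-agent $\r\in\mathcal{A}$ attaining (up to additive constants) the minimum of $\I(\r;\q)-\log\mathrm{Win}(\r,\q)$, and derandomize that one. First I would set $p=\mathrm{Win}(\r,\q)$ and $s=\ceil{-\log p}+1$. Since $\r$ is only lower computable, its action probabilities are not directly usable, so — exactly as in the proof of Theorem \ref{thr:nb1} — I would truncate: let $\r^c$ be $\r$ after enumerating $c$ steps, let $m$ be the least number with $\mathrm{Win}(\r^m,\q)>2^{-s}$, set $n=\bb^{-1}(m)$, $k=\bb(n)$, and let $\r^k$ (completed from a semi-agent to an agent) have $\mathrm{Win}(\r^k,\q)>2^{-s}$, with $b=bb(n)$. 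Then $\r^k$ is an honest computable probabilistic agent, computable from $b$ together with $\langle \r,\q,s\rangle$-type data.

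Next I would apply Theorem \ref{thr:detenv} relativized to $b$, to the agent $\r^k$ against $\q$, obtaining a deterministic winner $\p$ with
\begin{align}
\nonumber
\K(\p|b)&\lel \K(\r^k|b)-\log p+\I(\langle\r^k,\q\rangle;\mathcal{H}|b).
\end{align}
The goal is to convert this into the unconditional bound $\K(\p)\lel \I(\r;\q)-\log p+\I(\q;\mathcal{H})$. Removing the conditioning on $b$ costs $\K(b)$, which by Proposition \ref{prp:bb}(1) is $\gea n$ and by Lemma \ref{lmm:rec} is essentially recovered for free once $m$ and $n$ are known; the Burrows--Bennett bookkeeping in \eqref{eq:nb1}--\eqref{eq:nb2} of Theorem \ref{thr:nb1} shows exactly how the $\K(b)$ term cancels against the gain in the $\I(\q;\mathcal{H}|b)$ versus $\I(\q;\mathcal{H})$ comparison (using Proposition \ref{prp:bb}(2), $\K(b|\mathcal{H})\lea\K(n)$). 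The new ingredient compared to Theorem \ref{thr:nb1} is that we must track $\K(\r^k|b)$ rather than absorb it into $O(1)$: here $\K(\r^k|b)\lea \K(\r|b)+\K(k|b)+O(1)\lea \K(\r)+O(\log\cdot)$ since $k=\bb(n)$ is computable from $b$ and small data, and then the information-with-environment term appears because in the $b$-conditioned chain-rule manipulation $\K(\r)-\K(\r|\q,\ldots)$ collapses to $\I(\r;\q)$ up to logarithmic terms. So the heart of the calculation is the same telescoping chain-rule computation as \eqref{eq:nb1}--\eqref{eq:nb2}, now carrying an extra $\I(\r;\q)$ summand throughout.

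Finally I would take the minimum over $\r\in\mathcal{A}$: since the construction works for every lower computable semi-agent $\r$ and the bound is monotone, choosing $\r$ to minimize $\I(\r;\q)-\log\mathrm{Win}(\r,\q)$ gives the stated inequality (the minimum is attained, or approached within $O(1)$, because the relevant quantities are integers plus bounded error). The main obstacle I anticipate is the bookkeeping for the conditional-to-unconditional conversion of $\I(\r;\q)$: one must be careful that $\I(\langle\r^k,\q\rangle;\mathcal{H}|b)$ and $\K(\r^k|b)$ can be combined and de-conditioned without the $b$-terms reappearing, which requires using both parts of Proposition \ref{prp:bb} and Lemma \ref{lmm:rec} in the precise pattern of Theorem \ref{thr:nb1}, together with the chain rule for $\I(\cdot;\cdot)$; this is routine algorithmic-information-theory manipulation but is the step where an error would most easily creep in.
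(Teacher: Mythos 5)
There is a genuine gap, and it is at the exact point you flag as ``the step where an error would most easily creep in.'' Derandomizing a single fixed $\r\in\mathcal{A}$ (even with the busy-beaver truncation and the conditioning on $b$) can only ever produce a bound of the form $\K(\p)\lel \K(\r)-\log\mathrm{Win}(\r,\q)+\I(\langle\r,\q\rangle;\mathcal{H})$: the truncated agent $\r^k$ costs $\K(\r^k|b)\lea\K(\r)$, and the information term you inherit from Theorem \ref{thr:detenv} is $\I(\langle\r^k,\q\rangle;\mathcal{H}|b)\lel\I(\langle\r,\q\rangle;\mathcal{H}|b)$, not $\I(\q;\mathcal{H}|b)$. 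No chain-rule or de-conditioning manipulation afterwards can convert $\K(\r)$ into $\I(\r;\q)=\K(\r)-\K(\r|\q)$, nor $\I(\langle\r,\q\rangle;\mathcal{H})$ into $\I(\q;\mathcal{H})$; nothing in your derivation ever produces a $-\K(\r|\q)$ term, and your sentence ``$\K(\r)-\K(\r|\q,\ldots)$ collapses to $\I(\r;\q)$'' has no mechanism behind it. The resulting bound is strictly weaker than the statement whenever the minimizing $\r$ is complex but largely independent of $\q$ (large $\K(\r|\q)$), or itself carries information about $\mathcal{H}$ — precisely the cases the theorem is designed to cover. (A smaller issue: in your construction $m$, and hence $\K(b|\q,s,n)$, also depends on $\r$, so even the Lemma \ref{lmm:rec} step is no longer free.)

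The paper's proof does not derandomize $\r$ at all and needs no busy-beaver bookkeeping beyond what is already inside Theorem \ref{thr:nb1}. It applies Theorem \ref{thr:nb1} as a black box, whose bound $\xi(\q)+\I(\q;\mathcal{H})$ has an information term depending on $\q$ alone, and then proves the separate inequality $\xi(\q)\lel\I(\r;\q)-\log\mathrm{Win}(\r,\q)$ for every $\r\in\mathcal{A}$. The missing idea is where the $-\K(\r|\q)$ comes from: the class $\mathcal{A}(\q,n)=\{\r\in\mathcal{A}:\mathrm{Win}(\r,\q)>2^{-n}\}$ is lower-enumerable given $\q$, $n$, $s$, so with $s=\floor{-\log\sum_{\r\in\mathcal{A}(\q,n)}\m(\r)}$ the function $m(\r)=[\r\in\mathcal{A}(\q,n)]2^s\m(\r)$ is a lower-computable semimeasure, whence $\m(\r|\q,n,s)\gem 2^s\m(\r)$ and $s\lea\K(\r)-\K(\r|\q,n,s)\lel\I(\r;\q)+\K(n)$ for every winning $\r$; combined with $\xi(\q)\lea s+n$ (the universal mixture puts weight at least $2^{-s-n}$ on winning interactions) and then taking $n\approx-\log\mathrm{Win}(\r,\q)$, this yields the theorem directly from Theorem \ref{thr:nb1}. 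In short: the strength of the statement rests on the compressibility of every successful agent \emph{given} $\q$, an argument about the whole class of winners, which cannot be recovered by derandomizing one fixed agent.
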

\begin{proof}
For $n\in\N$, let $\mathcal{A}(\q,n)=\{\r:\r\in\mathcal{A},\mathrm{Win}(\r,\q)>2^{-n}\}$. Let $s=\floor{-\log\sum_{\r\in\mathcal{A}(\q,n)}\m(\r)}$. 
So we have that 
\begin{align}
\label{eq:iq1}
\xi(\q)&\lea s+n.
\end{align}
Since $\mathcal{A}(\q,n)$ is lower computable given $s$, $n$, and $\q$, we have that $m(\r)=[\r\in\mathcal{A}(\q,n)]2^s\m(\r)$ is a lower computable semimeasure. So if
$\r\in\mathcal{A}(\q,n)$, then 
\begin{align}
\nonumber
\m(\r|\q,n,s)&\gem m(\r) \eqm 2^s\m(\r) \\
\nonumber
s&\lea \K(\r) -\K(\r|\q,n,s) \\
\nonumber
s&\lel \K(\r) -\K(\r|\q)+\K(n) \\
\label{eq:iq2}
s&\lel \min_{\r\in\mathcal{A}(\q,n)}\K(\r) -\K(\r|\q)+\K(n).
\end{align}
Combining Equations \ref{eq:iq1} and \ref{eq:iq2} and Theorem \ref{thr:nb1}, there is a deterministic agent $\p$ that wins against $\q$, where for all $n$,
\begin{align*}
\K(\p)&\lel \min_{\r\in\mathcal{A}(\q,n)}\I(\r;\q) +n +\I(\q;\mathcal{H})\\
&\lel \min_{\r\in\mathcal{A}}\I(\r;\q) -\log\mathrm{Win}(\r,\q) +\I(\q;\mathcal{H}).
\end{align*}
\end{proof}
\begin{thr}
\label{thr:nb2}
Let $\q$ be a computable probabilistic environment. Let $2^{-s}<\xi(\q)$, $s\in\N$. There is a deterministic agent of complexity $\lel 2s+\I(\q;\mathcal{H})$ that wins with probability $>2^{-s-1}$.
\end{thr}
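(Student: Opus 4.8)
The plan is to mirror the argument of Theorem \ref{thr:nb1}, but route through the probabilistic-environment machinery of Theorem \ref{thr:main} instead of Theorem \ref{thr:detenv}. As before, let $\p = \sum_{\p'\in\mathcal{A}}\m(\p')\p'$ be the universal lower-computable semi-agent, so that $\mathrm{Win}(\p,\q) \gem 2^{-\xi(\q)}$ by the definition of $\xi$. Since $2^{-s} < \xi(\q)$ — here I read the hypothesis as $s$ being essentially $\xi(\q)$ up to the additive slack, so that $\mathrm{Win}(\p,\q) > 2^{-s}$ after a harmless constant adjustment — we want to replace $\p$ by a genuine (lower-)computable object at finite enumeration stage. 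Let $\p^c$ denote $\p$ truncated after enumerating $c$ steps and completed to an honest agent; let $m$ be the least $c$ with $\mathrm{Win}(\p^c,\q) > 2^{-s}$, which exists because $\mathrm{Win}(\p,\q)$ is lower computable and exceeds $2^{-s}$. Set $n = \bb^{-1}(m)$, $k = \bb(n)$, $b = bb(n)$; then $\p^k$ is computable from $b$ with $\K(\p^k\mid b) = O(1)$ and still $\mathrm{Win}(\p^k,\q) > 2^{-s}$.

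The core step is to apply Theorem \ref{thr:main} relativized to $b$. That gives a deterministic agent $\r$ with
\begin{align}
\nonumber
\K(\r\mid b) &\lel \K(\p^k\mid b) + 2s + \I(\langle \p^k,\q\rangle;\mathcal{H}\mid b)\\
\nonumber
&\lel 2s + \I(\q;\mathcal{H}\mid b)
\end{align}
that wins with probability $>2^{-s-1}$. Then I unfold the conditioning on $b$ exactly as in the proof of Theorem \ref{thr:nb1}: write $\I(\q;\mathcal{H}\mid b) = \K(\q\mid b) - \K(\q\mid b,\mathcal{H})$, add back $\K(b)$ via the chain rule, use $\K(b) = \K(bb(n)) \gea n$ and $\K(bb(n)\mid\mathcal{H}) \lea \K(n)$ from Proposition \ref{prp:bb}, and use $\K(b\mid \q,s,n) = O(1)$ via Lemma \ref{lmm:rec} (since $m$, hence $\bb^{-1}$ data, is computable from $\q$ and $s$). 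The telescoping yields $\K(\r) \lel 2s + \K(\q) - \K(\q\mid\mathcal{H}) = 2s + \I(\q;\mathcal{H})$, which is the claimed bound.

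The main obstacle is the same bookkeeping subtlety that appears in Theorem \ref{thr:nb1}: making sure that all the auxiliary quantities ($s$, $n$, $m$, $k$) are themselves cheap to describe and that conditioning on $b$ does not secretly smuggle in extra information about $\q$ or $\mathcal{H}$. Concretely, one must check that $m$ is computable from $\q$ and $s$ (it is: lower-enumerate $\mathrm{Win}(\p^c,\q)$ until it passes $2^{-s}$), that $n = \bb^{-1}(m)$ and $k = \bb(n)$ carry only $\K(n) = O(\log n)$ extra bits, and that the replacement of $\mathrm{Win}(\p,\q)$ by $\mathrm{Win}(\p^k,\q)$ only costs the constant already absorbed into the $\lel$ slack. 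A secondary point is that Theorem \ref{thr:main} is stated for computable probabilistic environments and computable agents; here $\p^k$ is honest-computable given $b$ and $\q$ is computable by hypothesis, so the theorem applies verbatim under the relativization to $b$. Everything else is the routine chain-rule manipulation already carried out in Theorem \ref{thr:nb1}, so I would simply cite that proof for the identical steps rather than repeat them.
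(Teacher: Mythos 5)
Your proposal is correct and matches the paper's intent: the paper's entire proof of Theorem \ref{thr:nb2} is the single remark that it proceeds analogously to Theorem \ref{thr:nb1}, and your argument is exactly that analogy, substituting Theorem \ref{thr:main} (relativized to $b=bb(n)$) for Theorem \ref{thr:detenv} and then unfolding the conditioning via the chain rule, Proposition \ref{prp:bb}, and Lemma \ref{lmm:rec}. Your reading of the (apparently typo-ridden) hypothesis $2^{-s}<\xi(\q)$ as guaranteeing $\mathrm{Win}(\p,\q)>2^{-s}$ for the universal semi-agent is the sensible one and costs nothing within the $\lel$ precision.
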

\begin{proof}
    The proof analogously to that of Theorem \ref{thr:nb1}.
\end{proof}
\begin{cor}
\label{cor:nb3}
Let $\q$ be a computable probabilistic environment. Let $2^{-s}<\xi(\q)$, $s\in\N$. Let $\epsilon\in(0,1)$ be computable. There is a deterministic agent of complexity $\lel 2s+\I(\q;\mathcal{H})+O_\epsilon(1)$ that wins with probability $>\epsilon2^{-s}$.
\end{cor}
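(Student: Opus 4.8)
The plan is to re-run the proof of Theorem~\ref{thr:nb2} verbatim, making a single substitution: at the point where that argument calls Theorem~\ref{thr:main} to derandomize a computable probabilistic agent against the probabilistic environment $\q$, call Corollary~\ref{cor:main} instead, with the same computable $\epsilon$. Since Corollary~\ref{cor:main} bears exactly this relationship to Theorem~\ref{thr:main} (weakening the winning probability from $>2^{-s-1}$ to $>\epsilon 2^{-s}$ at the price of an additive $O_\epsilon(1)$ in the complexity bound), no genuinely new ingredient is needed; one only checks that the extra $O_\epsilon(1)$ survives the subsequent bookkeeping. (Note that for $\epsilon\le 1/2$ the statement is already immediate from Theorem~\ref{thr:nb2}, since then $\epsilon 2^{-s}\le 2^{-s-1}$; the content is in $\epsilon>1/2$, where lowering the parameter $s$ by one in Theorem~\ref{thr:nb2} is not available because its hypothesis on $s$ need not then hold.)

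Concretely, I would first take $\p=\sum_{\p'\in\mathcal{A}}\m(\p')\p'$ to be the universal lower-computable semi-agent, so that $\mathrm{Win}(\p,\q)=\sum_{\p'\in\mathcal{A}}\m(\p')\mathrm{Win}(\p',\q)\ge 2^{-\xi(\q)}$; the hypothesis of the corollary is then the statement that $\mathrm{Win}(\p,\q)>2^{-s}$. As $\mathrm{Win}(\p^c,\q)$ increases to $\mathrm{Win}(\p,\q)$ with the enumeration step $c$, there is a least $m$ with $\mathrm{Win}(\p^m,\q)>2^{-s}$. Exactly as in the proof of Theorem~\ref{thr:nb1}, set $n=\bb^{-1}(m)$, $k=\bb(n)$ and $b=bb(n)$; then the truncation $\p^k$, completed to a genuine computable agent, still satisfies $\mathrm{Win}(\p^k,\q)>2^{-s}$, while $\K(\p^k|b)=O(1)$ because $k=BB(b)$ is computable from $b$.

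Next I would apply Corollary~\ref{cor:main}, with the universal machine relativized to $b$, to the agent $\p^k$, the environment $\q$, the bound $2^{-s}$, and the given $\epsilon$. This produces a deterministic agent $\r$ winning against $\q$ with probability $>\epsilon 2^{-s}$ and with $\K(\r|b)\lel\K(\p^k|b)+2s+\I(\langle\p^k,\q\rangle;\mathcal{H}|b)+O_\epsilon(1)\lel 2s+\I(\q;\mathcal{H}|b)+O_\epsilon(1)$, using $\K(\p^k|b)=O(1)$, Lemma~\ref{lmm:consh} relativized to $b$, and the computability of $\p^k$ from $b$. Finally I would strip the conditioning on $b$ by copying the chain of inequalities (\ref{eq:nb1})--(\ref{eq:nb2}) from the proof of Theorem~\ref{thr:nb1}: the chain rule, Lemma~\ref{lmm:rec}, and Proposition~\ref{prp:bb} (namely $\K(bb(n))\gea n$ and $\K(bb(n)|\mathcal{H})\lea\K(n)$) let one trade the conditioning on $b$ for the additive term $\K(\q)-\K(\q|\mathcal{H})=\I(\q;\mathcal{H})$, the remaining $O(\log)$ dependence on $n$ and $s$ being absorbed into $\lel$ and the $O_\epsilon(1)$ riding along unchanged, giving $\K(\r)\lel 2s+\I(\q;\mathcal{H})+O_\epsilon(1)$.

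The only delicate point is that last step, and it is not new: it is precisely the busy-beaver bookkeeping already carried out for Theorem~\ref{thr:nb1}, where one must see that conditioning on the string $b=bb(n)$ costs only $\I(\q;\mathcal{H})$ rather than the full $\K(b)\approx n+\K(n)$ bits, because $b$ is almost determined by $n$ once $\mathcal{H}$ is available. Since that computation is independent of $\epsilon$, transplanting it here is routine, and the corollary follows.
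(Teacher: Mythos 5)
Your proposal is correct and matches the derivation the paper intends but leaves implicit: Corollary~\ref{cor:nb3} stands to Theorem~\ref{thr:nb2} exactly as Corollary~\ref{cor:main} stands to Theorem~\ref{thr:main}, so one reruns the busy-beaver truncation argument of Theorem~\ref{thr:nb1} with Corollary~\ref{cor:main} substituted at the derandomization step, the $O_\epsilon(1)$ passing unchanged through the unconditioning chain (\ref{eq:nb1})--(\ref{eq:nb2}). Your reading of the hypothesis $2^{-s}<\xi(\q)$ as asserting that the universal semi-agent's winning probability exceeds $2^{-s}$ is the sensible interpretation and is what the argument needs.
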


How does one interpret these new bounds? Let $\mathcal{B}$ be the set of all total computable probabilistic agents and $\q$ be a deterministic environment. Using some reasoning, one can prove the following inequality
$$
\min_{\p\in\mathcal{B}}\K(\p)-\log\mathrm{Win}(\p,\q)\lel \xi(\q)+\I(\q;\mathcal{H}).
$$
Thus only exotic environments will have an average win rate that is better than the weighted max win rate. However, the bounds proved in this section are more amenable to manipulation, as shown with Corollaries \ref{cor:mdp} and \ref{cor:mdp2} and Theorem \ref{thr:iq}. This also turned out to be the case with G\'{a}cs entropy and Vit\'{a}nyi entropy, where their difference is only over exotic quantum states but G\'{a}cs entropy has many applications in algorithmic physics.

\section{Partial Derandomization}
With game derandomization, a probabilistic player that wins with probability $p$ can be transformed into a deterministic one, with a complexity term of $-\log p$ in the inequality. In this section we show a general tradeoff of bits versus winning probabilities. A probabilistic agent can be transformed into a more successful probabilistic agent, at the cost of its complexity. The theorem of this section puts bounds on this tradeoff. The proof requires using Corollary 7 from \cite{Epstein24}:

\begin{thr}[Monotone EL Theorem for Sets]
\label{thr:monelmod}
For prefix free set $G$, $m=\ceil{-\log \M(G)}$, $n<m$, $n\in \N$, there exists a set $F$ of size $2^n$ where there exists $y\in F$ and $x\in G$ such that $y\sqsupseteq x$ and
$$\K(F) \lel m-n+ \I(G;\mathcal{H})+\K(n).$$   
\end{thr}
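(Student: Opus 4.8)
The plan is to prove this by re-running the argument behind the Monotone EL Theorem (Theorem~\ref{thr:monel}), modified so that instead of committing to a single witness it hands out a block of $2^{n}$ candidate outputs addressed by $n$ fewer bits. First, apply Theorem~\ref{thr:monel} to $G$: since $m=\ceil{-\log\M(G)}$, there is a halting monotone program $p$ with $\|p\|\lel m+\I(G;\mathcal{H})$ and $U(p)=y_{0}\sqsupseteq x_{0}$ for some $x_{0}\in G$. Let $q$ be the length-$(\|p\|-n)$ prefix of $p$ (this is legitimate: as $x_{0}\in G$ and $G$ is prefix free, $\M(x_{0})\le\M(G)<2^{-m+1}$, so $\|p\|=\Km(x_{0})\gea-\log\M(x_{0})\geq m-1\geq n$, with only an $O(1)$ correction needed in the boundary case). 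Let $F$ be the set of strings that $U$ outputs on those of the $2^{n}$ length-$\|p\|$ extensions of $q$ that halt, padded with fresh distinct strings to cardinality exactly $2^{n}$. Then $p$ is one of those extensions, so $y_{0}\in F$ and $y_{0}\sqsupseteq x_{0}\in G$ give the combinatorial part of the claim, and $\|q\|=\|p\|-n\lel m-n+\I(G;\mathcal{H})$, so writing $q$ out literally and naming $n$ costs at most $m-n+\I(G;\mathcal{H})+\K(n)+O(\log\|q\|)$.

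The subtlety -- and the reason this is not a black-box corollary of Theorem~\ref{thr:monel} -- is that from $q$ and $n$ one can only \emph{enumerate} the elements of $F$, not pin the set down, since one does not know when the relevant extensions have all halted; and paying for the halting information outright is hopeless, as the natural time bound $\bb(\|p\|)$ already has $\K(\bb(\|p\|))\gea\|p\|$ by Proposition~\ref{prp:bb}. The resolution is the same device that upgrades the raw ``allocate a short program'' construction inside the proof of Theorem~\ref{thr:monel} to the bound carrying $\I(G;\mathcal{H})$: an $\mathcal{H}$-test on prefix-free sets shows that the halting information actually needed to stabilize $F$ is information that $G$ and $\mathcal{H}$ already share, so it is charged only once, inside the $\I(G;\mathcal{H})$ already paid for through $\|q\|$. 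Concretely I would present the proof as a modification of that of Theorem~\ref{thr:monel} in which the universal monotone allocation assigns to each prefix-free set of $\M$-measure $>2^{-m}$ an $(m-n)$-bit address pointing to a block of $2^{n}$ candidate outputs -- there are $2^{m-n}$ addresses and $2^{m}$ slots, enough to cover the heavy part of $\M$ -- with the extra $\K(n)$ paying to communicate $n$, and Lemma~\ref{lmm:consh} used to keep the $\mathcal{H}$-information from being double-counted.

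I expect the hard part to be precisely this last point: proving that the halting information needed to turn $F$ from an enumeration into an honest computable finite set, together with the $n$-bit saving in the address, does not push the complexity past $m-n+\I(G;\mathcal{H})+\K(n)$ -- in particular that $\I(G;\mathcal{H})$ enters once and not twice (once via the truncated program length, not again via the halting bound), and that the leftover logarithmic term stays attached to the quantity $m-n+\I(G;\mathcal{H})+\K(n)$ rather than to $m$, which matters when $n$ is close to $m$ (there the G\'acs-type gap between $\Km$ and $-\log\M$ must be absorbed by the slack of the target bound, not carried along from the single-witness statement). The remaining ingredients -- truncating the program, collecting its $\leq 2^{n}$ continuations, padding to size exactly $2^{n}$, and the boundary case $\|p\|<n$ -- are mechanical.
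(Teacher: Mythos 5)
First, note that the paper does not prove this statement at all: it imports it verbatim as Corollary~7 of \cite{Epstein24}, so there is no in-paper proof to match your argument against; your proposal has to stand on its own.

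As written it does not, and the gap is exactly where you flag it. After truncating the minimal program $p$ to $q$, the set $F$ of outputs of the halting length-$\|p\|$ extensions of $q$ is only enumerable from $\langle q,n\rangle$: to output $F$ as a finite object (and to know when to pad to size $2^n$) one must know which of the $2^n$ extensions halt, and that halting pattern can have complexity close to $n$ even conditioned on $q$, $n$ and $G$. Your proposed rescue---that this halting information ``is information that $G$ and $\mathcal{H}$ already share'' and is therefore absorbed into the $\I(G;\mathcal{H})$ already paid through $\|q\|$---is asserted, not argued, and it is implausible: which siblings of $p$ halt is a property of the machine on the cylinder above $q$, not of $G$, so $\I(G;\mathcal{H})$ can be $O(1)$ while the missing information is of order $n$, pushing $\K(F)$ up to roughly $m-n+\I(G;\mathcal{H})+n$, i.e.\ destroying precisely the $n$-bit saving the theorem is about. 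No choice of stopping rule avoids this (taking ``the first $2^n$ outputs enumerated'' still requires knowing when the enumeration has stabilized). So the crucial step of the proof is missing, not merely technical. To actually get such a statement one has to build $F$ by a construction that is computable from short data by design, e.g.\ rerun the derandomization argument at the level of sets---an elementary probability $Q$ realizing $\Ks(G)$ plus a deficiency/test argument, as in the proof of Theorem~\ref{thr:main}, with Lemma~\ref{lmm:ksh} converting $\Ks$ to $\I(\cdot;\mathcal{H})$---or invoke the left-total-machine machinery of \cite{Epstein24}, where truncations of suitable programs are ``total'' strings whose set of extensions' outputs is computable; post hoc truncation of an arbitrary minimal program gives neither.
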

\begin{thr}[Partial Derandomization]
\label{thr:partialderandom}
Let $\q$ be a deterministic environment and $\p$ be a probabilistic player that wins with probability $p$. For $s=\ceil{-\log p}+1$, $r<s$, $r\in \N$, there is a probabilistic agent of complexity $\lel s-r+\I(\langle \p,s,\q\rangle;\mathcal{H})+\K(r)$ that wins with probability $>2^{-r}$.
\end{thr}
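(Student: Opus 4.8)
The plan is to rerun the alternative proof of Theorem~\ref{thr:detenv} (Theorem~\ref{thr:newproof}), but feeding the winning set into the \emph{set} form of the Monotone EL Theorem, Theorem~\ref{thr:monelmod}, in place of Theorem~\ref{thr:monel}. As there, I identify $(\N\times\N)^*$ with $\N$, view $\p$ as a computable continuous semi-measure on $\{0,1\}^\infty$ via $n\mapsto\langle n\rangle$, and relativize the universal machine to $\langle\p,s\rangle$ so that this data sits on an auxiliary tape and hence implicitly in all conditionals. Let $G\subset\{0,1\}^\infty$ be a clopen set, computable from $\q,\p,s$, whose elements encode winning interactions with $\q$ and with $\p(G)>2^{-s}$; then $\K(G\mid\p,\q,s)=O(1)$. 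Since $\M\gem\m(\p)\,\p$ as continuous semi-measures and $\m(\p)=\Omega(1)$ under the relativization, $\M(G)\gem\p(G)>2^{-s}$, so $m:=\ceil{-\log\M(G)}\lea s$.

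Next I would apply Theorem~\ref{thr:monelmod} with $n:=r-1$ (the case $r=0$ being vacuous). Assuming $r-1<m$ — the complementary range, where $\M(G)$ is atypically large so that $2^{-r}<\M(G)$, is handled separately via Theorem~\ref{thr:monel}, or via Theorem~\ref{thr:monelmod} with $n:=m-1$, and should be routine — this produces a set $F$ with $|F|=2^{r-1}$, an element $y\in F$, and an $x\in G$ with $y\sqsupseteq x$, satisfying
\[
\K(F)\;\lel\; m-(r-1)+\I(G;\mathcal{H})+\K(r-1)\;\lel\; s-r+\I(\langle\p,s,\q\rangle;\mathcal{H})+\K(r),
\]
where the last step uses $m\lea s$, uses Lemma~\ref{lmm:consh} on the partial computable map $\langle\p,s,\q\rangle\mapsto G$ to turn $\I(G;\mathcal{H})$ into $\I(\langle\p,s,\q\rangle;\mathcal{H})$ up to $O(1)$, and lets $\lel$ absorb the $+1$ together with $\K(r-1)\eqa\K(r)$.

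Each string in $F$ extends a prefix-coded interaction and hence determines a partial deterministic strategy. Let $\p'$ be the probabilistic agent that at the outset picks one of the $2^{r-1}$ strategies coded in $F$ uniformly at random and then follows it. Because $y\sqsupseteq x\in G$ and $\q$ is deterministic, one of those strategies wins against $\q$, so $\mathrm{Win}(\p',\q)\ge 2^{-(r-1)}>2^{-r}$; note the choice $n=r-1$ is exactly what buys the \emph{strict} inequality. Finally $\K(\p')\lea\K(F)$, which is the asserted bound.

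The steps I expect to need genuine care are the complexity bookkeeping through the relativization to $\langle\p,s\rangle$ — checking that no residual $\K(\p)$ survives in $\K(\p')$ and that $\I(G;\mathcal{H})$ really collapses to $\I(\langle\p,s,\q\rangle;\mathcal{H})$ — together with the side analysis when $\M(G)>2^{-r}$, where $n=r-1$ is inadmissible in Theorem~\ref{thr:monelmod} and one must instead read off a cheap winner directly from Theorem~\ref{thr:monel}. The remaining ingredients (the uniform-mixture winning-probability estimate, and the encoding of elements of $F$ as playable strategies) are essentially the same bookkeeping already used in the proof of Theorem~\ref{thr:newproof}.
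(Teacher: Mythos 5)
Your proposal follows the paper's own route for this theorem: the same clopen winning set $G$ with $\p(G)>2^{-s}$ and $\K(G|\p,s,\q)=O(1)$ imported from the proof of Theorem~\ref{thr:newproof}, an application of the Monotone EL Theorem for Sets (Theorem~\ref{thr:monelmod}) to $G$, and the uniform mixture over the deterministic strategies coded by $F$, which wins with probability at least $|F|^{-1}$ against the deterministic $\q$. Your two deviations are refinements rather than a different method: taking $n=r-1$ instead of the paper's $n=r$ to secure the strict inequality $>2^{-r}$, and noticing that Theorem~\ref{thr:monelmod} requires $n<m$, a side condition the paper applies silently.

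Two cautions, both at places you yourself flag. First, the relativization to $\langle\p,s\rangle$ does not eliminate $\K(\p)$ from an \emph{unconditional} complexity bound: relativizing makes every term conditional on $\langle\p,s\rangle$, so what you actually obtain is a bound on $\K(F\mid\p,s)$, and removing the conditioning reintroduces $\K(\p,s)$. The paper's own derivation does not relativize and its final inequality (Equation~\ref{eq:partderand1}) keeps the term, reading $\K(F)\lel\K(\p)+s-r+\I(\langle\p,s,\q\rangle;\mathcal{H})+\K(r)$, even though the theorem statement omits $\K(\p)$; so on this point you are attempting to prove something tighter than what either argument delivers, and the step does not go through as written. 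Second, the deferred case $r-1\geq m$ is not as routine as claimed: the fallback via Theorem~\ref{thr:monel} gives a winner of complexity roughly $-\log\M(G)+\I(G;\mathcal{H})$, which can be close to $r+\I(G;\mathcal{H})$ and need not be $\lel s-r+\K(r)+\I(\langle\p,s,\q\rangle;\mathcal{H})$ when $r$ is close to $s$, while the variant $n=m-1$ picks up a $\K(m)$ term that is not obviously absorbed by $\K(r)$ plus logarithmic slack. This gap is shared with (indeed hidden in) the paper's proof, but a complete argument needs to actually handle that regime rather than defer it.
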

\begin{proof}
Let $s = \ceil{-\log p}+1$. Using the same reasoning as the proof of Theorem \ref{thr:newproof}, there exists a clopen set $G\subset\IS$ where each element encodes a winning interaction with the environment $\q$, $\p$ can be interpreted as a measure over the Cantor space,  $\p(G)>2^{-s}$, and $\K(G|\p,s,\q)=O(1)$. Using Theorem \ref{thr:monelmod}, one gets a prefix free set $F$, such that there exists $y\in F$ and $x\in G$ such that $y\sqsupseteq x$, $m=\ceil{-\log \M(F)}$, where 
\begin{align}
\nonumber
\K(F) &\lel m - r +\I(F;\mathcal{H}) + \K(r)\\
\nonumber
\K(F) &\lel \K(\p) + s - r +\I(F;\mathcal{H}) + \K(r)\\
\label{eq:partderand1}
\K(F) &\lel \K(\p) + s - r +\I(\langle \p,s,\q\rangle;\mathcal{H}) + \K(r).
\end{align}
Equation \ref{eq:partderand1} is due to Lemma \ref{lmm:consh}. Each such string in $F$ represents a deterministic player. At least one such deterministic player has a winning interaction with $\q$. It is not hard to construct a probabilistic player $\p'$ from $F$ that wins with probability $|F|^{-1}$. Indeed, at every given history $h\in (\N\times \N)^*$ the player $\p'$ chooses action $a\in\N$ with probability $n_a/m_a$, where $m_a$ is the number of deterministic players that are consistent with $h$, and $n_a$ are the number of such agents that choose $n_a$ as their action. If $m_a$ is 0, then $\p'$ can choose any action.
\end{proof}
\begin{cor}[Improved Bounds]
Let $\q$ be a deterministic environment. For $r<\xi(\q)$, $r\in \N$, there is a probabilistic agent $\p$, with $\K(\p)\lel \xi(\q)-r+\I(\q;\mathcal{H}|r)+\K(r)$ that wins with probability $>2^{-r}$.
\end{cor}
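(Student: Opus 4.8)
The plan is to run the same passage that upgrades Theorem \ref{thr:detenv} to Theorem \ref{thr:nb1}, but with Theorem \ref{thr:partialderandom} replacing Theorem \ref{thr:detenv} as the base derandomization step. Let $\p=\sum_{\p'\in\mathcal{A}}\m(\p')\p'$ be the universal lower-computable semi-agent, so that $\ceil{-\log\mathrm{Win}(\p,\q)}=\xi(\q)$. Since the per-history probabilities of $\p$ are not computable, I would pass to a finite enumeration stage exactly as in the proof of Theorem \ref{thr:nb1}: let $m$ be least with $\mathrm{Win}(\p^m,\q)>2^{-\xi(\q)-2}$, let $n=\bb^{-1}(m)$, $k=\bb(n)\ge m$, $b=bb(n)$, and let $\p^k$ be the truncation of $\p$ at stage $k$, kept as a semi-agent (a finite rational sub-probability tree, not completed, so that $\mathrm{Win}(\p^k,\q)\le\mathrm{Win}(\p,\q)$). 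Then $\K(\p^k\mid b)=O(1)$ and $2^{-\xi(\q)-2}<\mathrm{Win}(\p^k,\q)<2^{1-\xi(\q)}$. In particular the value ``$s$'' takes when Theorem \ref{thr:partialderandom} is applied to $\p^k$, namely $\ceil{-\log\mathrm{Win}(\p^k,\q)}+1$, lies in $\{\xi(\q)+1,\xi(\q)+2,\xi(\q)+3\}$, so it is $\xi(\q)+O(1)$ and strictly exceeds $r$; the hypothesis $r<s$ is therefore satisfied.

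I would then invoke Theorem \ref{thr:partialderandom} for the pair $(\p^k,\q)$ with the given $r$, with the whole argument relativized to $r$ and, for the moment, also conditioned on $b$. This yields a probabilistic agent $\p'$ that wins with probability $>2^{-r}$ and obeys
$$\K(\p'\mid r,b)\lel s-r+\I(G;\mathcal{H}\mid r,b)+\K(r\mid r,b),$$
where $s=\xi(\q)+O(1)$ so $s-r\lel\xi(\q)-r$, where $\K(r\mid r,b)=O(1)$ drops, and where $G$ is the winning-interaction object from the proof of Theorem \ref{thr:partialderandom}. The point of keeping track of $G$ rather than quoting the looser $\I(\langle\p^k,s,\q\rangle;\mathcal{H})$ that appears in that theorem's bound is that $G$ may be taken to be the set of \emph{all} winning interactions of $\p^k$ against $\q$, which is computable from $\p^k$ and $\q$ alone and, since $\p^k$ is recoverable from $b$, from $b$ and $\q$ alone; hence Lemma \ref{lmm:consh} gives $\I(G;\mathcal{H}\mid r,b)\lea\I(\q;\mathcal{H}\mid r,b)$, with no surviving dependence on $s$.

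It then remains to peel $b$ off the conditional, which I would do by copying verbatim the chain of inequalities in the proof of Theorem \ref{thr:nb1}: bound $\K(\p'\mid r)\le\K(\p'\mid r,b)+\K(b\mid r)+O(\log)$, then use the chain rule with Proposition \ref{prp:bb} (to absorb $\K(\K(b))$ and to exchange $\K(\q\mid r,b,\mathcal{H})$ for $\K(\q\mid r,\mathcal{H})$ at a cost of $\K(n)$) and Lemma \ref{lmm:rec} (which gives $\K(b\mid\q,s,n,r)=O(1)$, because $m$, hence $n$ and $b$, is computable from $\q$ and $s$); all the $bb$-terms are of order $O(\log n)$ and wash out, leaving $\K(\p'\mid r)\lel\xi(\q)-r+\K(\q\mid r)-\K(\q\mid r,\mathcal{H})=\xi(\q)-r+\I(\q;\mathcal{H}\mid r)$. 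Finally $\K(\p')\le\K(\p'\mid r)+\K(r)+O(\log)$ produces the stated bound. The step I expect to be the real obstacle is the one just flagged: Theorem \ref{thr:partialderandom} hands back an information term that, read naively, references the parameter $s\approx\xi(\q)$, and such a reference would leak a $\K(s)=O(\log\xi(\q))$ term that is not absorbed by $\lel$ when $r$ is within $o(\xi(\q))$ of $\xi(\q)$; one has to reorganize its proof so that the winning-interaction object does not mention $s$. Beyond that, the only care needed is to relativize to $r$ only the application of Theorem \ref{thr:partialderandom} and the $bb$-bookkeeping, and \emph{not} the semi-agent $\p$ (whose $\xi$ could otherwise dip below $r$), so that precisely one copy of $\K(r)$ is spent.
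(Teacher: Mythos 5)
Your skeleton is the same as the paper's: the paper's proof of this corollary is exactly ``Theorem \ref{thr:partialderandom} plus the reasoning of Theorem \ref{thr:nb1}'', i.e.\ the busy-beaver truncation $\p^k$, conditioning on $b$, applying partial derandomization with $s\approx\xi(\q)$, and then peeling $b$ off. Where you diverge is in how the dependence on $s\approx\xi(\q)$ is discharged, and here there is a gap. The paper does \emph{not} reorganize the proof of Theorem \ref{thr:partialderandom}; it simply carries $\langle\xi(\q),\q\rangle$ inside the information term, conditions on $r$, and then uses Lemma \ref{lmm:consh} with $\K(\xi(\q)\mid r,\xi(\q)-r)=O(1)$: given $r$, specifying $\xi(\q)$ costs only $O(\log(\xi(\q)-r))$, which $\lel$ absorbs because $\xi(\q)-r$ sits in the main term. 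So the ``obstacle'' you flag as fatal (a $\K(s)=O(\log\xi(\q))$ leak when $\xi(\q)-r$ is small) is already neutralized by the $r$-conditioning; that re-encoding \emph{is} the content of the paper's proof.

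More importantly, your own fix does not remove all $s$-dependence, so the leak you identified resurfaces in a place you leave unhandled. Making the winning-interaction set $G$ independent of $s$ (and note: the set of \emph{all} winning interactions is infinite, while the Monotone EL machinery is invoked on clopen/finite $G$ with $\I(G;\mathcal{H})$ a finite-object quantity; the cleaner observation is that $s'=\ceil{-\log\mathrm{Win}(\p^k,\q)}+1$ is itself computable from $(b,\q)$, so the paper's clopen $G$ already depends only on $(b,\q)$) only cleans up the information term. But your busy-beaver bookkeeping still needs the threshold $2^{-\xi(\q)-2}$ to define $m$, hence $n$ and $b$: your step ``$\K(b\mid\q,s,n,r)=O(1)$ because $m$ is computable from $\q$ and $s$'' puts $s\approx\xi(\q)$ into the conditional, and nothing in your argument pays for it. Given only $\q$ and $r$, $\xi(\q)$ is not computable, and a bare $\K(\xi(\q))=O(\log\xi(\q))$ is exactly the non-absorbable term you yourself ruled out. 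The patch is the paper's trick again: $\K(\xi(\q)\mid\q,r)\lea\K(\xi(\q)-r\mid r)+O(1)=O(\log(\xi(\q)-r))$, absorbable. So your proposal is salvageable, but as written it omits the one idea that the paper's proof actually consists of, while asserting (incorrectly) that the route through an $s$-dependent information term cannot work and that Theorem \ref{thr:partialderandom} must be reproved.
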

\begin{proof}
    From Theorem \ref{thr:partialderandom} and the reasoning of proof of Theorem \ref{thr:nb1}, we have
    \begin{align}
        \nonumber
        \K(\p) &\lel \xi(\q) -r + \I(\langle \xi(\q),\q\rangle;\mathcal{H})+\K(r)\\
        \nonumber
        \K(\p|r) &\lel \xi(\q) -r + \I(\langle \xi(\q),\q\rangle;\mathcal{H}|r)\\
        \label{eq:partderand2}
        \K(\p) &\lel \xi(\q) -r + \I(\langle (\xi(\q)-r),\q\rangle;\mathcal{H}|r)+\K(r)\\
        \nonumber
         &\lel \xi(\q) -r + \I(\q;\mathcal{H}|r)+\K(r).
    \end{align}
    Equation \ref{eq:partderand2} is due to Lemma \ref{lmm:consh}, where $\K((\xi(\q)|r,\xi(\q)-r)=O(1)$.
\end{proof}
\section{Lose/No-Halt Games}
A Lose/No-Halt game is a series of interactions where the player and environment exchange natural numbers. At any given time, the environment can choose to declare that the player has lost. Otherwise, if the game continues forever, the player has won. However Lose/No-Halt games present several barriers in their characterization which the Win/No-Halt games do not have. One such obstacle to derandomizing Lose/No-Halt games is as follows.
\begin{thr}
    There exists a Lose/No-Halt game consisting of a computable environment $\q$, a computable probabilistic player $\p$ that wins with positive probability, and no deterministic player that can beat $\q$.
\end{thr}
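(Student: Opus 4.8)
The plan is to exhibit an explicit Lose/No-Halt game where probabilistic play survives forever with positive probability but every deterministic strategy is eventually caught. The key idea is to force the deterministic player to ``commit'' to an infinite sequence of choices in advance, and then let the environment diagonalize against all computable such sequences using a halting-problem-style construction. Concretely, I would have the game proceed in rounds $i=1,2,\dots$; in round $i$ the player sends a bit $b_i\in\{0,1\}$, and the environment, which maintains an enumeration of all (partial computable) deterministic players, declares a loss as soon as the prefix $b_1b_2\cdots b_i$ matches the first $i$ outputs of the $i$-th deterministic player in the enumeration (restricting to those that have halted and produced at least $i$ bits within some computable time bound depending on $i$). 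Since a deterministic player $\p$ is just a fixed infinite bit sequence (its moves do not depend on the environment's responses, which carry no information here beyond ``not yet lost''), $\p$ appears as the $j$-th player for some $j$, and by round $j$ the environment catches it; hence no deterministic player wins.

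Next I would show a probabilistic player wins with positive probability. The natural candidate is the player that flips a fair coin for each $b_i$, possibly with a twist: to beat the countable family of ``forbidden prefixes'' the environment is enumerating, I would make the player's bit distribution at round $i$ increasingly biased, or better, have the environment's diagonalization only rule out one specific length-$i$ prefix per round. Then the probability of survival forever is $\prod_{i}(1 - 2^{-i})>0$ by the standard infinite-product estimate, since at round $i$ the probabilistic player avoids the single forbidden extension with probability $1-2^{-i}$ (conditioned on having survived so far, one can arrange that the forbidden length-$i$ string is consistent with the surviving history with the stated conditional probability). The environment must be computable, so I would be careful that the ``$i$-th deterministic player has output $i$ bits within time $t(i)$'' predicate is decidable; using a fixed computable time bound $t(i)$ and simply declaring a loss on a timeout-free match keeps $\q$ total computable, at the cost that some deterministic players with slow running time are never explicitly diagonalized against — but such players, being partial/slow, can be handled by also declaring a loss if the player ``stalls'' (fails to produce its next move within the allotted steps), which a genuine deterministic player in the intended model never does.

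The main obstacle I anticipate is reconciling two tensions: (1) the environment must be a single fixed computable function, so it cannot literally solve the halting problem to enumerate all deterministic players that halt, yet (2) it must catch \emph{every} deterministic player that wins with probability $1$ (a deterministic player either wins or loses, no intermediate). The resolution is that a deterministic player in this model is required to be a total computable function producing a move each round, so it corresponds to a total computable bit sequence; the environment enumerates programs and, for the $i$-th program, simulates it for a schedule of steps growing with the round number, declaring a loss the moment the played prefix agrees with that program's output on a long enough initial segment. Any fixed total computable deterministic player is simulated fast enough from some round onward, so it is eventually caught. I would need to verify the bookkeeping so that at each round at most one new length-$i$ prefix becomes forbidden, preserving the $\prod(1-2^{-i})$ survival bound for the probabilistic player; a dovetailed simulation that ``activates'' program $i$ only at round $i$ and thereafter checks agreement achieves exactly this. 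Finally I would note that $\p$ is computable (its per-round distribution is a simple dyadic rational) and that its win probability is strictly positive, completing the contrast with the deterministic case.
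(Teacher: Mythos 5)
Your diagonalization route is genuinely different from the paper's proof, but as written the key positivity claim fails. The survival probability of the coin-flipping player is \emph{not} bounded below by $\prod_i(1-2^{-i})$ under your bookkeeping. Your justification via conditioning is incorrect: if the newly forbidden length-$i$ prefix is consistent with the length-$(i-1)$ history that has survived so far, the conditional probability of hitting it is $1/2$ (only the $i$-th bit is undetermined), not $2^{-i}$; if it is inconsistent, it is $0$. Unconditionally each catch event has probability $2^{-i}$, but the events are not independent, so the product is unavailable, and the union bound gives $\sum_{i\ge 1}2^{-i}=1$ exactly, which is not strictly less than $1$. This is not merely a technicality: if the forbidden length-$i$ prefix at stage $i$ happened to be $1^{i-1}0$, the set of surviving sequences would be $\{111\cdots\}$, of measure $0$, so a proof scheme in which program $i$ is checked on matches of length exactly $i$ cannot establish positive winning probability without further argument about what the enumerated outputs are. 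The repair is easy and you should state it explicitly: only let the check against the $j$-th program fire on agreement of length at least $j+2$ (equivalently, activate program $j$ at round $j+2$ or $2j$); then the probability of ever being caught by program $j$ is at most $2^{-j-2}$, a union bound gives total catch probability at most $1/2$, and every total deterministic player is still caught because its play against your constant-response environment is a total computable sequence, hence is (eventually, once the time schedule suffices) fully matched by some enumerated program. With that fix, and keeping the time-bounded simulation so that $\q$ stays total computable, your argument goes through.

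For comparison, the paper avoids this bookkeeping entirely: its environment declares a loss as soon as some prefix $z$ of the player's bit sequence satisfies $\|z\|-\K_n(z)>d$, where $\K_n$ is time-bounded complexity at round $n$. A single measure bound (the cylinders of strings compressible by more than $d$ bits have total measure about $2^{-d}$) gives the random player positive survival probability with no union-bound delicacy, and every deterministic computable player is caught because its prefixes have complexity $O(\log)$ of their length, so the deficiency eventually exceeds $d$ once the time bound catches up. Your approach diagonalizes against an explicit enumeration and is more elementary in that it uses no Kolmogorov complexity, at the price of the activation-schedule care described above; the paper's test handles all deterministic players uniformly through compressibility.
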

\begin{proof}
    The environment $\q$ gives a single action: 0. In addition, the player loses if she players a number other than 0 or 1. We inductively define a computable environment $\q$. At each of the environment turn at $n$, the interaction consists of a bit string $x$ of length $2n-1$. Let $y$ be the odd bits of $x$, i.e. the player's actions. Let $\K_t(c)=\min\{\|p\|:U(p)=c\textrm{ in }\leq t\textrm{ steps}\}$. If $\|z\|-\K_n(z)>d$ for any $z\sqsubseteq y$, then environment $\q$ outputs a loss for the action and the game ends. Otherwise the game continues. This means $\q$ is computable. Thus a player who randomly chooses a bit will output with non zero probability a winning sequence that has less than $d$ randomness deficiency. In addition there is no deterministic player who can output an infinite sequence with bounded randomness deficiency.
 \end{proof}
 \section{Agent Spaces}
 In this section, we revisit agent spaces, introduced in my first paper on algorithmic information theory \cite{EpsteinBe2011}. It also can be seen as a game theoretic variant of \textit{distortion families}, introduced in \cite{VereshchaginVi04}. 
 
 An agent space is a finite or infinite set of (semi)agents. The reason for using agent spaces is that one might want to restrict the agents under consideration. This is an attempt to bridge the gap between results in algorithmic information theory of unrestricted constructs (such as agents in our case) and very restrictive computable models seen in areas such as the Minimum Description Length Principle \cite{Grunwald07}. Frankly, it remains to be seen if such a middle ground exists. We will consider, finite, enumerable, and uncomputable agent spaces. We will leverage the EL Theorem.
 \begin{thr}[EL Theorem, \cite{Levin16,EpsteinEL23}]
  \label{thr:el}
  For finite $D\subset\FS$,\\ $\min_{x\in D}\K(x)\lel -\log\m(D)+\I(D;\mathcal{H})$.
 \end{thr}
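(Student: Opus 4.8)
The plan is to derive this ``discrete'' EL Theorem from the Monotone EL Theorem (Theorem~\ref{thr:monel}) by transporting $D$ into Cantor space along a prefix-free encoding. Fix a computable, computably invertible prefix-free code $e\colon\FS\to\FS$, $x\mapsto\bar x$ (e.g.\ $\bar x=1^{\|x\|}0x$), so that $\K(e)=O(1)$, and put $G=\{\bar x:x\in D\}$, which is a prefix-free subset of $\FS$. The theorem then reduces to three estimates: (i) $-\log\M(G)\lea-\log\m(D)$; (ii) $\I(G;\mathcal H)\lel\I(D;\mathcal H)$; and (iii) $\K(x)\lea\Km(\bar x)$ for every $x\in D$. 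Granting these, Theorem~\ref{thr:monel} produces some $\bar x\in G$ with $\Km(\bar x)\lel-\log\M(G)+\I(G;\mathcal H)$, and chaining (i)--(iii) yields $\min_{x\in D}\K(x)\lel-\log\m(D)+\I(D;\mathcal H)$.

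For (i) I would push $\m$ forward through $e$: set $\omega(y)=\m(\{x\in\FS:y\sqsubseteq\bar x\})$. This is a lower-computable continuous semi-measure with $\omega(\emptyset)=\m(\FS)\le1$ (the inequality $\omega(y0)+\omega(y1)\le\omega(y)$ holds because an encoding extends at most one of $y0,y1$ and, when it does, strictly extends $y$), and $\omega(\emptyset)\le1$ is all the domination bound needs. By prefix-freeness of $e$, $\omega(\bar x)=\m(x)$, so $\omega(G)=\m(D)$; and since $\omega$ is built uniformly from the fixed objects $\m$ and $e$, $\m(\omega)\gem 1$, whence $\M(G)\gem\m(\omega)\,\omega(G)\gem\m(D)$. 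Estimate (ii) is bookkeeping: $D$ and $G$ are inter-derivable by the fixed $O(1)$-complexity maps $e,e^{-1}$, so Lemma~\ref{lmm:consh} gives $\I(G;\mathcal H)\lea\I(D;\mathcal H)+\K(e)\eqa\I(D;\mathcal H)$, and $\I(\cdot:\mathcal H)$ and $\I(\cdot;\mathcal H)$ agree up to the logarithmic slack absorbed by $\lel$, exactly as they are used interchangeably in the proof of Theorem~\ref{thr:newproof}.

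Estimate (iii) is the one place where the paper's convention that the machine defining $\Km$ must halt is essential. Given a shortest monotone program $p$ with $U(p)\sqsupseteq\bar x$, run $U$ to obtain the \emph{finite} string $w=U(p)$; since $e$ is prefix-free, $w$ has a unique prefix that is an $e$-codeword, namely $\bar x$, which we then decode to $x$. This is an $O(1)$ wrapper around $U$, so $\K(x)\lea\Km(\bar x)$. (With a genuinely monotone, possibly non-halting machine one would instead have to locate the end of $x$ inside an infinite output and pay an extra $\K(\|x\|)$, which is exactly what the halting convention avoids.)

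I expect the only delicate points to be these two translation steps rather than the EL input itself: making sure the pushforward $\omega$ loses no mass, so that $\M(G)$ dominates $\m(D)$ and not merely, say, $\m(D)/|D|$; and checking the $\Km\to\K$ decoding stays within $O(1)$. A self-contained alternative, bypassing Theorem~\ref{thr:monel}, would run the classical argument directly: if every $x\in D$ had $\K(x)>\ceil{-\log\m(D)}+\I(D;\mathcal H)+O(1)$, the family of finite sets witnessing such ``badness'' is co-enumerable, hence enumerable relative to $\mathcal H$, and a semi-measure/counting argument over that family bounds the $\mathcal H$-conditional description length of $D$, contradicting the value of $\I(D;\mathcal H)$; but given the machinery already in the paper, the Monotone EL reduction should be the shorter route.
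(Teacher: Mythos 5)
Your argument is correct, but it cannot coincide with ``the paper's proof'' because the paper offers none: Theorem \ref{thr:el} is imported wholesale from \cite{Levin16,EpsteinEL23}, where it is proved directly (by stochasticity-style arguments), and is used here as a black box. What you give is a reduction of the discrete EL Theorem to the Monotone EL Theorem \ref{thr:monel} (also only cited here, from \cite{Epstein24}) via the prefix-free embedding $x\mapsto\bar x=1^{\|x\|}0x$ and the pushforward $\omega(y)=\m(\{x:y\sqsubseteq\bar x\})$. The three estimates do check out: $\omega$ is lower computable, satisfies $\omega(y0)+\omega(y1)\le\omega(y)$, and $\omega(\bar x)=\m(x)$ by prefix-freeness of the code, so $\M(G)\gem\m(\omega)\,\omega(G)\gem\m(D)$; Lemma \ref{lmm:consh} handles the information term, and your silent identification of $\I(\cdot:\mathcal{H})$ with $\I(\cdot;\mathcal{H})$ up to the $\lel$ slack matches exactly how the paper itself applies Theorem \ref{thr:monel} in the proof of Theorem \ref{thr:newproof}; and the $\Km\to\K$ step is indeed an $O(1)$ wrapper precisely because of this paper's halting convention for $\Km$, which you correctly flag as the load-bearing point. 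Two cosmetic repairs: the paper's definition of a continuous semi-measure demands $\omega(\emptyset)=1$, so either reset the root value to $1$ (subadditivity at the root still holds, since $\omega(0)+\omega(1)\le\m(\FS)\le1$) or remark that the domination $\M\gem\m(\omega)\omega$ only needs $\omega(\emptyset)\le1$; and note explicitly that $G$, being the image of a finite set under a prefix-free code, is prefix free as Theorem \ref{thr:monel} requires. As for what each route buys: the paper's citation keeps things short and credits the original sources, while your reduction shows that, given the Monotone EL Theorem already quoted in this paper, the discrete version comes essentially for free; just be aware that this is not an independent proof in spirit, since the monotone statement is itself an EL-type theorem established by comparable machinery in \cite{Epstein24}, so outside the confines of this paper your derivation runs the usual logical order backwards.
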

 \begin{thr}
 Let $\q$ be a deterministic environment and let $\mathcal{S}$ be an enumerable agent space consisting solely of deterministic agents. Let $D\subseteq\mathcal{S}$ be the set of deterministic agents in the agent space $\mathcal{S}$ that win against $\q$. Note that $D$ may be infinite.
  $$
  \min_{\p\in D}\K(\p)\lel \min_{\p\in D}\I(\p;\langle\q,\mathcal{S}\rangle)+\I(\langle \q,\mathcal{S}\rangle;\mathcal{H}).
  $$
 \end{thr}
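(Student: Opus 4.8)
The plan is to mimic the proof of Theorem \ref{thr:iq}, but with the algorithmic probability $\m(\cdot)$ restricted to agents living in the enumerable agent space $\mathcal{S}$, so that the relevant ``weight'' term becomes information with $\langle\q,\mathcal{S}\rangle$ rather than unconditional complexity. First I would fix $n\in\N$ and consider the lower-computable-given-$\langle\q,\mathcal{S},n\rangle$ set $D(\q,\mathcal{S},n)=\{\p\in\mathcal{S}: \p\text{ wins against }\q\text{ within the first }n\text{ enumeration stages of the witness computation}\}$ (equivalently, the finite collection of winning agents whose winning interaction terminates by stage $n$); enumerability of $\mathcal{S}$ and computability of $\q$ make this a finite, uniformly enumerable set. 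For $n$ large enough that $D(\q,\mathcal{S},n)$ is nonempty, set $s=\floor{-\log\sum_{\p\in D(\q,\mathcal{S},n)}\m(\p)}$. Then $m(\p)=[\p\in D(\q,\mathcal{S},n)]2^{s}\m(\p)$ is a lower-computable semimeasure relative to $\langle\q,\mathcal{S},n,s\rangle$, so $\m(\p\mid \q,\mathcal{S},n,s)\gem 2^{s}\m(\p)$ for each such $\p$, giving $s\lea \K(\p)-\K(\p\mid\q,\mathcal{S},n,s)\lel \K(\p)-\K(\p\mid\langle\q,\mathcal{S}\rangle)+\K(n)=\I(\p;\langle\q,\mathcal{S}\rangle)+\K(n)$, and since this holds for every $\p\in D(\q,\mathcal{S},n)$ it holds with the minimum over that set on the right.

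Next I would run the EL Theorem \ref{thr:el} (or equivalently invoke the construction from the proof of Theorem \ref{thr:nb1}) on the finite set $D(\q,\mathcal{S},n)$, which is computable from $\langle\q,\mathcal{S},n\rangle$, so $\I(D(\q,\mathcal{S},n);\mathcal{H})\lea \I(\langle\q,\mathcal{S},n\rangle;\mathcal{H})\lel\I(\langle\q,\mathcal{S}\rangle;\mathcal{H})+\K(n)$ by Lemma \ref{lmm:consh}; this yields some $\p^{\star}\in D(\q,\mathcal{S},n)$, which is a genuine deterministic agent that wins against $\q$, with
\begin{align*}
\K(\p^{\star}) &\lel -\log\m(D(\q,\mathcal{S},n))+\I(D(\q,\mathcal{S},n);\mathcal{H})\\
&\lel s+\I(\langle\q,\mathcal{S}\rangle;\mathcal{H})+\K(n)\\
&\lel \min_{\p\in D(\q,\mathcal{S},n)}\I(\p;\langle\q,\mathcal{S}\rangle)+\I(\langle\q,\mathcal{S}\rangle;\mathcal{H})+O(\K(n)).
\end{align*}
Finally, since any fixed winning agent $\p_0\in D$ has a winning interaction that terminates after some finite number of stages, there is an $n_0$ with $\p_0\in D(\q,\mathcal{S},n_0)\subseteq D$; choosing $n=n_0$ (and noting $\min_{\p\in D(\q,\mathcal{S},n_0)}\I(\p;\langle\q,\mathcal{S}\rangle)\le \I(\p_0;\langle\q,\mathcal{S}\rangle)$, and $\K(n_0)$ is absorbed as a logarithmic term under $\lel$) gives $\min_{\p\in D}\K(\p)\le\K(\p^{\star})\lel \I(\p_0;\langle\q,\mathcal{S}\rangle)+\I(\langle\q,\mathcal{S}\rangle;\mathcal{H})$; taking the infimum over $\p_0\in D$ yields the claim.

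The main obstacle I expect is bookkeeping around the stage parameter $n$: one needs $D(\q,\mathcal{S},n)$ to be simultaneously (i) finite and uniformly enumerable from $\langle\q,\mathcal{S},n\rangle$ so the semimeasure argument and EL Theorem apply, (ii) a subset of $D$ so the produced agent genuinely wins, and (iii) eventually containing any prescribed winner $\p_0$ so the infimum is attained in the limit — and then one must confirm that the unavoidable additive $\K(n)$ terms really are swallowed by the $\lel$ (logarithmic) slack rather than accumulating. A secondary subtlety is that $D$ may be infinite, so $\m(D)$ could be large and $-\log\m(D)$ uninformative; that is precisely why the proof works with the finite truncations $D(\q,\mathcal{S},n)$ and optimizes $n$, rather than applying the EL Theorem to $D$ directly.
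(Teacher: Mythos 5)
Your construction is essentially the paper's argument, but it breaks at the last step: the stage parameter $n$ is not harmless. After optimizing you are left with a bound of the form $\K(\p^{\star})\lel \I(\p_0;\langle\q,\mathcal{S}\rangle)+\I(\langle\q,\mathcal{S}\rangle;\mathcal{H})+O(\K(n_0))$, and you assert that $\K(n_0)$ is absorbed by the $\lel$ slack. The relation $\lel$ only hides terms of order $\log$ of the right-hand side, whereas $n_0$ is (roughly) the time by which $\p_0$ is enumerated into $\mathcal{S}$ and its interaction with $\q$ halts. Such a time is only bounded in terms of $\K(\p_0,\q,\mathcal{S})$, and its complexity $\K(n_0)$ can exceed $\I(\p_0;\langle\q,\mathcal{S}\rangle)+\I(\langle\q,\mathcal{S}\rangle;\mathcal{H})$ by an arbitrary amount (a simple winner may need an astronomically long, highly complex halting time); replacing $n_0$ by a rounded value such as a power of two does not help, since even $\log\log$ of a halting time is not controlled by these information terms. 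So the final inequality does not follow, and this is exactly the accumulation problem you flagged but did not resolve.

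The paper avoids the stage parameter entirely by cutting off the enumeration by accumulated algorithmic probability rather than by time. Set $s=\ceil{-\log\m(D)}+1$; given $\q$, $\mathcal{S}$ and $s$, enumerate winners (dovetailing the enumeration of $\mathcal{S}$ against $\q$) while lower-computing their $\m$-mass, and stop at the first finite $F\subseteq D$ with $\m(F)>2^{-s}$, which must exist since $\m(D)\geq 2^{-(s-1)}$. Then $\K(F|\q,s,\mathcal{S})=O(1)$, so the EL Theorem gives a winner of complexity $\lel s+\I(\langle\q,\mathcal{S}\rangle;\mathcal{H})$, the only auxiliary parameter being $s$, whose complexity $O(\log s)$ is genuinely absorbed because $s$ is itself the main term. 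The lower-bound half is likewise run on all of $D$ at once: $m(\r)=[\r\in D]\m(\r)2^{s-2}$ is a lower-computable semimeasure given $\q,\mathcal{S},s$ (as $D$ is enumerable from $\q,\mathcal{S}$), giving $s\lel\I(\r;\langle\q,\mathcal{S}\rangle)$ for every $\r\in D$ with no $\K(n)$ term. Your proof becomes correct if you replace your stage-indexed truncation $D(\q,\mathcal{S},n)$ by this mass-indexed truncation; as written, it has a genuine gap.
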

 \begin{proof}
 Let $s= \ceil{-\log \m(D)}+1$. Given $\q$, $s$, and a program that enumerates $\mathcal{S}$ consisting of programs of deterministic players, one can create a set $F\subseteq D$ consisting of programs of deterministic agents that win against $\q$ and $\m(F)>2^{-s}$. So $\K(F|\q,\mathcal{S})$. By the EL Theorem \ref{thr:el}, there exists a deterministic agent $\p\in F$ that wins against $\q$ with
 \begin{align}
 \nonumber
     \K(\p) &\lel -\log\m(F) + \I(F;\mathcal{H})\\
     \label{eq:playerspace1}
      &\lel -\log\m(F) + \I(\langle \q,\mathcal{S}\rangle;\mathcal{H})\\
      \nonumber
      &\lel -\log\m(D) + \I(\langle \q,\mathcal{S}\rangle;\mathcal{H}).
 \end{align}
 Equation \ref{eq:playerspace1} is due to Lemma \ref{lmm:consh}. By the definition of $D$, $m(\r)=[\r\in D]\m(\r)2^{s-2}$ is a lower computable semi measure with $\m(\r|\q,\mathcal{S})\m(\r)\gem m(r) $. So if $\r\in D,$ 
 \begin{align*}
 \m(\r|\q,s,\mathcal{S})&\gem \m(\r)2^s\\
 s & \lel \I(\r;\langle \q,\mathcal{S}\rangle).
 \end{align*}
So
$$
\K(\p)\lel \min_{\r\in D}\I(\r;\q,\mathcal{S}) +\I(\langle \q,\mathcal{S}\rangle;\mathcal{H}).
$$
 \end{proof}
 Using the similar reasoning, one can prove the following corollary.

 \begin{cor}
     Let $\q$ be a deterministic environment and let $\mathcal{S}$ be a deterministic agent space. If there are $2^k$ agents $\r\in\mathcal{S}$, with $\K(\r)<r$ that win against $\q$, then there is a deterministic agent $\p\in\mathcal{S}$ that wins against $\q$ and 
     $$
\K(\p)\lel r - k + \I(\langle \q,\mathcal{S}\rangle;\mathcal{H}).
     $$
 \end{cor}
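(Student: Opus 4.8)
The plan is to mirror the proof of the preceding theorem (the one characterizing $\min_{\p\in D}\K(\p)$ via $\I(\p;\langle\q,\mathcal{S}\rangle)$), but to extract the exponential saving of $k$ bits by invoking the EL Theorem on a carefully sized subset rather than on all of $D$. First I would set up the counting: let $E = \{\r\in\mathcal{S} : \K(\r)<r,\ \r\textrm{ wins against }\q\}$, so $|E|\geq 2^k$ by hypothesis. Since every $\r\in E$ satisfies $\K(\r)<r$, we have $\m(\r)\gem 2^{-r}$ for each such $\r$, hence $\m(E)\gem 2^{k-r}$. Thus $-\log\m(E)\lea r-k$.

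Next I would produce a finite lower-computable subset $F\subseteq E$ on which the EL Theorem can be applied with a small description. Given $\q$, the enumeration of $\mathcal{S}$, and the numbers $r$ and $s=\ceil{-\log\m(E)}+1$ (so $s\lea r-k$), one can enumerate candidate agents of complexity below $r$ that win against $\q$ — winning is a $\Sigma^0_1$ event for the deterministic interaction, and $\K(\r)<r$ is enumerable — and stop once the accumulated $\m$-mass of a finite prefix-free set $F$ of such agents exceeds $2^{-s}$. This gives $\K(F\mid \q,\mathcal{S},r,s)=O(1)$, hence $\K(F)\lel \I(\langle\q,\mathcal{S}\rangle;\mathcal{H})$ by Lemma \ref{lmm:consh} together with $\K(r),\K(s)=O(\log)$ absorbed into the $\lel$. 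Applying Theorem \ref{thr:el} to $F$ yields a deterministic agent $\p\in F\subseteq\mathcal{S}$ that wins against $\q$ with
\begin{align*}
\K(\p)&\lel -\log\m(F)+\I(F;\mathcal{H})\lel s + \I(\langle\q,\mathcal{S}\rangle;\mathcal{H})\lel r-k+\I(\langle\q,\mathcal{S}\rangle;\mathcal{H}),
\end{align*}
which is exactly the claimed bound.

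The main obstacle I anticipate is the bookkeeping around $\K(r)$ and $\K(s)$: strictly speaking the EL step introduces a $\K(s)$ (or $\K(r-k)$) additive term, and one must argue this is swallowed by the logarithmic slack in $\lel$ — i.e. that $\K(s)=O(\log s)=O(\log(r-k))\lel$ anything of size $\Omega(r-k)$, or else condition on $r$ and $k$ and push the dependence into the information term as was done in the improved-bounds corollaries. A secondary subtlety is ensuring the set $F$ really is constructible with $O(1)$ extra information: one needs that membership "$\r$ wins against $\q$" is semi-decidable, which holds because $\q$ is a computable deterministic environment and a win in a Win/No-Halt game is a halting event, so the interaction can be simulated until $\q$ halts. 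Given those points are routine in this setting, the corollary follows by the stated "similar reasoning."
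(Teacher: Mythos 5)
Your proposal is correct and follows essentially the same route the paper intends: lower-bound the $\m$-mass of the winning agents by $2^{k-r}$ (since each of the $2^k$ winners has $\m(\r)\gem 2^{-r}$), build a finite set $F$ of winners enumerable from $\langle\q,\mathcal{S}\rangle$ and the relevant parameters, and apply the EL Theorem plus Lemma \ref{lmm:consh}, absorbing the $\K(s),\K(r)$ terms into the logarithmic slack — exactly the ``similar reasoning'' to the preceding theorem that the paper invokes. The only (harmless) difference is that you restrict the enumeration to agents with $\K(\r)<r$, whereas one can simply reuse the theorem's bound $\K(\p)\lel-\log\m(D)+\I(\langle\q,\mathcal{S}\rangle;\mathcal{H})$ for the full set $D$ of winners and plug in $\m(D)\gem 2^{k-r}$.
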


 The results of this section also apply to uncomputable agent spaces consisting of lower computable semi-agents. We won't repeat the theorems, but will define the information of an uncomputable agent space with the halting sequence. For an semi-agent space $\mathcal{S}$, let $\ell[\mathcal[\mathcal{S}]$ be the set of all infinite sequences $\alpha\in\IS$ consisting of encodings $\langle n\rangle$ of numbers $n\in\N$ representing programs to lower computable a semi agent. Furthermore, if there is a semi-agent $\p$ in $\mathcal{S}$ there exists a single encoded number in all such $\alpha$ that lower computes $\p$. The following definition uses Definition \ref{dff:infinf}.
 \begin{dff}[Information, Uncomputable Agent Spaces]
    For agent space $\mathcal{S}$,\\ 
   $\I(\mathcal{S}:\mathcal{H})=\inf_{\alpha\in\ell[\mathcal{S}]}\I(\alpha:\mathcal{H}).$
 \end{dff}

\section{Conclusion}
With this paper, characterizations of the Kolmogorov complexity of deterministic players against deterministic or probabilistic environments that are computable, lower computable, and uncomputable are given. As shown in Section \ref{sec:ue}, the computability of the environment does not play much of a factor in bounds proved. Instead, the key factor is the amount of information the halting sequence has about the environment. There is a general obstacle in derandomizing Lose/No-Halt games, but there might be further assumptions about such games that could be made to circumvent this. This paper proved, for the first time to the author's knowledge, upper bounds on the resource bounded Kolmogorov complexity of players to games. In particular, zero-sum repeated games were studied. One open problem is whether there exists a more straightforward proof that does not use stochasticity, $\Ks$. Such an advancement could be applied to the proofs of other theorems which have $\I(\cdot;\mathcal{H})$ in their inequalities.

\end{document}